\title{Trust, but Verify: Two-Phase Typing for Dynamic Languages\footnote{This work was supported by NSF Grants CNS-1223850, CNS-0964702 and gifts from Microsoft Research.}}
\titlerunning{Trust, but Verify: Two-Phase Typing for Dynamic Languages}
\author{Panagiotis Vekris}
\author{Benjamin Cosman}
\author{Ranjit Jhala}
\affil{Univeristy of California, San Diego\\  La Jolla, CA, 92093, USA\\ \{\texttt{pvekris},\texttt{blcosman},\texttt{rjhala}\}\texttt{@cs.ucsd.edu}}
\authorrunning{P. Vekris, B. Cosman and R. Jhala} 
\definecolor{haskellblue}{rgb}{0.0, 0.0, 1.0}
\definecolor{haskellblue}{rgb}{1.0, 0.0, 0.0}
\definecolor{gray_ulisses}{gray}{0.55}
\definecolor{castanho_ulisses}{rgb}{0.71,0.33,0.14}
\definecolor{preto_ulisses}{rgb}{0.41,0.20,0.04}
\definecolor{green_ulisses}{rgb}{0.0,0.4,0.0}
\definecolor{mygreen}{rgb}{0,0.6,0}
\definecolor{mygray}{rgb}{0.5,0.5,0.5}
\definecolor{mymauve}{rgb}{0.58,0,0.82}
\definecolor{darkgray}{gray}{0.30}
\definecolor{darkblue}{rgb}{0,0,0.75}
\definecolor{darkgreen}{rgb}{0,0.4,0}
\lstdefinelanguage{RefScript}{
  basicstyle=\ttfamily\footnotesize,
  backgroundcolor=\color{white},  
  sensitive=true,
  keywords={typeof, instanceof, new, true, false, catch, function, return, 
            null, catch, switch, var, if, in, for, while, do, 
            else, case, break, forall, type, true, false,  
            constructor, super, public, private, class, 
            interface, extends, implements, module, 
            declare, this,let, then, undefined, fn,
            },
  keywordstyle=\color{mymauve},
  morecomment=[s]{/*}{*/},
  morecomment=[l]//,
  morestring=[b]",
  morestring=[b]',
  commentstyle=\color{mygreen},
  stringstyle=\color{darkgreen},
  showstringspaces=false,
  numberblanklines=true,
  showspaces=false,
  breaklines=true,
  showtabs=false,
  emph={[1] reduce, $reduce,minIndexFO,minIndexHO, minIndex, neg,
            neg\#1, neg\#2, step},
  emphstyle={[1]\color{darkblue}},
  emph={[2]     boolean, bool,tt,ff,any,void,number,string,nat,index,idx,Array},
  emphstyle={[2]\color{darkgreen}},
  literate= {=>}{$\Rightarrow$}{1}
            {>=}{{$\geq$}}1
            {<=}{{$\leq$}}1
            {\/\\}{{$\wedge$}}2
}
\newtheorem{assumption}{Assumption}
\subjclass{
D.3.3 [Programming Languages]: 
  Language Constructs and Features -- 
    Constraints,
    Polymorhpism;
F.3.1 [Logics and Meanings of Programs]:
  Specifying and Verifying and Reasoning about Programs -- 
    Assertions,
    Pre- and post-conditions;    
F.3.3 [Logics and Meanings of Programs]:
  Studies of Program Constructs --
    Type structure
}
\keywords{Dynamic Languages, Type Systems, Refinement Types, Intersection Types, Overloading}
\begin{document}

\maketitle

\begin{abstract}
A key challenge when statically typing so-called dynamic languages 
is the ubiquity of \emph{value-based overloading}, where a given 
function can dynamically reflect upon and behave according to the 
types of its arguments. 
Thus, to establish basic types, the analysis must reason precisely
about values, but in the presence of higher-order functions and
polymorphism, this reasoning itself can require basic types.
In this paper we address this chicken-and-egg problem by introducing 
the framework of two-phased typing.
The first ``trust'' phase performs classical, i.e. flow-, path- and
value-insensitive type checking to assign basic types to various
program expressions.
When the check inevitably runs into ``errors'' due to value-insensitivity, 
it wraps problematic expressions with $\tdead$-casts, which explicate 
the trust obligations that must be discharged by the second phase.
The second phase uses refinement typing,  a flow- and path-sensitive
analysis, that decorates the first phase's types with logical  
predicates to track value relationships and thereby verify the 
casts and establish other correctness properties for dynamically typed languages.
\end{abstract}

\section{Introduction}\label{sec:intro}

Higher-order constructs are increasingly 
adopted in \emph{dynamic scripting} languages, as they 
facilitate the production of clean, correct and maintainable code. 
Consider, for example, the following (first-order) \jsc function
\begin{code}
  function minIndexFO(a) {
    if (a.length <= 0) 
      return -1;
    var min = 0;
    for (var i = 0; i < a.length; i++) {
      if (a[i] < a[min]) 
        min = i;
    }
    return min;
  }
\end{code}
which computes the index of the minimum value 
in the array @a@ by looping over the array, 
updating the @min@ value with each index @i@ 
whose value @a[i]@ is smaller than the ``current'' @a[min]@.
Modern dynamic languages let programmers factor 
the looping pattern into a higher-order @$reduce@ 
function (Figure~\ref{fig:reduce}), which frees 
them from manipulating indices and thereby 
prevents the attendant ``off-by-one'' mistakes. 
Instead, the programmer can compute the minimum index 
by supplying an appropriate @f@ to @reduce@ as in @minIndex@ 
shown at the right of Figure~\ref{fig:reduce}.

\begin{figure}[t!]
\begin{minipage}[t]{0.49\textwidth}
\begin{code}
function $reduce(a, f, x) {
  var res = x, i = 0;
  for (i = 0; i < a.length; i++)
    res = f(res, a[i], i++);
  return res;
}
  
function reduce(a, f, x) {
  if (arguments.length === 3) 
    return $reduce(a, f, x);
  return $reduce(a, f, a[0]);
}
\end{code}
\end{minipage}
\hfill
\begin{minipage}[t]{0.49\textwidth}
\begin{code}
function minIndex(a) {
  if (a.length <= 0) 
    return -1;
  function step(min, cur, i) { 
    return cur < a[min] ? i:min; 
  } 
  return reduce(a, step, 0);
}
\end{code}
\end{minipage}
\caption{Computing the minimum-valued index with Higher-Order Functions}
\label{fig:reduce}
\end{figure}

This trend towards abstraction and reuse poses 
a challenge to static program analyses: 
\emph{how to precisely trace value relationships 
across higher-order functions and containers?}
A variety of dataflow- or abstract interpretation- based 
analyses could be used to verify the safety of array 
accesses in @minIndexFO@ by inferring the loop 
invariant that @i@ and @min@ are between @0@ 
and @a.length@.
Alas, these analyses would fail on @minIndex@. 
The usual methods of procedure summarization apply
to first-order functions, and it is not clear how to 
extend higher-order analyses like CFA to track the 
\emph{relationships} between the values and closures 
that flow to @$reduce@.

\subparagraph*{An Approach: Refinement Types.}
Refinement types \cite{XiPfenning99}
hold the promise of a precise and compositional analysis for higher-order 
functions.
Here, \emph{basic} types are decorated with \emph{refinement} 
predicates that constrain the values inhabiting the type.
For example, we can define
\begin{code}
  type idx<x> = {v:number | 0 <= v && v < len(x) } 
\end{code}
to denote the set of valid indices for an array @x@
and can be used to type @$reduce@ as
\begin{code}
  $reduce :: <A,B>(a: A[], f: (B,A,idx<a>) => B, x: B) => B
\end{code}
The above type is a precise \emph{relational summary} 
of the behavior of @$reduce@: 
the higher-order @f@ is 
only invoked with valid indices for @a@. 
Consequently, @step@ is only called with valid indices 
for @a@, which ensures array safety.

\subparagraph*{Problem: Value-based Overloading.}
A main attraction of dynamic languages is
\emph{value-based overloading}, 
where
syntactic entities (\eg variables) may be bound to 
{multiple types} at run-time, and furthermore,
computations may be customized to particular types,
by {reflecting} on the values bound to variables.
For example, it is common to simplify APIs by overloading 
the @reduce@ function to make the initial value @x@ 
{optional};
when omitted, the first array element @a[0]@ 
is used instead.
Here, @reduce@ really has \emph{two} different function types:
one with 3 parameters and another one with 2.
Furthermore, @reduce@
\emph{reflects} on the size of @arguments@ to select the 
behavior appropriate to the calling context.

Value-based overloading conflicts with a crucial 
prerequisite for refinements, namely that the 
language possesses an \emph{unrefined} static type 
system that provides basic invariants about 
values which can then be refined using logical 
predicates.
Unfortunately, as shown by @reduce@, to soundly 
establish basic typing we must reason about the 
logical relationships between values, which, 
ironically, is exactly the problem we wished to 
solve via refinement typing. 
In other words, value-based overloading creates a 
chicken-and-egg problem: refinements require us to 
first establish basic typing, but the latter itself 
requires reasoning about values (and hence, refinements!).

\subparagraph*{Solution: Trust but Verify.}
We introduce 
\emph{two-phased typing}, a new strategy for 
statically analyzing dynamic languages. 
The key insight is that we can completely decouple
reasoning about \emph{basic} types and \emph{refinements} into
distinct phases by converting ``type errors'' from the first phase
into ``assertion failures'' for the second.
Two-phase typing starts with a source language where value-based 
overloading is specified using \emph{intersections} and (untagged) 
\emph{unions} of the different possible (run-time) types. 

The first phase performs classical, \ie flow-, path- and 
value-insensitive type checking to assign basic types to 
various program expressions.  
When the check inevitably runs into ``errors'' due to 
value-insensitivity, it wraps problematic expressions 
with $\tdead$-casts which allow the first phase to 
proceed, \emph{trusting} that the expressions have 
the casted types.
In other words, the first phase \emph{elaborates}~\cite{Dunfield2012} 
the source language with intersection and (untagged) union types, 
into a target ML-like language with classical products, (tagged) 
sums and \tdead-casts, which explicate the trust obligations that 
must be discharged by the second phase.
The second phase carries out \emph{refinement}, \ie flow- and path-sensitive 
inference, to decorate the basic types (from the first phase) with 
predicates that precisely track relationships about values, and uses
the refinements to \emph{verify} the casts and other properties, 
discharging the assumptions of the first phase.

For example, @reduce@ is described as the intersection of two contexts, \ie
function types which take two and three parameters respectively.
The trust-phase checks the body under both contexts (separately). 
In each context, one of the calls to @$reduce@ is ``ill-typed''.
In the context where the function takes two inputs, 
the call using @x@ is undefined; when the function takes
three inputs, there is a mismatch in the types of @f@ and 
@a[0]@.
Consequently, each ill-typed expression is wrapped with a
\emph{cast} which obliges the verify phase to prove that 
the call is dead code in that context, thereby verifying 
overloading in a cooperative manner.

\subparagraph*{Benefits.}
While it is possible to account for value-based overloading in a 
single phase, the currently known methods that do so are limited 
to the extremes of types and program logics.
At one end, systems like Typed Racket~\cite{typedracket} and Flow
Typing~\cite{lambdajs} extend classical
type systems to account for a fixed set of @typeof@-style tests, but 
cannot reason about general value tests (\eg the size of 
@arguments@) that often appear in idiomatic code.
At the other end, systems like System D~\cite{NestedPOPL12} embed the typing 
relation in an expressive program logic, allowing
general value tests, but give up on basic type structure, thereby 
sacrificing inference, causing a significant annotation overhead.
In contrast, our approach separates the 
concerns of basic typing and reasoning about values, thereby 
yielding several concrete benefits by \emph{modularizing} 
specification, verification and soundness.
\begin{itemize}

\item{\emphbf{Specification:}} 
Instead of a fixed set of type-tests, two-phase typing handles complex
value relationships which can be captured inside refinements in an
expressive logic.  Furthermore, the \emph{expressiveness} of the 
basic type system and logics can be extended independently, \eg 
to account for polymorphism, classes or new logical theories, 
directly yielding a more expressive specification mechanism.

\item{\emphbf{Verification:}} 
Two-phase typing enables the straightforward composition 
of simple type checkers (uncomplicated by reasoning about 
values) with program logics (relying upon the basic 
invariants provided by typing -- \eg the parametric 
polymorphism needed to verify @minIndex@).
Furthermore, two-phase typing allows us to compose basic typing with
abstract interpretation~\cite{LiquidPLDI08}, which drastically lowers
the annotation burden for using refinement types.

\item{\emphbf{Soundness:}}
Finally, our elaboration-based approach makes it straightforward 
to establish soundness for two-phased typing. 
The first phase ignores values and refinements, so we can 
use classical methods to prove the elaborated target is 
``equivalent to'' the source.
The second phase uses standard refinement typing techniques on the 
well-typed elaborated target, and hence lets us directly reuse the 
soundness theorems for such systems~\cite{Knowles10}
to obtain end-to-end soundness for two-phased typing.
\end{itemize}

\subparagraph*{Contributions.}
Concretely, in this paper we make the following contributions.
First, we informally illustrate (\S~\ref{sec:overview}) 
how two-phase typing lets us statically analyze dynamic, 
value-based overloading patterns drawn from real-world code, 
where, we empirically demonstrate, value-based overloading 
is ubiquitous.
Second, we formalize two-phase typing using a core calculus, 
\toolname, whose syntax and semantics are detailed in \S~\ref{sec:language}.
Third, we formalize the first phase (\S~\ref{sec:elaboration}),
which \emph{elaborates} \cite{Dunfield2012} a source language 
with value-based overloading into a target language with
$\tdead$-casts in lieu of overloading. We prove that the 
elaborated target preserves the semantics of the source, 
\ie the $\tdead$-casts fail iff the source would hit a
type error at run time.
Finally, we demonstrate how standard refinement typing 
machinery can be applied to the elaborated well-typed target
(\S~\ref{sec:ref-checking})
to statically verify the $\tdead$-casts, yielding 
end-to-end soundness for our system.

\section{Overview}\label{sec:overview}

We begin with an overview illustrating how we soundly verify
value-based overloading using our novel two-phased approach.

\subsection{Value-based Overloading}\label{sec:overview:problem}

\begin{figure}[t]
\begin{minipage}[t]{0.49\textwidth}
\begin{code}
neg :: (number, number) => number
    /\ (number, boolean) => boolean
function neg(flag, x) {
  if (flag) return 0-x;
  return !x;
} 
\end{code}
\end{minipage}
\hfill
\begin{minipage}[t]{0.49\textwidth}
\begin{code}
var a = neg(1,1);    // OK
var b = neg(0,true); // OK
var c = neg(0,1);    // ERR
var d = neg(1,true); // ERR
\end{code}
\end{minipage}
\caption{An example program with value-based overloading}
\label{fig:negate:ts} 
\end{figure}

Consider the code in Figure~\ref{fig:negate:ts}.
The function $\tnegate$ behaves as follows.
When a @number@ is passed as input, indicated by passing in a
\emph{non-zero}, \ie ``truthy'' @flag@, the function flips its 
sign by subtracting the input from @0@. 
Instead, when a @boolean@ is passed in,
indicated by a \emph{zero}, \ie ``falsy'' @flag@, the function 
returns the boolean negation.  Hence, the calls made to assign 
@a@ and @b@ are legitimate and should be statically accepted. 
However, the calls made to assign @c@ and @d@ lead to run-time 
errors (assuming we eschew implicit coercions), and hence, 
should be rejected.

The function $\tnegate$ distils value-based overloading to its 
essence: a run-time test on one parameter's value is used to
determine the type of, and hence the operation to be applied to,
another value. Of course in JavaScript, one could use a single 
parameter and the @typeof@ operator for this particular simple 
case, and design analyses targeted towards a fixed set of type 
tests, \eg using variants of the @typeof@ 
operator~\cite{typedracket,lambdajs}.
However, arbitrary value tests -- such as tests of the size of 
@arguments@ shown in @reduce@ in Figure~\ref{fig:reduce} -- 
can be and are used in practice. 
Thus, we illustrate the generality of the problem and our 
solution \emph{without} using the @typeof@ operator (which
is a special case of our solution).

\subparagraph*{Prevalence of Value-based Overloading.} 
The code from Figure~\ref{fig:reduce} is not a 
pathological toy example. It is adapted from the 
widely used \textsc{D3} visualization library.
The advent of \tsc makes it possible to establish 
the prevalence of value-based overloading in real-world 
libraries, as it allows developers to specify overloaded 
signatures for functions.
(Even though \tsc does not {verify} those signatures, 
it uses them as trusted interfaces for 
external \jsc libraries and code completion.)
The Definitely Typed repository~\footnote{\url{http://definitelytyped.org}}
contains \tsc interfaces for a large number of popular 
\jsc libraries.
We analyzed the \tsc interfaces to determine the prevalence of
value-based overloading. Intuitively, every function or method
with multiple (overloaded) signatures or
optional arguments has an implementation that uses value-based 
overloading.

Figure~\ref{fig:overload-graph} summarizes the results of 
our study. 
On the left, we show the fraction of overloaded 
functions in the 10 benchmarks analyzed by Feldthaus \etal~\cite{MollerOOPSLA14}. 
The data shows that over 25\% of the functions in 4 of 10 
libraries use value-based overloading, and an even larger
fraction is overloaded in libraries like @jquery@ and @d3@.
On the right we summarize the occurrence of overloading across all
the libraries in Definitely Typed. The data shows, for 
example, that in more than
25\% of the libraries, \emph{more than} 25\% of the functions are 
overloaded with multiple types. The figure jumps to nearly 55\% of 
functions if we also include optional arguments.

\begin{figure}[t!]
\begin{minipage}[b]{0.50\textwidth}
%version 1 - libs chosen by Feldthaus et al
\begin{tabular}{|l|r @{\hskip 3pt} r @{\hskip 3pt} r @{\hskip 3pt} r|}
\hline
\textbf{File} & \textbf{\#Funs} & \textbf{\%Ovl} & \textbf{\%Opt} & \textbf{\%Any} \\
\hline\hline
\csvreader[filter=\equal{\fullname}{ace/ace.d.ts} \OR
                  \equal{\fullname}{fabricjs/fabricjs.d.ts} \OR
                  \equal{\fullname}{jquery/jquery.d.ts} \OR
                  \equal{\fullname}{underscore/underscore.d.ts} \OR
                  \equal{\fullname}{pixi/pixi.d.ts} \OR
                  \equal{\fullname}{box2d/box2dweb.d.ts} \OR
                  \equal{\fullname}{leaflet/leaflet.d.ts} \OR
                  \equal{\fullname}{threejs/three.d.ts} \OR
                  \equal{\fullname}{d3/d3.d.ts} \OR
                  \equal{\fullname}{sugar/sugar.d.ts},
           late after line=\\,
           late after last line=\\\hline]
          {overload-table-full.csv}
          {1=\fullname, 2=\funcs, 6=\pctOverload, 7=\pctOptArg, 8=\pctEither, 9=\name}
          {\texttt{\name}&\funcs&\pctOverload&\pctOptArg&\textbf{\pctEither}}
\end{tabular}
\end{minipage}
\hfill
\begin{minipage}{0.49\textwidth}
\includegraphics[width=\textwidth]{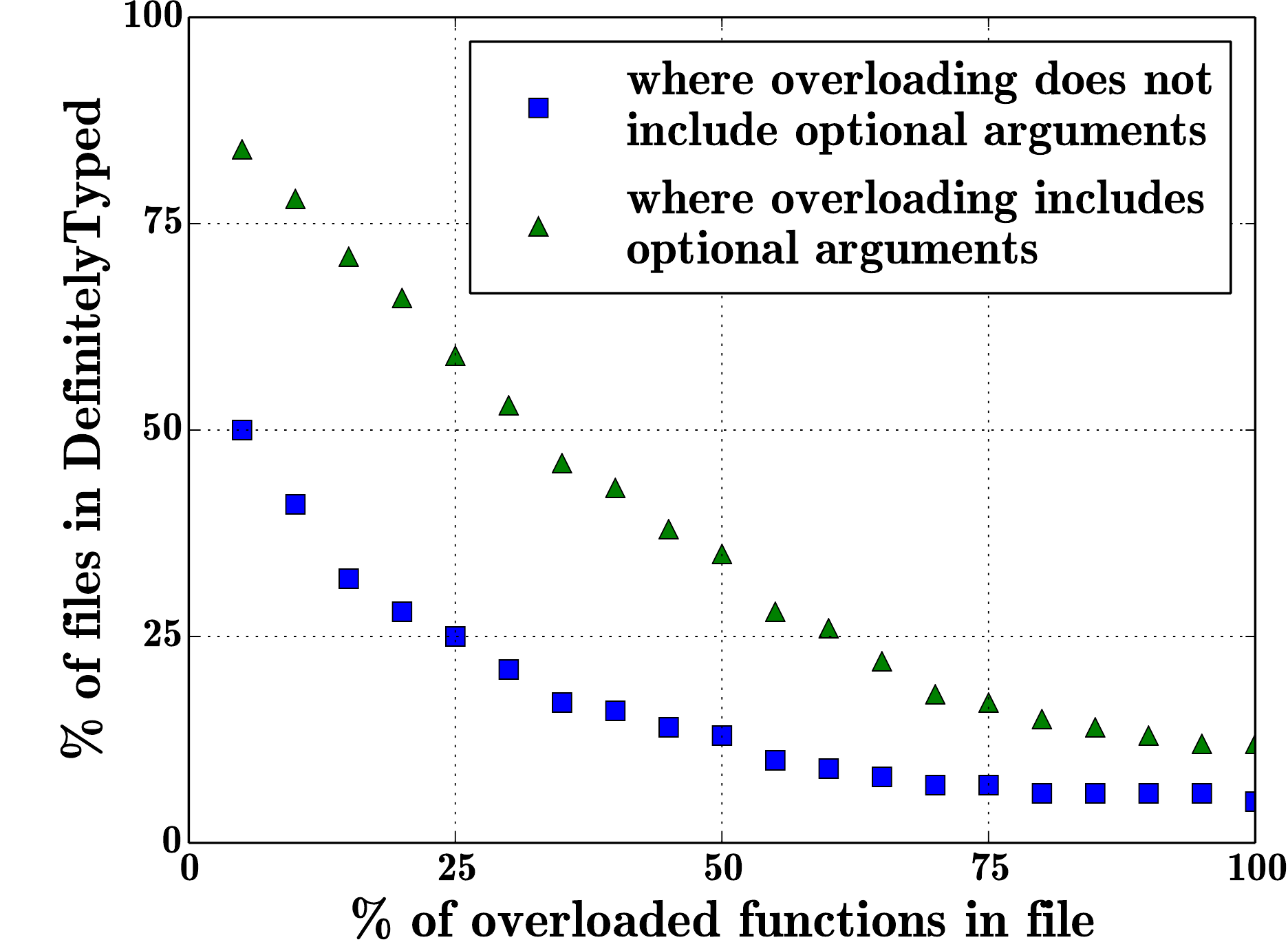}
\end{minipage}
\caption{The prevalence of value-based overloading. 
         \textbf{(L)} Libraries from \cite{MollerOOPSLA14}:
         \textbf{\#Funs} is the number of functions in the signature,
         \textbf{\%Ovl} is \%-functions with \emph{multiple} signatures, 
         \textbf{\%Opt} is \%-functions with \emph{optional} arguments, and
         \textbf{\%Any} is \%-functions with either of these features.
         \textbf{(R)} Overloading across \emph{all} files in 
         \textsc{DefinitelyTyped}. 
         % \cite{definitelytyped}. 
         A point $(x, y)$ means $y\%$ of 
         files have more than $x\%$ overloaded functions.}
\label{fig:overload-graph}
\end{figure}

The signatures in Definitely Typed have not been soundly checked 
against\footnote{Feldthaus \etal~\cite{MollerOOPSLA14} describe an effective but unsound inconsistency detector.}
their implementations. 
Hence, it is possible that they mischaracterize 
the semantics of the actual code, but modulo this caveat,
we believe the study demonstrates that value-based overloading
is ubiquitous,
and so to soundly and statically analyze dynamic languages, it 
is crucial that we develop techniques that can precisely and 
flexibly account for it.

\subsection{Refinement Types}\label{sec:overview:refinements}

\subparagraph*{Types and Refinements.} 
A basic refinement type $\rtype$
is a basic type, \eg @number@, refined with a logical 
formula from an SMT decidable logic -- for the purposes 
of this paper, the quantifier-free logic of uninterpreted 
functions and linear integer arithmetic (\decidablelogic~\cite{SMTLIB2}). 
For example, @{v:number | v != 0}@ describes the \emph{subset} 
of numbers that are non-zero.
We write $\type$ to abbreviate the trivially refined type
$\reftp{\type}{\rtrue}$, \eg $\tcnumber$ is 
an abbreviation for $\reftp{\tcnumber}{\rtrue}$.

\subparagraph*{Summaries: Function Types.}  
We can specify the behavior of functions with refined function 
types, of the form 
$$
\funtypen{x}{\rtype}{\rtype}
$$
where arguments are named $x_i$ and have types $\rtype_i$ 
and the output is a $\rtype$.
In essence, the \emph{input} types $\rtype_i$ specify the 
function's preconditions, and the \emph{output} type $\rtype$ 
describes the postcondition. Furthermore, each input type and the
output type can \emph{refer to} the arguments $x_i$ which yields
precise function contracts. 
For example,
$$\funtype{\bind{x}{0 \leq x}}{\reftp{\tcnumber}{x < \vv}}$$
is a function type that describes functions that \emph{require} 
a non-negative input, and \emph{ensure} that the output is greater
than the input.

\subparagraph*{Example.} 
Returning to $\tnegate$ in 
Figure~\ref{fig:negate:ts}, we can define two refinements of @number@:
\begin{code}
  type tt = {v:number | v != 0}    // "truthy" numbers
  type ff = {v:number | v  = 0}    // "falsy"  numbers
\end{code}
which are used to specify a refined type for $\tnegate$ 
shown on the left in Figure~\ref{fig:negate:source}.

\subparagraph*{Problem: A Circular Dependency.}
While it is easy enough to specify a type signature, 
it is another matter to verify it, and yet another matter 
to ensure soundness. 
The challenge is that value-based overloading introduces a
circular dependency between types and refinements. 
The soundness 
of basic types requires (\ie is established by) the refinements, 
while the refinements themselves require (\ie are attached to) 
basic types.
In classical refinement systems like DML~\cite{XiPfenning99}, 
basic types are established \emph{without} requiring 
refinements. A classical refinement system is thus a conservative
extension of the corresponding non-refined language, \ie 
removing the refinements from a DML program, 
yields valid, well-typed ML.
Unfortunately, value-based overloading removes this 
crucial property, posing a circular 
dependency between types and refinements.

\subparagraph*{Solution: Two-Phase Checking.}
We break the cycle by typing programs in two phases.
In the first, we \emph{trust} the basic types are correct
and use them (ignoring the refinements) to elaborate 
source programs into a target overloading-free language.
Inevitably, value-based overloading leads to ``errors''
when typing certain sub-expressions in the wrong context, 
\eg subtracting a @boolean@-valued @x@ from @0@. 
Instead of rejecting the program, the elaboration wraps 
ill-typed expressions with $\tdead$-casts, which are 
assertions stating the program is well-typed \emph{assuming} 
those expressions are dead code.
In the second phase we reuse classical refinement typing 
techniques to \emph{verify} that the $\tdead$-casts are indeed 
unreachable, thereby discharging the assumptions made in the first phase.

\begin{figure}[t!]
\begin{minipage}[t]{0.49\textwidth}
\begin{code}
neg :: (tt, number) => number
    /\ (ff, boolean) => boolean
function neg(flag, x) {
  if (flag) return 0-x;
  return !x;
}

var a = neg(1,1);    //OK
var b = neg(0,true); //OK
var c = neg(0,1);    //ERR
var d = neg(1,true); //ERR
\end{code}
\end{minipage}
\hfill
\begin{minipage}[t]{0.49\textwidth}
\begin{code}
neg#1 :: (tt, number) => number
function neg#1(flag, x) {
  if (flag) return 0-x;
  return !DEAD(x);
} 
neg#2 :: (ff, boolean) => boolean
function neg#2(flag, x) {
  if (flag) return 0-DEAD(x);
  return !x;
} 
var neg = (neg#1, neg#2);

var a = fst(neg)(1,1);    //OK
var b = snd(neg)(0,true); //OK
var c = fst(neg)(0,1);    //ERR
var d = snd(neg)(1,true); //ERR
\end{code}
\end{minipage}
\caption{Source program (l) and target (r) resulting from first phase elaboration.}
\label{fig:negate:source} 
\label{fig:negate:target} 
\end{figure}

\subsection{Phase 1: Trust}

The first phase 
\emph{elaborates} the source program into an equivalent typed target
language with two key properties:
First, the target program is simply typed -- \ie has 
\emph{no} union or intersection types, but just classical
ML-style sums and products.
Second, source-level type errors are elaborated to 
target-level $\tdead$-casts.
The right side of Figure~\ref{fig:negate:target} shows the elaboration 
of the source from the left side. 
While we formalize the elaboration declaratively using
a single judgment form (\S~\ref{sec:elaboration}), it 
comprises two different steps. 
Critically, each step, and hence the entire first phase, is
\emph{independent} of the refinements -- they are simply 
carried along unchanged.

\subparagraph*{A. Clone.} 
In the first step, we create 
separate clones of each overloaded function, where 
each clone is assigned a single conjunct of the 
original overloaded type.
For example, we create two clones $\tnegateone$ and 
$\tnegatetwo$ respectively typed using the two conjuncts 
of the original $\tnegate$.
The binder $\tnegate$ is replaced with a \emph{tuple} 
of its clones. 
Finally, each use of $\tnegate$ extracts the appropriate 
element from the tuple before issuing the call. 

Since the trust phase must be independent of refinements, the overload
resolution in this step uses \emph{only} the basic types at the
call-site to determine which of the two clones to invoke.
For example, in the assignment to @a@, the source call @neg(1,1)@ --
which passes in two @number@ values, and hence, matches the first
overload (conjunct) -- is elaborated to the target call @fst(neg)(1,1)@.
In the assignment to @d@, the source call @neg(1,true)@ -- which
passes in a @number@ and a @boolean@, and hence matches the second
overload -- is elaborated to the target call @snd(neg)(1,true)@, even
though @1@ does \emph{not} have the refined type @ff@.

\subparagraph*{B: Cast.} 
In the second step we check -- using classical, unrefined 
type checking -- that each clone adheres to its specified
type.
Unlike under usual intersection typing \cite{forsythe,Dunfield2012},
in our context these checks almost surely ``fail''.
For example, $\tnegateone$ \emph{does not} type-check 
as the parameter @x@ has type @number@ and so we cannot compute 
@!x@. Similarly, $\tnegatetwo$ fails because @x@ has type @boolean@ 
and so @0-x@ is erroneous. 
Rather than reject the program, we wrap such failures with 
$\tdead$-casts.
For example, the above occurrences of $\tvarx$ elaborate to
$\tdead(\tvarx)$ on the right in Figure~\ref{fig:negate:target}.

Intuitively, the \emph{value relationships} established at
the call-sites and guards ensure that the failures will not 
happen at run-time. However, recall that the first phase's 
goal is to decouple reasoning about types from reasoning 
about values. Hence, we just \emph{trust} all the types 
but use $\tdead$-casts to \emph{explicate} the value-relationship 
obligations that are needed to establish typing: namely 
that the $\tdead$-casts are indeed dead code.

\subsection{Phase 2: Verify}

The second phase takes as input the elaborated program emitted by the
first phase, which is essentially a classical \emph{well-typed}
ML program with assertions and without any value-overloading.
Hence, the second phase can use any existing program
logic~\cite{Floyd67,JML}, refinement
typing~\cite{XiPfenning99,Knowles10,LiquidPLDI08,GordonTOPLAS2011},
or contracts \& abstract interpretation~\cite{vanhorn14} 
to check that the target's assertions never fail, 
which, we prove, ensures that the source is type-safe.

To analyze programs with closures, collections and polymorphism,
(\eg @minIndex@ from Figure~\ref{fig:reduce}) we perform the 
second phase using the refinement types that are carried over 
unchanged by the elaboration process of the first phase. 
Intuitively, refinement typing can be viewed as a 
generalization of classical program logics where 
\emph{assertions} are generalized to type bindings, 
and the rule of \emph{consequence} is generalized as subtyping.
While refinement typing is a previously known technique, 
to make the paper self-contained, we illustrate how the 
second phase verifies the @DEAD@-casts in Figure~\ref{fig:negate:target}.

\subparagraph*{Refinement Type Checking.}
A refinement type checker works by building
up an \emph{environment} of type bindings that describe 
the machine state at each program point, and by checking that 
at each call-site, the actual argument's type is a refined 
\emph{subtype} of the expected type for the callee, under 
the context described by the environment at that site. 
The subtyping relation for basic types is converted to a logical
\emph{verification condition} whose validity is checked by an SMT
solver.
The subtyping relation for \emph{compound} types (\eg functions, 
collections) is decomposed, via co- 
and contra-variant subtyping rules, into subtyping constraints 
over \emph{basic} types, which can be discharged as above.

\subparagraph*{Typing $\tdead$-Casts.}
To use a standard refinement type checker for the second
phase of verification, we only need to treat @DEAD@ as a 
primitive operation with the refined type:
$$\tdead :: \funsig{A, B}{\reftp{A}{\rfalse}}{B}$$
That is, we assign $\tdead$ the \emph{precondition} $\rfalse$ which
states there are \emph{no} valid inputs for it, \ie that it should
never be called (akin to @assert(false)@ in other settings).

\subparagraph*{Environments.}
To verify $\tdead$-casts, the refinement type checker builds up
an environment of type binders describing \emph{variables} and
\emph{branch conditions} that are in scope at each program point.
For example, the $\tdead$ call in $\tnegateone$, has the environment:
\begin{align}
\Gamma_1 \ \doteq & \ \bind{\tflag}{\tpos},\ \bind{\tvarx}{\tcnumber},\ \bind{\tvarg_1}{\tgrd{\tflag = 0}} \label{eq:gamma:negate1}\\
\intertext{where the first two bindings are the function parameters, 
whose types are the input types. The third binding is from the ``else'' 
branch of the $\tflag$ test, asserting the branch condition $\tflag$ 
is ``falsy'' \ie equals $0$. At the $\tdead$ call in $\tnegatetwo$ 
the environment is: }
\Gamma_2 \ \doteq & \ \bind{\tflag}{\tzero},\ \bind{\tvarx}{\tcboolean},\ \bind{\tvarg_1}{\tgrd{\tflag \not = 0}} \label{eq:gamma:negate2}\\
\intertext{At the assignments to $\tvara$, $\tvarb$ and $\taz$ the environments
are respectively:}
\Gamma_a\   \doteq & \ \bind{\tnegate}{\tnegtype} \label{eq:gamma:a} \\
\Gamma_b\   \doteq & \ \Gamma_a ,\ \bind{\tvara}{\tcnumber} \label{eq:gamma:b} \\
\Gamma_{c}\ \doteq & \ \Gamma_b ,\ \bind{\tvarb}{\tcboolean} \label{eq:gamma:a0} \\
\intertext{where $\tnegtype$ abbreviates the \emph{product} type of the (elaborated) tuple $\tnegate$.}
\tnegtype\ \doteq & \ \tprod{(\funtype{\tpos, \tcnumber}{\tcnumber})}{(\funtype{\tzero, \tcboolean}{\tcboolean})}
\end{align}

\subparagraph*{Subtyping.}
At each function call-site, the refinement type system 
checks that the \emph{actual} argument is indeed a subtype 
of the \emph{expected} one. For example, the @DEAD@ calls   
inside $\tnegateone$ and $\tnegatetwo$ yield the respective 
subtyping obligation:
\begin{alignat}{2}
\Gamma_1 & \vdash\ \reftp{\tcnumber}{\vv = \tvarx}  &&\ \lqsubt\ \reftp{\tcnumber}{\rfalse} \label{eq:sub:negate1} \\
\Gamma_2 & \vdash\ \reftp{\tcboolean}{\vv = \tvarx}  &&\ \lqsubt\ \reftp{\tcboolean}{\rfalse} \label{eq:sub:negate2} \\
\intertext{The obligation states that the type of the argument $\tvarx$ should be a subtype of the input type of \tdead.
Similarly, at the assignments to $\tvara$, $\tvarb$ and $\taz$ the first
arguments generate the respective subtyping obligations:}
\Gamma_a & \vdash\ \reftp{\tcnumber}{\vv = 1}    &&\ \lqsubt\ \reftp{\tcnumber}{\vv \not = 0}  \label{eq:sub:a} \\
\Gamma_b & \vdash\ \reftp{\tcnumber}{\vv = 0}    &&\ \lqsubt\ \reftp{\tcnumber}{\vv = 0} \label{eq:sub:b} \\
\Gamma_{c} & \vdash\ \reftp{\tcnumber}{\vv = 0}  &&\ \lqsubt\ \reftp{\tcnumber}{\vv \not = 0} \label{eq:sub:c} 
\end{alignat}

\subparagraph*{Verification Conditions.}
To verify subtyping obligations, we convert them into logical verification conditions (VCs), 
whose validity determines whether the subtyping holds. 
A subtyping obligation
${\lqsubtype{\Gamma}{\reftp{b}{p}}{\reftp{b}{q}}}$ 
translates to the VC
${\dbrkts{\Gamma} \Rightarrow (p \Rightarrow q)}$
where $\dbrkts{\Gamma}$ is the conjunction of the refinements of the binders in $\Gamma$.
For example, the subtyping obligations~\eqref{eq:sub:negate1} and
\eqref{eq:sub:negate2} yield the respective VCs:
\begin{alignat}{2}
(\tflag \not = 0 \wedge \rtrue \wedge \tflag = 0) & \Rightarrow\ {\vv = \tvarx}  &&\ \Rightarrow\ {\rfalse} \label{eq:vc:negate1} \\
(\tflag = 0 \wedge \rtrue \wedge \tflag \not = 0) & \Rightarrow\ {\vv = \tvarx}  &&\ \Rightarrow\ {\rfalse} \label{eq:vc:negate2} \\
\intertext{Here, the conjunct $\rtrue$ arises from the trivial refinements 
\eg the binding for $\tvarx$. The above VCs are deemed valid by an SMT solver 
as the hypotheses are inconsistent, which proves the call 
is indeed dead code. Similarly, \eqref{eq:sub:a}, \eqref{eq:sub:b} respectively
yield VCs:}
\rtrue & \Rightarrow \ {\vv = 1}    &&\ \Rightarrow\ {\vv \not = 0} \label{eq:vc:a} \\
\rtrue & \Rightarrow \ {\vv = 0}    &&\ \Rightarrow\ {\vv = 0} \label{eq:vc:b} \\
\intertext{which are deemed valid by SMT, verifying the assignments to $\tvara$, $\tvarb$. 
However, by \eqref{eq:sub:c}:} 
\rtrue & \Rightarrow \ {\vv = 0}  &&\ \Rightarrow \ {\vv \not = 0} \label{eq:vc:c}
\end{alignat}
which is invalid, ensuring that we \emph{reject} the call that assigns to $\taz$.
 
\subsection{Two-Phase Inference}\label{sec:overview:infer}

Our two-phased approach readily lends itself 
to abstract interpretation based \emph{refinement inference} 
which can drastically lower the programmer annotations required 
to verify various safety properties, 
\eg reducing the annotations
needed to verify array bounds safety in ML programs from 
31\% of code size to under 1\%~\cite{LiquidPLDI08}.
Here we illustrate how inference works in the presence 
of value-based overloading. Suppose we are \emph{not} given the
refinements for the signature of $\tnegate$ but only the unrefined 
signature (either given to us explicitly as in \tsc, or inferred via dataflow analysis
\cite{lambdajs,flow}). 
As inference is difficult with incorrect code, 
we omit the erroneous statements that assign to @c@ and @d@.

Refinement inference proceeds in three steps.
First, we create \emph{templates} which are the basic types decorated 
with \emph{refinement variables} $\kvar$ in place of the unknown refinements.
Second, we perform the \emph{trust} phase to elaborate the source 
program into a well-typed target free of overloading. Remember that this phase uses only the basic types
and is oblivious to the (in this case unknown) refinements.
Third, we perform the \emph{verify} phase which now generates VCs over 
the refinement variables $\kvar$. These VCs -- \emph{logical implications} 
between the refinements and $\kvar$ variables -- correspond to so-called 
Horn constraints over the $\kvar$ variables, and can be
solved via abstract interpretation~\cite{Houdini,LiquidPLDI08}.

\subparagraph*{0. Templates:} 
Let us revisit the program from 
Figure~\ref{fig:negate:ts}, with the goal of inferring 
the refinements. Recall that the (unrefined) type 
of $\tnegate$ is: 
\begin{align*}
\tnegate ::     &\ {\funtype{\tcnumber,\tcnumber}{\tcnumber}} \\
         \wedge &\ {\funtype{\tcnumber,\tcboolean}{\tcboolean}} \\
\intertext{We create a \emph{template} by refining each base 
type with a (distinct) refinement variable:}
\tnegate ::     &\ {\funtype{\reftp{\tcnumber}{\kvar_1},\reftp{\tcnumber}{\kvar_2}}{\reftp{\tcnumber}{\kvar_3}}} \\
         \wedge &\ {\funtype{\reftp{\tcnumber}{\kvar_4},\reftp{\tcboolean}{\kvar_5}}{\reftp{\tcboolean}{\kvar_6}}}
\end{align*}

\subparagraph*{1. Trust:} 
The {trust} phase proceeds as before, propagating the refinements 
to the signatures of the elaborated target, yielding the code on the right in 
Figure~\ref{fig:negate:target} {except} that $\tnegateone$ and $\tnegatetwo$ 
have the respective templates:
\begin{align*}
\tnegateone ::  &\ {\funtype{\reftp{\tcnumber}{\kvar_1},\reftp{\tcnumber}{\kvar_2}}{\reftp{\tcnumber}{\kvar_3}}} \\
\tnegatetwo ::  &\ {\funtype{\reftp{\tcnumber}{\kvar_4},\reftp{\tcboolean}{\kvar_5}}{\reftp{\tcboolean}{\kvar_6}}}
\end{align*}

\subparagraph*{2. Verify:}
The verify phase proceeds as before, but using templates instead of the types. 
Hence, at 
the $\tdead$-cast in $\tnegateone$ and $\tnegatetwo$, and
the calls to @neg@ that assign to @a@ and @b@,
instead of the VCs \eqref{eq:vc:negate1}, \eqref{eq:vc:negate2}, \eqref{eq:vc:a}
and \eqref{eq:vc:b},
we get the respective Horn constraints:
\begin{alignat}{2}
(\SUBST{\kvar_1}{\vv}{\tflag} \wedge \rtrue \wedge \tflag = 0) & \Rightarrow\ {\vv = \tvarx}  &&\ \Rightarrow\ {\rfalse} \label{eq:horn:negate1} \\
(\SUBST{\kvar_4}{\vv}{\tflag} \wedge \rtrue \wedge \tflag \not = 0) & \Rightarrow\ {\vv = \tvarx}  &&\ \Rightarrow\ {\rfalse} \label{eq:horn:negate2} \\
\rtrue & \Rightarrow \ {\vv = 1}    &&\ \Rightarrow\ \kvar_1  \label{eq:horn:a} \\
\rtrue & \Rightarrow \ {\vv = 0}    &&\ \Rightarrow\ \kvar_4  \label{eq:horn:b}
\end{alignat}
These constraints are identical to the corresponding VCs except that $\kvar$ 
variables appear in place of the unknown refinements for the corresponding binders. 
We can solve these constraints using fixpoint computations over a variety of 
abstract domains such as monomial predicate abstraction 
\cite{Houdini,LiquidPLDI08} over a set of ground predicates 
which are arithmetic (in)equalities between program variables 
and constants, to obtain a solution mapping each $\kvar$ 
to a concrete refinement:
$$\kvar_1 \ \doteq \ \vv = 0 \qquad \qquad \kvar_4 \ \doteq \ \vv \not = 0 \qquad \qquad \kvar_2, \kvar_3, \kvar_5, \kvar_6 \ \doteq \ \rtrue $$
which, when plugged back into the templates, allow us to infer types for $\tnegate$.

\subparagraph*{Higher-Order Verification.}  
Our two-phased approach generalizes directly to offer precise analysis 
for \emph{polymorphic, higher-order} functions. 
Returning to the code in Figure~\ref{fig:reduce}, our two-phased inference algorithm infers the refinement types:
\begin{align*}
  \text{\tdreduce}  ::      & \ \funsig{A,B}{\bind{a}{\tarray{A}}, \bind{f}{\funtype{B, A, \tidx{a}}{B}}, \bind{x}{B}}{B} \\
  \text{\treduce}   ::      & \ \funsig{A}{\bind{a}{\tnarray{A}}, \bind{f}{\funtype{A, A, \tidx{a}}{A}}}{A}               \\
           \wedge  & \ \funsig{A,B}{\bind{a}{\tarray{A}}, \bind{f}{\funtype{B, A, \tidx{a}}{B}}, \bind{x}{B}}{B} \\
\intertext{where $\tidx{a}$ describes \emph{valid indices} for array
$a$, and $\tnarray{A}$ describes non-empty arrays:}
\tidx{a} \doteq    & \ \reftp{\tcnumber}{0 \leq \vv < \tlen{a}} \\ 
\tnarray{A} \doteq & \  \ \reftp{\tarray{A}}{0 < \tlen{\vv}} \\
\intertext{The above type is a precise \emph{summary} for the 
higher-order behavior of \tdreduce: it describes the relationship
between the input array $a$, the step (``callback'') function $f$, 
and the initial value of the accumulator, and stipulates that 
the output satisfies the same \emph{properties} $B$ as the input $x$.
Furthermore, it captures the fact that the callback $f$ is only 
invoked on inputs that are valid indices for the array $a$ that is
being reduced. 
Consequently, Liquid Types~\cite{LiquidPLDI08}, for example, would automatically infer:}
\tstep     ::     & \ \funsig{A}{\tidx{a},A,\tidx{a}}{\tidx{a}} \\
\tminindex ::     & \ \funsig{A}{\tarray{A}}{\tcnumber}
\end{align*}
thereby verifying the safety of array accesses in the presence of 
higher order functions, collections, and value-based overloading.

\section{Syntax and Operational Semantics of \lang}\label{sec:language}

\begin{figure}[t!]
\noindent\textbf{Source Language: Syntax}

\begin{tabular}{>{$}r<{$} >{$}r<{$} >{$}r<{$} >{$}l<{$}}
\textit{Values}	&	\val & ::= & 
					\vconst
             \spmid  	\evar
             \spmid  	\elambda{\evar}{\expr}{}
\\
\textit{Expressions} & \expr & ::= &
					\val
               \spmid \eletin{\evar}{\expr_1}{\expr_2}
               \spmid \ite{\expr}{\expr_1}{\expr_2}
               \spmid \app{\expr_1}{\expr_2}
\\[0.1in]             
\textit{Primitive Types}  & \tprim & ::= & \tnumber \spmid \tbool \\
\textit{Types}            & \type, \typeb  & ::= & 
					  \tprim
                 \spmid \tfun{\type}{\typeb}
                 \spmid \tand{\type}{\typeb}
                 \spmid \tor{\type}{\typeb} 
\end{tabular}

\vspace*{0.2in}

\judgementHead{Source Language: Operational Semantics}{\steps{\expr}{\expr'}}
\begin{mathpar}
  \inferrule[\opsrcevalctx]
  {\steps{\expr}{\expr'}}
  {\steps{\evalctxarg{\expr}}{\evalctxarg{\expr'}}}
\and
\inferrule[\opsrcappa]{}
  {\steps{\app{\vconst}{\val}}{\primapp{\vconst}{\val}}}
\and
\inferrule[\opsrcappb]{}
{\steps{\app{(\elambda{\evar}{\expr}{})}{\val}}
         {\appsubst{\esubst{\val}{\evar}}{\expr}}}
\\
\inferrule[\opsrccondtrue]{}
  {\steps{\ite{\vtrue}{\expr_1}{\expr_2}}{\expr_1}}
\and
\inferrule[\opsrccondfalse]{}
  {\steps{\ite{\vfalse}{\expr_1}{\expr_2}}{\expr_2}}
\and
\inferrule[\opsrclet]{}
  {\steps{\eletin{\evar}{\val}{\expr}}{\appsubst{\esubst{\val}{\evar}}{\expr}}}
\end{mathpar}
\caption{Syntax and Operational Semantics of \srclang}
\label{fig:srclang}
\label{fig:source:opsem}
\end{figure}

Next, we formalize two-phase typing via a core calculus \lang comprising 
a \emph{source} language \srclang \emph{with} overloading via union and 
intersection types, and a simply typed \emph{target} language \tgtlang 
\emph{without} overloading, where the assumptions for safe overloading 
are explicated via $\tdead$-casts.
In \S~\ref{sec:elaboration}, we describe the first phase that elaborates
source programs into target programs, and finally, in \S~\ref{sec:ref-checking}
we describe how the second phase verifies the $\tdead$-casts on the target 
to establish the safety of the source.
Our elaboration follows the overall compilation strategy of Dunfield~\cite{Dunfield2012} 
except that we have value-based overloading instead of an explicit ``merge'' 
operator \cite{forsythe}, and consequently, our elaboration and proofs must 
account for source level ``errors'' via $\tdead$-casts.

\subsection{Source Language \texorpdfstring{(\srclang)}{}}\label{subsec:source}

\subparagraph*{Terms.} 
We define a source language \srclang, with syntax 
shown in Figure~\ref{fig:srclang}. Expressions include variables, 
functions, applications, let-bindings, a ternary conditional construct,
and primitive constants $\vconst$ which include numbers $0, 1, \ldots$, 
operators $+, -, \ldots$, \etc

\subparagraph*{Operational Semantics.} 
In figure~\ref{fig:source:opsem} we also define a standard small-step 
operational semantics for \srclang with a left-to-right
order of evaluation, based on evaluation contexts
$$
\evalctx ::=     \empevalctx
\spmid \eletin{\evar}{\evalctx}{\expr}
\spmid \ite{\evalctx}{\expr_1}{\expr_2}
\spmid \app{\evalctx}{\expr}
\spmid \app{\val}{\evalctx}
$$

\subparagraph*{Types.}
Figure~\ref{fig:srclang} shows the types $\type$ in the source language.
These include primitive types \tprim, arrow types
$\tfun{\type}{\typeb}$ and, most notably, intersections $\tand{\type}{\typeb}$ 
and (untagged) unions $\tor{\type}{\typeb}$ (hence the name \srclang).  
Note that the source level types are \emph{not} refined, as crucially,
the first phase \emph{ignores} the refinements when carrying out
the elaboration.

\subparagraph*{Tags.}
As is common in dynamically typed languages, runtime values are associated
with \emph{type tags}, which can be inspected with a type test (cf.
\jsc's @typeof@ operator). We model this notion to our static
types, by associating each type with a set of possible tags. 
The multiplicity arises from unions.
The meta-function $\typetag{\type}$, defined in Figure~\ref{fig:source:wf}, 
returns the possible tags that values of type $\type$ may have at runtime. 

\subparagraph*{Well-Formedness.}  In order to resolve overloads statically,
we apply certain restrictions on the form of union and
intersection types, shown by the judgment $\tcwf{\type}$ 
formalized in Figure~\ref{fig:source:wf}.
For convenience of exposition, the parts of an untagged union need to have 
distinct runtime tags, and intersection types require all 
conjuncts to have the same tag.

\begin{figure}[t!]
\judgementHead{Well-Formed Types}{\tcwf{\type}}
\begin{mathpar}
  \inferrule{}{\tcwf{\tprim}}
\and
  \inferrule
  {
    \tcwf{\type}  \\ 
    \tcwf{\typeb} \\ 
  }
  {\tcwf{\tfun{\type}{\typeb}}}
\and
  \inferrule
  {\tcwf{\type} \\ \tcwf{\typeb} \\\\ \typetag{\type} = \typetag{\typeb}}
  {\tcwf{\tand{\type}{\typeb}}}
\and
  \inferrule
  {\tcwf{\type} \\ \tcwf{\typeb} \\\\ \typetag{\type}\cap \typetag{\typeb} = \emptyset}
  {\tcwf{\tor{\type}{\typeb}}}
\end{mathpar}
\[
\begin{array}{lllll}
  \typetag{\tnumber}  & = \{\texttt{"number"}\}   & \qquad\qquad\qquad & \typetag{\tand{\type}{\type'}} & = \typetag{\type}          \\ 
  \typetag{\tbool}    & = \{\texttt{"boolean"}\}  &        & \typetag{\tor{\type}{\type'}}  & = \typetag{\type} \cup \typetag{\type'} \\
  \typetag{\tfun{\type}{\type'}}  & = \{\texttt{"function"}\}  & & & 
\end{array}
\]
\caption{Basic Type Well-Formedness} 
\label{fig:source:wf}
\end{figure}

\subsection{Target Language \texorpdfstring{(\tgtlang)}{}}\label{subsec:target}

The target language \tgtlang eliminates (value-based) overloading
and thereby provides a basic, well-typed skeleton that can be further
refined with logical predicates.  Towards this end, unions and
intersections are replaced with classical \textit{tagged unions},
\textit{products} and \tdead-casts, that encode the requirements
for basic typing.

\subparagraph*{Terms.}
Figure~\ref{fig:tgtlang} shows the terms $\texpr$ of \tgtlang, which
extend the source language with the introduction of pairs,
projections,
injections,
a case-splitting construct 
and a special constant term $\deadcast{\type}{\typeb}{\texpr}$ which denotes 
an erroneous computation.
Intuitively, a $\deadcast{\type}{\typeb}{\texpr}$ is produced in the elaboration phase whenever the
actual type $\type$ for a term $\texpr$ is incompatible with an expected 
type $\typeb$.

\begin{figure}[t!]

\noindent\textbf{Target Language: Syntax}

\begin{tabular}{>{$}r<{$} @{\quad} >{$}r<{$} >{$}r<{$} >{$}l<{$}}
  \textit{Expressions} & \texpr, \texprb & ::= & \vconst
             \spmid  \evar                                 
             \spmid  \elambda{\evar}{\texpr}{}
             \spmid  \ite{\texpr}{\texpr_1}{\texpr_2}        
             \spmid  \app{\texpr_1}{\texpr_2} \\
  &        & \spmid & \textcolor{blue}{\pair{\texpr_1}{\texpr_2}}
             \spmid   \textcolor{blue}{\proj{\projind}{\texpr}}
             \spmid   \textcolor{blue}{\inj{1}{\texpr}}
             \spmid   \textcolor{blue}{\inj{2}{\texpr}} \\
  &        & \spmid &
             \textcolor{blue}{\casematch{\texpr}{\evar_1}{\texpr_1}{\evar_2}{\texpr_2}}       
             \spmid \textcolor{blue}{\deadcast{\type}{\typeb}{\texpr}} \\
  \textit{Values} & \tval & ::=&        \vconst
             \spmid \evar
             \spmid \elambda{\evar}{\texpr}{}
             \spmid \inj{1}{\tval} 
             \spmid \inj{2}{\tval} 
             \spmid \pair{\texpr}{\texpr}
             \spmid \deadcast{\type}{\typeb}{\tval} \\[0.1in]

\textit{Ref. Types}     &
  \rtype, \rtypeb & ::= &  \reftp{\tprim}{\pred}
  \spmid \tfun{\varbinding{\evar}{\rtype}}{\rtypeb}
  \spmid \tsum{\rtype}{\rtypeb}  
  \spmid \tprod{\rtype}{\rtypeb}
\end{tabular}

\vspace*{0.2in}

\judgementHead{Target Language: Operational Semantics}{\steps{\texpr}{\texpr'}}
\begin{mathpar}
  \inferrule[\optgtevalctx]
  {\steps{\texpr}{\texpr'}}
  {\steps{\tevalctxarg{\texpr}}{\tevalctxarg{\texpr'}}}
\and
\inferrule[\optgtappa]{
  \tval \not\equiv \deadcast{\type}{\typeb}{\tval'}}
  {\steps{\app{\vconst}{\tval}}{\primapp{\vconst}{\tval}}}
\and
\inferrule[\optgtbeta]{}
{\steps{\app{(\elambda{\evar}{\texpr}{})}{\tval}}
         {\appsubst{\esubst{\tval}{\evar}}{\texpr}}}
\and
\inferrule[\optgtcondtrue]{}
  {\steps{\ite{\vtrue}{\texpr_1}{\texpr_2}}{\texpr_1}}
\and
\inferrule[\optgtcondfalse]{}
  {\steps{\ite{\vfalse}{\texpr_1}{\texpr_2}}{\texpr_2}}
\and
\inferrule[\optgtlet]{}
  {\steps{\eletin{\evar}{\tval}{\texpr}}{\appsubst{\esubst{\tval}{\evar}}{\texpr}}}
\and
\inferrule[\optgtproj]{}
  {\steps{\proj{k}{\pair{\texpr_1}{\texpr_2}}}
         {\texpr_k}}
\and
\inferrule[\optgtcase]{}
  {\steps{\casematch{\inj{k}{\tval}}{\evar_1}{\texpr_1}{\evar_2}{\texpr_2}}
         {\appsubst{\esubst{\tval}{\evar_k}}{\texpr_k}}}
\end{mathpar}
\caption{Syntax and Operational Semantics of \tgtlang}
\label{fig:tgtlang}
\label{fig:target:opsem}
\end{figure}

\subparagraph*{Operational Semantics.}
As in the source language we define evaluation contexts 
\[
\begin{array}{rl}
  \tevalctx ::= & \empevalctx \spmid \eletin{\evar}{\tevalctx}{\texpr} \spmid \ite{\tevalctx}{\texpr_1}{\texpr_2} \spmid \app{\tevalctx}{\texpr} 
      \spmid      \app{\val}{\tevalctx} \spmid \inj{k}{\tevalctx} \\
      \spmid    & \proj{\projind}{\tevalctx} \spmid 
                  \deadcast{\type}{\typeb}{\tevalctx} \spmid \casematch{\tevalctx}{\evar_1}{\texpr_1}{\evar_2}{\texpr_2} 
\end{array}
\]
and use
them to define a small-step operational semantics for the target in Figure~\ref{fig:target:opsem}.
Note how evaluation is allowed in \tdead-casts and 
$\deadcast{\type}{\typeb}{\tval}$ \emph{is} a value.

\subparagraph*{Types.}
The target language is checked against a refinement type checker.
Thus, we modify the type language to account for the new language terms
and refinements.
\emph{Basic Refinement Types} are of the form $\reftp{\tprim}{\pred}$, consisting
of the same basic types $\tprim$ as source types, and a logical 
predicate $\pred$ (over some decidable logic), which describes the properties
that values of the type must satisfy.
Here, $\vv$ is a special \textit{value variable} that describes the inhabitants
of the type, that does not appear in the program, but can appear inside
the refinement $\pred$.
Function types are of the form $\tfun{\varbinding{\evar}{\rtype}}{\rtypeb}$, 
to express the fact that the refinement predicate of the return type $\rtypeb$ 
may refer to the value of the argument $\evar$.
Sum and product types have the usual structure found in ML-like languages.

\section{Phase 1:  Trust}\label{sec:elaboration}

Terms of \srclang are elaborated to terms of \tgtlang by a 
judgment:
${\trans{\tcenv}{\expr}{\type}{}{\texpr}}$.
This is read: under the typing assumptions in $\tcenv$, term $\expr$ of the
source language is assigned a type $\type$ and elaborates to a term $\texpr$ of
the target language. This judgment follows closely Dunfield's elaboration 
judgment~\cite{Dunfield2012}, but with crucial differences that arise due to 
dynamic, value-based overloading, which we outline below.

\subparagraph*{Elaboration Ignores Refinements.}
A key aspect of the first phase is that elaboration is based solely 
on the basic types, \ie does \emph{not} take type refinements into account.
Hence, the types assigned to source terms are  
transparent with respect to refinements; or more precisely, they work just 
as placeholders for refinements that can be provided as user specifications.
These specifications
are propagated \emph{as is} during the first phase 
along with the respective basic types they are attached to. 
Due to this transparency of refinements we have decided to 
omit them entirely from our description of the elaboration phase.

\subsection{Source Language Type-checking and Elaboration}\label{subsec:elab}

Figure~\ref{fig:elaboration} shows the rules that formalize the elaboration process. 
At a high-level, following Dunfield~\cite{Dunfield2012}, unions and intersections are
translated to simpler typing constructs like sums and products (and 
the attendant injections, pattern-matches, and projections).
Unlike the above work, which focuses on the classical intersection setting 
where overloading is explicit via a ``merge'' construct \cite{forsythe}, we
are concerned with the dynamic setting where overloading is 
value-based, leading to conventional type ``errors''. 

\subparagraph*{Elaboration Modes: Strict and Flexible.} 
Thus, one of the distinguishing features of our type system is its
ability to not fail in cases where conventional static type system
would raise type incompatibility errors, but instead elaborate the
offending terms to the special error form $\deadcast{\type}{\typeb}{\texpr}$.
However, these error forms do not appear indiscriminately, but under 
certain conditions, specified by two elaboration modes:
(1)~a \textit{flexible} judgment ($\vdash_{\strictplace{\flexiblesymbol}}$) for rules
  that may yield $\deadcast{\type}{\typeb}{\texpr}$ terms, and 
(2)~a \textit{strict} judgment ($\vdash_{\strictplace{\strictsymbol}}$) for those that don't. 
Most elaboration rules come in both flavors, depending on the surrounding 
rules in a typing derivation. We write $\alpha$ to parameterize over the 
two modes.

Intuitively, we use flexible mode when checking calls to non-overloaded
functions (with a \emph{single} conjunct) and strict mode when checking
calls to overloaded ones.
In the former case, a type incompatibility truly signals a (potential)
run-time error, but in the latter case, incompatibility may indicate
the wrong choice of overload.
Consequently, the elaboration judgment also states whether the intersection 
rule has been used, or not, by annotating the hook-arrow with the label 
$\elabinter$ or $\elabnointer$, respectively.
As with strictness, we parametrize over $\elabnointer$ and $\elabinter$ 
with the variable $\elabintervar$, and use $\elabinterany$ to denote that 
the outcome is not important.

\begin{figure}[t]
\judgementHead{Elaboration Typing}{\trans{\tcenv}{\expr}{\type}{}{\texpr}}
\begin{mathpar}
  \inferrule*
  [left=\tchktoplevel]
  {\translax{\emp}{\expr}{\type}{\elabnointer}{\texpr}}
  {\trans{\emp}{\expr}{\type}{}{\texpr}}
\and
  \inferrule*
  [left=\tchkweaken]
  {\transstrict{\tcenv}{\expr}{\type}{\elabintervar}{\texpr}}
  {\translax{\tcenv}{\expr}{\type}{\elabintervar}{\texpr}}
\end{mathpar}
\begin{center}
\line(1,0){320}
\end{center}
\begin{mathpar}
  \inferrule
  [\tchkconst]
  {}
  {\transvar{\tcenv}{\vconst}{\tconst}{\elabintervar}{\vconst}}
\and
  \inferrule*
  [left=\tchkletin]
  {
    \transvar{\deftcstate}{\expr_1}{\type_1}{\elabinterany}{\texpr_1} \\
    \transvar{\tcenvext{\evar}{\type_1}}{\expr_2}{\type_2}{\elabintervar}{\texpr_2}
  }
  {\transvar{\tcenv}{\eletin{\evar}{\expr_1}{\expr_2}}{\type_2}{\elabintervar}{\eletin{\evar}{\texpr_1}{\texpr_2}}}
\and
  \inferrule*
  [left=\tchkvar]
  {\envbinding{\evar}{\type}\in \tcenv}
  {\transvar{\tcenv}{\evar}{\type}{\elabintervar}{\evar}}
\and
  \inferrule*
  [left=\tchkite]
  {
    \translax{\tcenv}{\expr}{\tbool}{\elabnointer}{\texpr} \\ 
    \foralli {i \in \{1,2\}}
    {\transvar{\tcenv}{\expr_i}{\type}{\elabintervar}{\texpr_i}}
  }
  {\transvar{\tcenv}{\ite{\expr}{\expr_1}{\expr_2}}{\type}{\elabintervar}{\ite{\texpr}{\texpr_1}{\texpr_2}}}
\\
  \inferrule*
  [left=\tchkinterintro]
  {
    \foralli {\kindex \in \{1,2\}}
        {\transvar{\tcenv}{\val}{\type_{\kindex}}{\elabintervar}{\texpr_{\kindex}}}
    \\
    \tcwf{\tand{\type_1}{\type_2}}
  }
  {
    \transvar{\tcenv}
          {\val}
          {\tand{\type_1}{\type_2}}
          {\elabintervar}{\pair{\texpr_1}{\texpr_2}}
  }
\and
\inferrule*
  [left=\tchkinterelim]
  {\transvar{\tcenv}{\expr}{\tand{\type_1}{\type_2}}{\elabinterany}{\texpr}}
  {\transvar{\tcenv}{\expr}{\type_{\kindex}}{\elabinter}{\proj{\kindex}{\texpr}}}
\and
  \inferrule*
  [left=\tchklambda]
  {
    \tcwf{\tfun{\type}{\typeb}} \\
    \transvar{\tcenvext{\evar}{\type}}{\expr}{\typeb}{\elabinterany}{\texpr}
  }
  {\transvar{\tcenv}
    {\elambda{\evar}{\expr}{}}
    {\tfun{\type}{\typeb}}{\elabnointer}
    {\elambda{\evar}{\texpr}{}}
  }
\and
  \inferrule*
  [left=\tchkapp]
  {
    \transvar{\tcenv}{\expr_1}{\tfun{\type}{\typeb}}{\elabinter/\elabnointer}{\texpr_1}
    \\\\
    \transpar{\tcenv}{{\strictsymbol/\flexiblesymbol}}{\expr_2}{\type}{\elabinterany}{\texpr_2}
  }
  {\transvar{\tcenv}
    {\app{\expr_1}{\expr_2}}{\typeb}{\elabnointer}{\app{\texpr_1}{\texpr_2}}
  }
\and
  \inferrule*
  [left=\tchkdead]
  {\translax{\tcenv}{\expr}{\type}{\elabintervar}{\texpr} \\
   \typetag{\type}\cap \typetag{\typeb} = \emptyset}
   {\translax{\tcenv}{\expr}{\typeb}{\elabintervar}{\deadcast{\type}{\typeb}{\texpr}}}
\and
  \inferrule*
  [left=\tchkup]
  {
    \translax{\tcenv}{\expr}{\type_k}{\elabintervar}{\texpr} \\
    \tcwf{\tor{\type_1}{\type_2}}
  } 
  {\translax{\tcenv}{\expr}{\tor{\type_1}{\type_2}}{\elabintervar}{\inj{k}{\texpr}}}
\and
  \inferrule*
  [left=\tchkdown]
  {
    \phantom{\transvar{\tcenv}{\expr_0}{\tor{\type_1}{\type_2}}{\elabintervar}{\texpr_0}}
    \\
    \transvar{\tcenvext{\evar_1}{\type_1}}
             {\evalctxarg{\evar_1}}
             {\typeb}
             {\elabintervar}
             {\texpr_1}
    \\\\
    \transvar{\tcenv}{\expr_0}{\tor{\type_1}{\type_2}}{\elabintervar}{\texpr_0}
    \\
    \transvar{\tcenvext{\evar_2}{\type_2}}
             {\evalctxarg{\evar_2}}
             {\typeb}
             {\elabintervar}
             {\texpr_2}
  }
  {
    \transvar{\tcenv}
             {\evalctxarg{\expr_0}}
             {\typeb}
             {\elabintervar}{\casematch{\texpr_0}{\evar_1}{\texpr_1}{\evar_2}{\texpr_2}}
  } 
\end{mathpar}
\caption{Elaboration Typing rules }
\label{fig:elaboration}
\end{figure}

\subparagraph*{Top-level Elaboration.}
Our top-level judgment is agnostic of either of the aforementioned modes.
Elaborating programs in an empty context ($\vdash$) is essentially elaborating
in the flexible sense and assumes we are not in the context of intersection
elimination (\tchktoplevel).
Furthermore, an elaboration that succeeds in strict mode also
succeeds in flexible mode (\tchkweaken), so all strict rules can be used as
flexible ones.

\subparagraph*{Standard Rules.}
Rules \tchkconst, \tchkvar are standard and preserve the structure of the source 
program.
Rule \tchkite expects the condition $\expr$ of a conditional expression
$\ite{\expr}{\expr_1}{\expr_2}$ to be of boolean type, and assigns the same type
$\type$ to each branch of the conditional.
Rule \tchkletin checks expressions of the form
$\eletin{\evar}{\expr_1}{\expr_2}$. It assigns a type $\type_1$ to expression
$\expr_1$ and checks $\expr_2$ in an environment extended with the binding of
$\type_1$ for $\evar$. 

\subparagraph*{Intersections.}
In rule \tchkinterintro the choice of the type we assign to a value $\val$
causes different elaborated terms $\tval_{\kindex}$, as different typing
requirements cause the addition of $\tdead$-casts at different places.
This rule is intended to be used primarily for abstractions, so it's limited to
accept values as input. Rule \tchkinterelim for eliminating
intersections replaces a term $\expr$ that is originally typed as an
intersection with a projection of that part of the pair that has a matching type.
By \tchkinterintro values typed at an intersection get a pair
form. 

\subparagraph*{Unions.}
Rule \tchkup for union introduction is standard.
The union elimination rule, taken from Dunfield's elaboration
scheme~\cite{Dunfield2012}, states that an expression $\expr_0$ can
be assigned a union type $\tor{\type_1}{\type_2}$ when placed at the
``hole'' of an evaluation context $\evalctx$, so long as the evaluation
context can be typed with the same type $\typeb$, when the hole is
replaced with a variable typed as $\type_1$ on the one hand and as
$\type_2$ on the other. 
While the rule is inherently non-deterministic, it suffices for a
declarative description of the elaboration process; see Dunfield's
subsequent work on untangling type-checking of intersections and
unions~\cite{Dunfield11letnormal} for an algorithmic variant via a
let-normal conversion.

\subparagraph*{Abstraction and Application.}
Rule \tchklambda assumes the arrow type $\tfun{\type}{\typeb}$ 
is given as annotation and is required to conform to the 
well-formedness constraints.
At the crux of our type system is the rule \tchkapp. 
Expression $\expr_1$ can be typed in flexible mode. 
Depending on whether intersection elimination was used for $\expr_1$ we 
toggle on the mode of checking $\expr_2$. To only allow sensible derivations, 
we disallow the use of the $\tdead$-cast insertion when choosing among the 
cases of an intersection type. 
Below, we justify this choice using an example.
If on the other hand, the type for $\expr_1$ is assigned without choosing 
among the parts of an intersection, then expression $\expr_2$ can be typed 
in flexible mode, potentially producing $\tdead$-casts.

\subparagraph*{Trusting via $\tdead$-Casts.}
The cornerstone of the ``trust'' phase lies in the presence of the \tchkdead
rule. As we mentioned earlier, this rule can only be used in flexible mode. 
The main idea here is to allow cases that are obviously wrong, as far as the
simple first phase type system is concerned; but, at the same time, include
a $\tdead$-cast annotation and defer sound type-checking for the second phase. 
The premises of this rule specify that a $\tdead$-cast annotation will only 
be used if the inferred and the expected type have different tags. 
One of the consequences of this decision is that it does not allow
$\tdead$-casts induced by a mismatch between higher-order types, as the
tags for both types would be the same (most likely \texttt{"function"}).
Thus, such mismatches are ill-typed and rejected in the first phase.
This limitation is due to the limited information that can be encoded
using the tag mechanism. A more expressive tag mechanism could eliminate 
this restriction but we omit this for simplicity of exposition.

\subparagraph*{Semantics of $\tdead$-Casts.}
To prove that elaboration preserves source level 
behaviors, our design of $\tdead$-casts preserves 
the property that the target gets stuck \emph{iff} 
the source gets stuck.
That is, source level type ``errors'' \emph{do not lead to early} 
failures (\eg at function call boundaries).
Instead, $\tdead$-casts correspond to \emph{markers} 
for all source terms that can potentially cause 
execution to get stuck. 
Hence, the target execution itself gets stuck at the 
same places as the source -- \ie when applying 
to a non-function, branching on a non-boolean 
or primitive application over the wrong base 
value, except that in the target, the stuckness 
can only occur when the value in question carries 
a $\tdead$ marker.
Consider the source program 
${(\elambda{\evar}{\evar\ 1}{})\ 0}$
which gets stuck \emph{after} the top-level application, 
when applying $1$ to $0$.
It could be elaborated to 
${(\elambda{\evar}{\evar\ 1}{})\ \deadcast{\type}{\typeb}{0}}$
(where $\type$ and $\typeb$ are respectively $\tnumber$ and $\tfun{\tnumber}{\tnumber}$)
which also has a top-level application and gets stuck at the second, inner application.

\subparagraph*{Necessity of elaboration modes.}
If we allowed 
the argument of an \emph{overloaded} call-site to be checked in \textit{flexible} context, then for the application
$\app{f}{\evar}$, where $f$ has been assigned the type 
$\bind{f}{\tand{\tfun{\tnm}{\tnm}}{\tfun{\tbl}{\tbl}}}$ and
$\bind{\evar}{\tbl}$, the following derivation would be possible:
\begin{mathpar}
  \inferrule*[left=\tchkapp]
  {
    \inferrule*[left=\tchkinterelim]
    {\vdots}
    {\translax{\dots}
             {f}
             {\tfun{\tnm}{\tnm}}
             {\elabinter}
             {\proj{1}{f}}
    }
    \and
    \inferrule*[left=\tchkdead]
    {
      \translax{\dots}{\evar}{\tbl}{\elabnointer}{\evar}
      \\\\
      \typetag{\tbl}\cap \typetag{\tnm} = \emptyset
    }
    {
      \translax{\dots}
             {x}
             {\tnm}
             {\elabnointer}
             {\deadcast{\tbl}{\tnm}{\evar}}
    }
  }
  {
    \translax{\envbinding{f}{\tand{\tfun{\tnm}{\tnm}}{\tfun{\tbl}{\tbl}}}, 
              \envbinding{\evar}{\tbl}}
             {\app{f}{\evar}}
             {\tnm}
             {\elabnointer}
             {\app{\parens{\proj{1}{f}}}{\parens{\deadcast{\tbl}{\tnm}{\evar}}}}
  }
\end{mathpar}
But, clearly, the intended derivation here is:
\begin{mathpar}
  \inferrule*[left=\tchkapp]
  {
    \inferrule*[left=\tchkinterelim]
    {\vdots}
    {\translax{\dots}{f}{\tfun{\tbl}{\tbl}}{\elabinter}{\proj{2}{f}}}
    \and
    \inferrule*[]{}
    {\transstrict{\dots}{\evar}{\tbl}{\elabnointer}{\evar}}
  }
  {
    \translax{\envbinding{f}{\tand{\tfun{\tnm}{\tnm}}{\tfun{\tbl}{\tbl}}}, 
              \envbinding{\evar}{\tbl}}
             {\app{f}{\evar}}
             {\tbl}
             {\elabnointer}
             {\app{\parens{\proj{2}{f}}}{\evar}}
  }
\end{mathpar}

\subparagraph*{Subtyping.}
This formulation has been kept simple with respect to subtyping. The only notion
of subtyping appears in the \tchkup rule, where a type $\type_1$ is widened to 
$\tor{\type_1}{\type_2}$. We could have employed a more elaborate
notion of subtyping, by introducing a subtyping relation ($\issubtype{}{}$) and
a subsumption rule for our typing elaboration. The rules for this subtyping
relation would include, among others, function subtyping:
$$
\inferrule[]
{\issubtype{\type_1'}{\type_1} \\ {\issubtype{\type_2}{\type_2'}}}
{\issubtype{\tfun{\type_1}{\type_2}}{\tfun{\type_1'}{\type_2'}}}
$$
However, supporting subtyping in higher-order constructs would only be possible
with the introduction of wrappers around functions to accommodate checks on the
arguments and results of functions. So, assuming that a cast $\ccast$ represents
a dynamic check the above rule would correspond to a cast producing relation
($\convertsign$):
$$
\inferrule[]
{\convert{\type_1'}{\type_1}{\ccast_1} \\ {\convert{\type_2}{\type_2'}{\ccast_2}}}
{\convert{\tfun{\type_1}{\type_2}}
         {\tfun{\type_1'}{\type_2'}}
         {\elambda{f}
            {\elambda{x}
        {\parens{\cexpr{\ccast_2}{(\app{f}{(\cexpr{\ccast_1}{x})})}}{}}{}}}}
$$
This formulation would just complicate the translation without giving any more
insight in the main idea of our technique, and hence we forgo it.

\subsection{Source and Target Language Consistency}\label{subsec:consistency}

In this section, we present the theorems that precisely connect the semantics 
of source programs with their elaborated targets. 
The main challenges towards establishing those are that:
(1)~the source and target do not proceed in lock-step, a single step of the one
    may be matched by several steps of the other (for example evaluating a
    projection in the target language does not correspond to any step in the
    source language), and
(2)~we must design the semantics of the $\tdead$-casts in the target to 
    ensure that $\tdead$-casts cause evaluation to get stuck iff some 
    primitive operation in the source gets stuck. 
We address these, next, with a number of lemmas and state our assumptions.

\subparagraph*{Value Monotonicity.}
This lemma fills in the mismatch that emerges when (non-value)
expressions in the source language elaborate to values in the target
language. Informally, if a source expression $\expr$ elaborates to a
target value $\tval$, then $\expr$ evaluates (after potentially
multiple steps) to a value $\val$ that is related to the target value
$\tval$ with an elaboration relation under the same type.
Furthermore, all expressions on the path to the target value $\val$
elaborate to the same value and get assigned the same type.

\begin{lemma}[Value Monotonicity]\label{lemma:valmonot}
If $\trans{\tcenv}{\expr}{\type}{}{\tval}$, 
then there exists $\val$ \st:
\begin{enumerate}[label={(\arabic*)}]
\inditem $\stepsmany{\expr}{\val}$
\inditem $\trans{\tcenv}{\val}{\type}{}{\tval}$
\inditem $\foralli{ i \;\suchthat\; \stepsmany{\expr}{\expr_i}}
                  {\trans{\tcenv}{\expr_i}{\type}{}{\tval}}$
\end{enumerate}
\end{lemma}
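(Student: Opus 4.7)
The plan is to proceed by induction on the elaboration derivation and to take $\val \doteq \expr$ in every case. This reduces all three conclusions to the single observation that, whenever the target $\tval$ is a value, the source $\expr$ is itself already a source value: conclusion (1) then holds in zero steps, (2) is exactly the hypothesis, and (3) is vacuous because no reduction rule in the source semantics of \srclang matches a value, so the only $\expr_i$ with $\stepsmany{\expr}{\expr_i}$ is $\expr$ itself. For the induction to go through, I will strengthen the statement to range over the annotated judgment $\transvar{\tcenv}{\expr}{\type}{\elabintervar}{\tval}$, so that the top-level judgment of the lemma is obtained as a specialization.

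Case analysis on the last rule of the derivation proceeds as follows. The base rules \tchkconst, \tchkvar, \tchklambda, and \tchkinterintro all yield a source term that is syntactically a value (note in particular that \tchkinterintro explicitly restricts its premise to a source value $\val$). The mode rules \tchktoplevel and \tchkweaken leave both source and target unchanged, so the IH applies directly. The wrapping rules \tchkup and \tchkdead produce targets of the form $\inj{k}{\texpr}$ and $\deadcast{\type'}{\typeb}{\texpr}$ respectively; inspection of the target value grammar in Figure~\ref{fig:tgtlang} shows that each of these outer forms is a value \emph{only when} the wrapped $\texpr$ is itself a value, so the sub-derivation already satisfies the hypothesis of the lemma and the IH, applied to the unchanged source $\expr$, yields that $\expr$ is a value.

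The remaining rules -- \tchkite, \tchkletin, \tchkapp, \tchkinterelim, and \tchkdown -- produce targets whose head constructor (conditional, let-binding, application, projection, or case-match) does not appear among the target value forms, so these cases are vacuous under the hypothesis that $\tval$ is a value.

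The main delicacy will be the inversion step in the two wrapping cases: I have to carefully inspect the $\tval$-grammar (so that pairs in the \tchkinterintro case do not cause trouble, and so that $\inj{k}{\cdot}$ and $\deadcast{\cdot}{\cdot}{\cdot}$ genuinely force a value inside), and to keep the strict/flexible mode and the $\elabintervar$ annotation consistent when feeding the sub-derivation into the IH. Beyond that, the lemma is essentially a grammar-inversion observation, and the three conclusions collapse once that observation is in hand.
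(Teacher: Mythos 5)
Your proposal is correct, and it takes a genuinely different --- and more elementary --- route than the paper's. The paper inherits the shape of Dunfield's Lemma~11, where the statement is genuinely a multi-step evaluation claim because the explicit merge construct is a non-value source term that elaborates, via intersection introduction, to a pair, which is a target value; the paper then proves part~(3) separately by an outer induction on the length of the path $\stepsmany{\expr}{\expr_i}$ with an inner induction on the derivation whose only non-trivial case is \tchkup. You instead prove the stronger syntactic fact that in \emph{this} calculus a target value can only arise from a source value: \tchkinterintro is restricted to source values (so the one lazily-valued target form $\pair{\texpr_1}{\texpr_2}$ causes no trouble), the source-preserving wrappers \tchkup and \tchkdead yield values only when the wrapped target is itself a value (so the IH applies to the unchanged source term), and every remaining rule (\tchkletin, \tchkite, \tchkapp, \tchkinterelim, \tchkdown) emits a target whose head constructor lies outside the value grammar. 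With that, $\val \doteq \expr$ witnesses all three conclusions at once, part~(3) because source values are stuck under the source operational semantics. What this buys is a short, self-contained proof that also reveals the paper's \tchkup case in part~(3) --- which presupposes that $\expr_k$ can step --- to be vacuous in this system. What it costs is robustness: the lemma as stated (and the paper's Dunfield-style proof of it) survives extensions such as an explicit merge operator or an intersection introduction over non-values, under which your degeneracy claim fails and the genuine multi-step version is again required; if the calculus is ever extended that way, your proof must be replaced rather than patched.
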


\begin{proof}
  The first two parts are handled similarly to
  Dunfield's~\cite{Dunfield2012} Lemma 11. 
  The last part is proved by induction on the length of the path
  $\stepsmany{\expr}{\expr_i}$.
  Details of this proof can be found in the extended version of this
  paper~\cite{rsc:appendix}.
\end{proof}

The reverse of the above lemma also comes in handy. Namely, given a value $\val$
that elaborates to an expression $\texpr$ and gets assigned the type $\type$,
there exists a value in the target language $\tval$, such that $\val$ elaborates
to $\tval$ and get assigned the \emph{same} type $\type$. 

\begin{lemma}[Reverse Value Monotonicity]\label{lemma:rev:valmonot}
If $\trans{\tcenv}{\val}{\type}{}{\texpr}$, 
then exists $\tval$ \st: 
$\stepsmany{\texpr}{\tval}$ and $\trans{\tcenv}{\val}{\type}{}{\tval}$.
\end{lemma}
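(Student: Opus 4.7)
The plan is to prove the lemma by structural induction on the derivation of $\trans{\tcenv}{\val}{\type}{}{\texpr}$, case-splitting on the last rule applied and, in each case, exhibiting a target value $\tval$ with $\stepsmany{\texpr}{\tval}$ together with a rebuilt derivation $\trans{\tcenv}{\val}{\type}{}{\tval}$ at the \emph{same} type.

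The rules that syntactically recognise a source value -- \tchkconst, \tchkvar, and \tchklambda -- produce target terms already in the target value grammar, so one can take $\tval \equiv \texpr$ with zero reduction steps. The mode-shifting rules \tchktoplevel and \tchkweaken reduce immediately to the induction hypothesis. For \tchkinterintro the elaborated term is a pair $\pair{\texpr_1}{\texpr_2}$, which is itself a target value. For \tchkup we have $\texpr \equiv \inj{k}{\texpr'}$; the induction hypothesis applied to the premise yields $\tval'$ with $\stepsmany{\texpr'}{\tval'}$, and lifting through the evaluation context $\inj{k}{\empevalctx}$ gives $\stepsmany{\inj{k}{\texpr'}}{\inj{k}{\tval'}}$, which is a value; reapplying \tchkup re-establishes the typing. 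The \tchkdead case is analogous, since evaluation is permitted under $\tdead$-casts by \optgtevalctx and $\deadcast{\type'}{\type}{\tval'}$ is itself a target value.

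The principal obstacle is the \tchkinterelim case, where $\texpr \equiv \proj{\kindex}{\texpr'}$ is derived from $\trans{\tcenv}{\val}{\tand{\type_1}{\type_2}}{}{\texpr'}$. The induction hypothesis gives some $\tval'$ with $\stepsmany{\texpr'}{\tval'}$ and a derivation $\trans{\tcenv}{\val}{\tand{\type_1}{\type_2}}{}{\tval'}$. To proceed, I need a canonical-forms-style inversion: any target \emph{value} that elaborates a source \emph{value} at an intersection type must, modulo outer $\tdead$-casts (whose tag side-conditions from \tchkdead severely restrict the possibilities), arise via \tchkinterintro as a pair $\pair{\tval_1}{\tval_2}$ with recoverable sub-derivations $\trans{\tcenv}{\val}{\type_{\kindex}}{}{\tval_{\kindex}}$. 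Granted this, \optgtproj steps $\proj{\kindex}{\pair{\tval_1}{\tval_2}}$ to $\tval_{\kindex}$, furnishing the required witness and reconstruction. A parallel canonical-forms argument dispatches \tchkdown: since $\val$ is a value, the only admissible evaluation context in the conclusion is $\empevalctx$, so the scrutinee is itself a source value elaborated at union type; the induction hypothesis produces an injection, \optgtcase selects the corresponding branch and substitutes the underlying value for $\evar_k$, after which a further appeal to the induction hypothesis reduces that branch to a target value.

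The real work therefore consists of the canonical-forms inversion at intersection (and union) introduction sites, carefully threading the tag constraints imposed by \tchkdead so that the only relevant inhabitants are indeed pairs and injections (possibly under compatible $\tdead$-wrappings). I expect this inversion, together with the bookkeeping of stripping and re-applying \tdead-casts around the canonical pair and injection shapes, to be the step that dominates the effort.
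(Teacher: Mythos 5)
Your overall strategy---induction on the elaboration derivation, dispatching the rules that already emit target values immediately and isolating \tchkinterelim and \tchkdown as the crux---is the same route the paper takes: its proof is deferred to the Value Monotonicity argument, which is itself an induction on the derivation following Dunfield's Lemma~11. So the decomposition matches, and you have correctly located where the effort lies.

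There is, however, a gap in how you propose to discharge \tchkinterelim. The canonical form you posit---that a target value elaborating a source value at $\tand{\type_1}{\type_2}$ is a pair of \emph{values} $\pair{\tval_1}{\tval_2}$ with recoverable component derivations---is not obtainable by inversion: in \tgtlang, $\pair{\texpr_1}{\texpr_2}$ is a value for \emph{arbitrary} components, and the target evaluation contexts contain no pair frame, so the components can never be reduced in place. Producing value components is therefore itself an instance of the lemma at the conjunct types, and the derivation you would feed to that instance is the one returned by your first appeal to the induction hypothesis (similarly, in the \tchkdown case, the one returned by the substitution lemma after \optgtcase fires). Neither of these is a sub-derivation of the original, so the structural induction as described is circular. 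The standard repair is to invert the \emph{premise} derivation of \tchkinterelim (respectively \tchkdown) directly, case-analyzing how a source value can be assigned an intersection (respectively union) type---this is what the paper's auxiliary Intersections/Pairs and Unions/Injections lemmas are for---so that the component derivations you recurse on are genuine sub-derivations; alternatively, induct on a measure such as derivation height together with a height-preservation property for substitution. Finally, note that your ``compatible $\tdead$-wrappings'' remark does not actually dispose of the case where the premise of \tchkinterelim is derived by \tchkdead: a projection applied to $\deadcast{\type}{\tand{\type_1}{\type_2}}{\tval}$ is stuck and is not a value, so that case needs to be either excluded by the mode discipline or argued separately rather than absorbed into the canonical-forms step.
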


\begin{proof}
Similar to proof of Lemma~\ref{lemma:valmonot}.
\end{proof}

This is an interesting
result as it establishes that different derivations may assign the same type to
a term and still elaborate it to different target terms. 
For example, one can assume derivations that consecutively apply the intersection
introduction and elimination rules. It's easy to see that the same value $\val$
can be used in the following elaborations:
\begin{align*}
  \emp \vdash \val ::  {\tand{\type_1}{\type_2}} & { \color{dark_purple}\myhook{}
\pair{\tval_1}{\tval_2}} \\
  \emp \vdash \val ::  {\tand{\type_1}{\type_2}} & { \color{dark_purple}\myhook{}
\underbrace{\pair{\proj{1}{\pair{\tval_1}{\tval_2}}}{\proj{2}{\pair{\tval_1}{\tval_2}}}}_{\texpr}}
\end{align*}
Lemma~\ref{lemma:rev:valmonot} guarantees
it will always be the case that $\stepsmany{\texpr}{\pair{\tval_1}{\tval_2}}$. 
It is up to the implementation of the type-checking algorithm to produce an efficient
target term. 

\subparagraph*{Primitive Semantics.} 
To connect the failure of the $\tdead$-casts 
with source programs getting stuck, we assume that the primitive
constants are well defined for all the values of their input domain
\emph{but not} for $\tdead$-cast values.  This lets us establish that
primitive operations $\vconst$ are invariant to elaboration. Hence, a
source primitive application gets stuck iff the elaborated argument is
a $\tdead$-cast. 
The forward version of this statement is the following assumption.

\begin{assumption}[Primitive constant application]\label{assum:const:app}
  If 
  (1)~$\trans{\emp}{\vconst}{\tfun{\type}{\typeb}}{}{\vconst}$,
  (2)~$\trans{\emp}{\val}{\type}{}{\tval}$, and
  (3)~$\tval \not \equiv \deadcast{\cdot}{\type}{\cdot}$,
then
  (i)~$\steps{\app{\vconst}{\val}}{\primapp{\vconst}{\val}}$,
  (ii)~$\steps{\app{\vconst}{\tval}}{\primapp{\vconst}{\tval}}$, and 
  (iii)~$\trans{\emp}{\primapp{\vconst}{\val}}{\typeb}{}{\primapp{\vconst}{\tval}}$.
\end{assumption}

\subparagraph*{Substitution lemma.}
As it typical in these cases, the proof of soundness requires a form 
of substitution lemma.

\begin{lemma}[Substitution]\label{lemma:subst}
If $\trans{\tcenvext{\evar}{\type}}{\expr}{\type'}{}{\texpr}$ and 
$\trans{\tcenv}{\val}{\type}{}{\tval}$ then 
$\trans{\tcenv}{\appsubst{\esubst{\val}{\evar}}{\expr}}{\type'}{}{\appsubst{\esubst{\tval}{\evar}}{\texpr}}$.
\end{lemma}

\begin{proof}
  Similar to Dunfield's substitution proof~\cite{Dunfield2012} (Lemma 12).
\end{proof}

The first consistency result is the analogue of Dunfield's
Consistency Theorem~\cite{Dunfield2012}
and states that the elaboration produces terms that are \emph{consistent} 
with the source in that each step of the target is matched by a corresponding
step of the source.
Hence, behaviors 
of the target \emph{under-approximate} the behaviors of the source. 

\begin{theorem}[Consistency]\label{theorem:consistency}
If $\trans{\emp}{\expr}{\type}{}{\texpr}$ and $\steps{\texpr}{}{\texpr'}$ then there
exists $\expr'$ such that $\stepsmany{\expr}{\expr'}$ and
$\trans{\emp}{\expr'}{\type}{}{\texpr'}$.
\end{theorem}

\begin{proof}
  The proof of this theorem is by induction on the derivation
  $\trans{\emp}{\expr}{\type}{}{\texpr}$, adapting the proof scheme
  given by Dunfield~\cite{Dunfield2012}, and using
  Lemma~\ref{lemma:valmonot}. 
  Details of this proof can be found in the extended version of this
  paper~\cite{rsc:appendix}.
\end{proof}

While this suffices to prove \emph{soundness} -- intuitively if the 
target does not ``go wrong'' then the source cannot ``go wrong'' 
either -- it is not wholly satisfactory as a trivial translation 
that converts every source program to an ill-typed target also 
satisfies the above requirement. 
So, unlike Dunfield~\cite{Dunfield2012}, we establish a completeness
result stating that if the source term steps, then the elaborated 
program will also eventually step to a corresponding (by elaboration) term. 
Theorem~\ref{theorem:rev:consistency} declares that behaviors of the elaborated target \emph{over-approximate} those of
the source, and hence, in conjunction with Theorem~\ref{theorem:consistency},
ensure that the source ``goes wrong'' iff the target does.

\begin{theorem}[Reverse Consistency]  \label{theorem:rev:consistency}
If 
$\trans{\emp}{\expr}{\type}{}{\texpr}$ 
and 
$\steps{\expr}{}{\expr'}$ 
then
there exists $\texpr'$ such that 
  $\trans{\emp}{\expr'}{\type}{}{\texpr'}$, 
  and $\stepsmanyone{\texpr}{\texpr'}$.
\end{theorem}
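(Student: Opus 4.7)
The plan is to proceed by structural induction on the derivation of $\trans{\emp}{\expr}{\type}{}{\texpr}$, with inner case analysis on the source reduction $\steps{\expr}{\expr'}$. The induction must be stated for an arbitrary elaboration mode $\alpha$, not just the top-level $\vdash$, so that the hypothesis applies to sub-derivations inside \tchkweaken and \tchkinterelim. Several cases are immediately vacuous: rules \tchkconst and \tchkinterintro apply only when the source term is a value, so no source step is possible, and \tchkvar cannot arise in the empty environment. For the unary wrapper rules \tchkdead, \tchkup, and \tchkinterelim, the target is a $\tdead$-cast, an injection, or a projection around an elaborated subterm; the source step is forwarded into the sub-derivation by the induction hypothesis, and the resulting target steps are lifted under the wrapper by the target evaluation context rule, since $\tdead$-casts, injections, and projections all appear as valid target contexts in $\tevalctx$. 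The matching $\texpr'$ is obtained by re-applying the same wrapper rule to the stepped sub-elaboration.

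For the compound rules \tchkletin, \tchkite, and \tchkapp, I sub-case on whether the source redex fires in a proper subterm (a congruence case, handled by the induction hypothesis composed with the target evaluation context rule) or at the root. Root reductions are matched by explicit target steps, with the supporting lemmas bridging the value/elaboration mismatch. A source let-reduction first calls Lemma~\ref{lemma:rev:valmonot} to drive the elaborated bound expression to a target value, then fires the target let rule, then closes by the Substitution Lemma~\ref{lemma:subst}. A source beta step is handled identically with the target beta rule in place of the let rule. For a source primitive application, Assumption~\ref{assum:const:app} supplies the matching target step and the elaboration of the result; its precondition that the elaborated argument is not a $\tdead$-cast is discharged because the source primitive is about to succeed, so the corresponding elaborated value cannot carry a $\tdead$ shell -- such a shell arises only via \tchkdead's tag-disjointness premise, which would contradict the success of the source primitive. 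The two conditional root cases are analogous, using Lemma~\ref{lemma:rev:valmonot} on the guard before firing the relevant target conditional rule.

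The main obstacle is the \tchkdown case for union elimination, where the source is $\evalctxarg{\expr_0}$ and the target is $\casematch{\texpr_0}{\evar_1}{\texpr_1}{\evar_2}{\texpr_2}$, with each arm $\texpr_k$ elaborated from $\evalctxarg{\evar_k}$ under the binding of $\evar_k$ at type $\type_k$. The difficulty is twofold: the rule is non-deterministic in its choice of $\evalctx$, and the target cannot enter either arm until the scrutinee reduces to an injection. The plan is to sub-case on whether the source redex fires inside $\expr_0$ or inside $\evalctx$. In the former sub-case, the induction hypothesis on the $\expr_0$-premise yields target steps $\stepsmanyone{\texpr_0}{\texpr_0'}$, which are lifted to the full case expression via the target evaluation context rule (the target context $\casematch{\tevalctx}{\evar_1}{\texpr_1}{\evar_2}{\texpr_2}$ permits stepping at the scrutinee position); the new source expression $\evalctxarg{\expr_0'}$ is re-elaborated via \tchkdown with $\texpr_0'$ in place of $\texpr_0$. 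In the latter sub-case, $\expr_0$ is a value, so Lemma~\ref{lemma:valmonot} forces $\texpr_0$ to be related at type $\tor{\type_1}{\type_2}$ to an injection $\inj{k}{\tval}$; Lemma~\ref{lemma:rev:valmonot} drives $\texpr_0$ to that injection in the target, the target case rule then substitutes $\tval$ for $\evar_k$ in the $k$-th arm, and the induction hypothesis applied to the arm's derivation of $\evalctxarg{\evar_k}$ (after Substitution) matches the source step inside $\evalctx$. The bookkeeping that ensures each source step is matched by at least one target step -- justifying the $+$ in $\stepsmanyone{\texpr}{\texpr'}$ -- follows from the unavoidable case-rule reduction together with Reverse Value Monotonicity on the scrutinee.
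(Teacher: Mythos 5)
Your proposal is correct and follows essentially the same route as the paper's proof: induction on the elaboration derivation with inner case analysis on the source step, Reverse Value Monotonicity plus the Substitution Lemma and target evaluation-context lifting for the root reductions, canonical-forms reasoning to rule out $\tdead$-wrapped values at primitive and beta redexes, and the same two-way split of the \tchkdown case. The one place you diverge is the \tchkdown value sub-case, where you invoke the induction hypothesis on the arm's derivation \emph{after} substituting the scrutinee value --- that substituted derivation is not a structural sub-derivation, so you would need induction on derivation height together with a height-preserving substitution lemma; the paper sidesteps this by directly case-analysing the shape of $\evalctx$ (let, conditional, application) and discharging each surrounding redex explicitly, reusing the arguments from the corresponding top-level cases.
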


\begin{proof}
  Similar to the proof of Theorem~\ref{theorem:consistency}, using
  adapted versions of the lemmas used by Dunfield~\cite{Dunfield2012} and
  Lemma~\ref{lemma:rev:valmonot}.
  Again, details can be found in the accompanying report~\cite{rsc:appendix}.
\end{proof}

\section{Phase 2:  Verify}\label{sec:ref-checking}

At the end of the first phase, we have elaborated the source with
value based overloading into a classically well-typed target with
conventional typing features and $\tdead$-casts which are really
assertions that explicate the \emph{trust assumptions} made to 
type the source.
Thanks to Theorems~\ref{theorem:consistency} and \ref{theorem:rev:consistency} 
we know the semantics of the target are equivalent to the source. 
Thus, to verify the source, all that remains is to prove that the
target will not ``go wrong'', that is to prove that the $\tdead$-casts
are indeed never executed at run-time.

One advantage of our elaboration scheme is that at this point 
\emph{any} program analysis for ML-like languages (\ie supporting 
products, sums, and first class functions) can be applied to 
discharge the $\tdead$-cast~\cite{cousotcousot77}: as long as 
the target is safe, the consistency theorems guarantee that 
the source is safe. 
In our case, we choose to instantiate the second phase with 
\emph{refinement types} as they:
(1)~are especially well 
    suited to handle higher-order polymorphic functions, 
    like @minIndex@ from Figure~\ref{fig:reduce},
(2)~can easily express other correctness 
    requirements, \eg array bounds safety, thereby allowing
    us to establish not just type safety but richer correctness 
    properties, and,
(3)~are automatically inferred via the abstract interpretation 
    framework of Liquid Typing~\cite{LiquidPLDI08}.
Next, we recall how refinement typing works to show how $\tdead$-cast
checking can be carried out, and then present the end-to-end soundness
guarantees established by composing the two phases.

\subsection{Refinement Type-checking}

We present a brief overview of refinement typing as 
the target language falls under the scope of existing 
refinement type systems~\cite{Knowles10}, which can, 
after accounting for $\tdead$-casts, be reused \emph{as is} 
for the second phase.
Similarly, we limit the presentation to \emph{checking}; 
\emph{inference} follows directly from Liquid Type inference~\cite{LiquidPLDI08}.
Figure~\ref{fig:secondphase} summarizes the refinement system. 
The type-checking judgment is ${\lqcheck{\lqenv}{\texpr}{\rtype}}$, where
\textit{type environment} $\lqenv$ is a sequence of bindings 
of variables $\evar$ to refinement types $\rtype$ and 
\textit{guard predicates}, which encode control flow 
information gathered by conditional checks.
As is standard~\cite{Knowles10} each primitive constant 
$\vconst$ has a refined type $\tconst$, and a variable 
$\evar$ with type $\rtype$ is typed as $\singleton{\rtype}{\evar}$
which is $\reftp{\tprim}{\vv = \evar}$ 
if $\rtype$ is a basic type $\tprim$ and $\rtype$ otherwise.

%% Toggle between use of kvars or not
% \def\kvars{1}

\begin{figure*}[tb]
  \judgementHead{Refined Typechecking}{\lqcheck{\lqenv}{\texpr}{\rtype}}
\begin{mathpar}
  \inferrule*[left=\lqchksub]
  {
    \lqcheck{\lqenv}{\texpr}{\rtype_1} \\
    \lqsubtype{\lqenv}{\rtype_1}{\ifdef{\kvars}{\ltype_2}{\rtype_2}} \\
    \ifdef{\kvars}{\lqwf{\lqenv}{\ifdef{\kvars}{\ltype_2}{\rtype_2}}}{\relax}
  }
  {\lqcheck{\lqenv}{\texpr}{\ifdef{\kvars}{\ltype_2}{\rtype_2}}}
\and
  \inferrule[\lqchkconst]
  {}
  {\lqcheck{\lqenv}{\vconst}{\tconst}}
\\
  \inferrule*[left=\lqchkvar]
    {\envbinding{\evar}{\rtype} \in \lqenv }
    {\lqcheck{\lqenv}{\evar}{\singleton{\rtype}{\evar}}}
\and
  \inferrule*[left=\lqchkletin]
  {
    \lqcheck{\lqenv}{\texpr_1}{\rtype_1} \\
    \lqcheck{\lqstateext{\evar}{\rtype_1}}{\texpr_2}{\ifdef{\kvars}{\ltype_2}{\rtype_2}}   \\
    \ifdef{\kvars}{\lqwf{\lqenv}{\ifdef{\kvars}{\ltype_2}{\rtype_2}}}{\relax}
  }
  {\lqcheck{\lqenv}{\eletin{\evar}{\texpr_1}{\texpr_2}}{\ifdef{\kvars}{\ltype_2}{\rtype_2}}}
\and
  \inferrule
  [\lqchkite]
  {
    \lqcheck{\lqenv}{\texpr}{\tbool} \\
    \lqcheck{\grdext{\lqenv}{\texpr}}{\texpr_1}{\ifdef{\kvars}{\ltype}{\rtype}} \\
    \lqcheck{\grdext{\lqenv}{\neg\texpr}}{\texpr_2}{\ifdef{\kvars}{\ltype}{\rtype}} \\
    \ifdef{\kvars}{\lqwf{\lqenv}{\ifdef{\kvars}{\ltype}{\rtype}}}{\relax}
  }
  {\lqcheck{\lqenv}{\ite{\texpr}{\texpr_1}{\texpr_2}}{\ifdef{\kvars}{\ltype}{\rtype}}}
\and
  \inferrule
  [\lqchklambda]
  { 
    \lqcheck{\lqstate{\lqenvext{\evar}{\ifdef{\kvars}{\ltype_x}{\rtype_x}}}{\lqgrd}}{\texpr}{\ifdef{\kvars}{\ltype}{\rtype}} \\
    \ifdef{\kvars}{\lqwf{\lqenv}{\tfun{\ifdef{\kvars}{\ltype_x}{\rtype_x}}{\ifdef{\kvars}{\ltype}{\rtype}}}}{\relax}
  }
  {\lqcheck{\lqenv}{\elambda{\evar}{\texpr}{}}{\tfun{\ifdef{\kvars}{\ltype_x}{\rtype_x}}{\ifdef{\kvars}{\ltype}{\rtype}}}}
\and
  \inferrule
  [\lqchkapp]
  {
    \lqcheck{\lqenv}{\texpr_1}{\tfun{\rtype_x}{\rtype}}  \\
    \lqcheck{\lqenv}{\texpr_2}{\rtype_x}
  }
  {\lqcheck{\lqenv}
         {\app{\texpr_1}{\texpr_2}}
         {\appsubst{\rsubst{\texpr_2}{\evar}}{\rtype}}
  }
\and
  \inferrule
  [\lqchkpair]
  {\foralli{\kindex \in \{1,2\}}{\lqcheck{\lqenv}{\texpr_{\kindex}}{\rtype_{\kindex}}}}
  {\lqcheck{\lqenv}
         {\pair{\texpr_1}{\texpr_2}}
         {\tprod{\rtype_1}{\rtype_2}}
  }
\and
  \inferrule
  [\lqchkproj]
  {\lqcheck{\lqenv}{\texpr}{\tprod{\rtype_1}{\rtype_2}}}
  {\lqcheck{\lqenv}
           {\proj{\kindex}{\texpr}}
           {\rtype_{\kindex}}
  }
\and
  \inferrule
  [\lqchkinj]
  {\lqcheck{\lqenv}{\texpr}{\rtype_{\kindex}}}
  {\lqcheck{\lqenv}
           {\inj{\kindex}{\texpr}}
           {\tsum{\rtype_1}{\rtype_2}}
  }
\and
  \inferrule
  [\lqchkcase]
  {
    \lqcheck{\lqenv}{\texpr}{\tsum{\rtype_1}{\rtype_2}}       \\
    \lqcheck{\lqenvext{\evar_1}{\rtype_1}}{\texpr_1}{\ifdef{\kvars}{\ltype}{\rtype}}  \\
    \lqcheck{\lqenvext{\evar_2}{\rtype_2}}{\texpr_2}{\ifdef{\kvars}{\ltype}{\rtype}}  \\
    \ifdef{\kvars}{\lqwf{\lqenv}{\ifdef{\kvars}{\ltype}{\rtype}}}{\relax}
  }
  {\lqcheck{\lqenv}
    {\casematch{\texpr}{\evar_1}{\texpr_1}{\evar_2}{\texpr_2}}
    {\ifdef{\kvars}{\ltype}{\rtype}}
  }
\end{mathpar}

\vspace{\judgevspace}
\judgementHead{Refinement Subtyping}{\lqsubtype{\lqenv}{\rtype_1}{\rtype_2}}
\begin{mathpar}
  \inferrule*[left=\lqsubbase]
  {\validgrdimp{\lqenv}{\pred}{\pred'}}
  {\lqsubtype{\lqenv}{\reftp{\tprim}{\pred}}{\reftp{\tprim}{\pred'}}}
\and
  \inferrule*[left=\lqsubfun]
  {
    \lqsubtype{\lqenv}{\rtype_{\evar}'}{\rtype_{\evar}} \\
    \lqsubtype{\lqenv,\envbinding{\evar}{\rtype_{\evar}'}}{\rtype}{\rtype'}
  }
  {\lqsubtype{\lqenv}
    {\tfun{(\evar:\rtype_{\evar})}{\rtype}}
    {\tfun{(\evar:\rtype_{\evar}')}{\rtype'}}
  }
\end{mathpar}
\caption{Refined Type-checking}
\label{fig:secondphase}
\end{figure*}

\subparagraph*{Checking $\tdead$-casts.}
The refinement system verifies $\tdead$-casts by treating them as 
special function calls, \ie discharging them via the application 
rule \lqchkapp.
Formally, $\deadcast{\type}{\typeb}{\texpr}$ is treated as call to:
$$
\deadcastname{\type}{\typeb} :: \tfun{\fbot{\elabtype{\type}}}{\fbot{\elabtype{\typeb}}}
$$
The notation $\elabtype{\cdot}$ denotes the elaboration of \srclang types to \tgtlang types~\cite{Dunfield2012}: 
$$    \elabtype{\tprim}               \doteq \tprim 
\quad \elabtype{\tand{\type}{\typeb}} \doteq \tprod{\elabtype{\type}}{\elabtype{\typeb}} 
\quad \elabtype{\tor{\type}{\typeb}}  \doteq \tsum{\elabtype{\type}}{\elabtype{\typeb}}
\quad \elabtype{\tfun{\type}{\typeb}} \doteq \tfun{\elabtype{\type}}{\elabtype{\typeb}} $$
The meta-function $\fbot{\rtype} \doteq \ftx{\rtype}{\rfalse}$ where: 
\[
\begin{tabular}{>{$}l<{$} @{\quad} >{$}c<{$} @{\quad} >{$}l<{$} @{\qquad} >{$}l<{$} @{\quad} >{$}c<{$} @{\quad} >{$}l<{$} }
  \ftx{\tprim}{r}                 & \doteq & \reftp{\tprim}{r}                              & \ftx{\tsum{\rtypeb}{\rtype}}{r}  & \doteq & \tsum{\ftx{\rtypeb}{r}}{\ftx{\rtype}{r}} \\
  \ftx{\tfun{\rtypeb}{\rtype}}{r} & \doteq & \tfun{\ftx{\rtypeb}{\neg r}}{\ftx{\rtype}{r}}  & \ftx{\tprod{\rtypeb}{\rtype}}{r} & \doteq & \tprod{\ftx{\rtypeb}{r}}{\ftx{\rtype}{r}} 
\end{tabular}
\]

Returning to rule \lqchkapp for $\tdead$-casts and inverting, 
expression $\texpr$ gets assigned a refinement type $\rtype$. 
For simplicity we assume this is a base type $\tprim$. 
Due to \lqchksub we get the subtyping constraint: 
$
{\lqsubtype{\lqenv}{\reftp{\tprim}{\pred}}{\reftp{\tprim}{\rfalse}}}
$,
which generates the VC:
$
{\validgrdimp{\lqenv}{\pred}{\rfalse}}
$.
This holds if the environment combined 
with the refinement in the left-hand side is inconsistent, which 
means that the gathered flow conditions are infeasible, hence 
dead-code \cite{Knowles10}.
Thus, the refinements statically ensure that the 
specially marked $\tdead$ values are \emph{never created at run-time}. 
As only $\tdead$ terms cause execution to get stuck, the refinement 
verification phase ensures that the source is indeed type safe.

\subparagraph*{Conditional Checking.}
\lqchkite and \lqchkcase check each branch of a conditional or case
splitting statement, by enhancing the environment with a guard ($\texpr$ or
$\neg\texpr$) or the right binding ($\varbinding{\evar}{\rtype_1}$ or
$\varbinding{\evar}{\rtype_2}$), that encode the boolean test performed at the
condition, or the structural check at the pattern matching, respectively. Crucially,
this allows the use of ``tests'' inside the code to statically verify
$\tdead$-casts and other correctness properties.
The other rules are standard and are described in the refinement
type literature.

\subparagraph*{Correspondence of Elaboration and Refinement Typing.}
The following result establishes the fact that the type $\type$ assigned to a
source expression $\expr$ by elaboration and the type $\rtype$ assigned by
refinement type-checking to the elaborated expression $\texpr$ are connected
with the relation: $\elabtype{\type} = \stripref{\rtype}$, where 
$\stripref{\rtype}$ is merely a (recursive) elimination of all refinements
appearing in $\rtype$. The notation $\elabtype{\tcenv} = \stripref{\lqenv}$ means 
that for each binding
$\envbinding{\evar}{\type} \in \tcenv$ there exists 
$\envbinding{\evar}{\rtype} \in \lqenv$, such that $\elabtype{\type} =
\stripref{\rtype}$, and vice versa.

\begin{lemma}[Correspondence]
If 
$\trans{\tcenv}{\expr}{\type}{}{\texpr}$, 
$\lqcheck{\lqenv}{\texpr}{\rtype}$ and
$\elabtype{\tcenv} = \stripref{\lqenv}$,
then $\elabtype{\type} = \stripref{\rtype}$.
\end{lemma}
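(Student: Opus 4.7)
The plan is to prove the lemma by induction on the elaboration derivation $\trans{\tcenv}{\expr}{\type}{}{\texpr}$ (treating the mode-converting rules \tchktoplevel and \tchkweaken as transparent wrappers around their premises). The key observation is that, modulo the subsumption rule \lqchksub, the syntactic form of the target term $\texpr$ produced by each elaboration rule determines the last syntax-directed rule of the refinement derivation $\lqcheck{\lqenv}{\texpr}{\rtype}$, so the overall shape of $\rtype$ can be recovered by inversion.

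Before starting the main induction, I would establish a small auxiliary fact: whenever $\lqsubtype{\lqenv}{\rtype_1}{\rtype_2}$, we have $\stripref{\rtype_1} = \stripref{\rtype_2}$. This is immediate by induction on the subtyping derivation, since \lqsubbase and \lqsubfun (together with the analogous congruences for products and sums) only manipulate logical refinements while preserving the basic type skeleton. Consequently, every use of \lqchksub in the refinement derivation can be elided for the purposes of computing $\stripref{\rtype}$, and I may assume without loss of generality that the last rule applied is syntax-directed.

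The induction then proceeds case by case. The base cases \tchkconst and \tchkvar are immediate: for constants both systems assign the same constant type (up to stripping), and for variables the environment correspondence $\elabtype{\tcenv} = \stripref{\lqenv}$ together with $\stripref{\singleton{\rtype}{\evar}} = \stripref{\rtype}$ closes the case. For the congruence rules \tchkletin, \tchkite, \tchklambda, and \tchkapp, the target term has the same outer shape as the source, so the matching refinement rule is uniquely determined; the IH applies to the subderivations and the result follows because both $\elabtype{\cdot}$ and $\stripref{\cdot}$ commute with the relevant type constructors. For \tchkinterintro the target is a pair, forcing \lqchkpair and a product refinement type whose components match componentwise by the IH; \tchkinterelim forces \lqchkproj, and the IH on the inner derivation pins down the inner refinement type to be a product of the right shape, so projecting to $\rtype_\kindex$ strips to $\elabtype{\type_\kindex}$. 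The union cases \tchkup and \tchkdown are analogous, paired with \lqchkinj and \lqchkcase respectively.

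The most delicate case is \tchkdead: here the target is $\deadcast{\type'}{\typeb}{\texpr'}$, which the refinement system treats as an application of a primitive of type $\tfun{\fbot{\elabtype{\type'}}}{\fbot{\elabtype{\typeb}}}$, so the result type is $\fbot{\elabtype{\typeb}}$. Closing this case requires a short auxiliary induction on the structure of $\typeb$ showing that $\stripref{\fbot{\elabtype{\typeb}}} = \elabtype{\typeb}$, by unfolding $\fbot{\cdot}$ as $\ftx{\cdot}{\rfalse}$ and observing that $\ftx{\cdot}{\cdot}$ only rewrites refinement predicates while leaving basic structure intact. I expect this to be the main obstacle, together with the preliminary subtyping-stripping lemma: the other cases amount to a direct correspondence between the structure-producing elaboration rules and the structure-consuming refinement rules, whereas the deadcast case forces us to trace a basic type through the three translations $\elabtype{\cdot}$, $\fbot{\cdot}$, and $\stripref{\cdot}$ and verify that they compose to the identity on skeletons.
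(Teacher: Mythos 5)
Your proposal is correct and follows essentially the same route as the paper's proof: an induction that pairs each elaboration rule with the corresponding refinement-typing rule (\tchkconst/\lqchkconst, \tchkinterintro/\lqchkpair, \ldots, \tchkdead/\lqchkapp), closing the congruence cases because $\elabtype{\cdot}$ and $\stripref{\cdot}$ commute with the type constructors, and closing the $\tdead$-cast case via the identity $\stripref{\fbot{\elabtype{\typeb}}} = \elabtype{\typeb}$. Your explicit auxiliary observation that subtyping preserves the stripped skeleton, which justifies eliding \lqchksub, is a detail the paper leaves implicit, and is a welcome and correct addition rather than a divergence.
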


\begin{proof}
  By induction on pairs of derivations:
  ${\trans{\tcenv}{\expr}{\type}{}{\texpr}}$ and 
  ${\lqcheck{\lqenv}{\texpr}{\rtype}}$.
  Details of this proof can be found in the extended version of this
  paper~\cite{rsc:appendix}.
\end{proof}

The target language satisfies a progress and preservation
theorem~\cite{Knowles10}:

\begin{theorem}[Refinement Type Safety]\label{theorem:targettypesafe}
If $\infertarget{\emp}{\texpr}{\rtype}$ then either $\texpr$ is a value or
there exists $\texpr'$ such that $\steps{\texpr}{\texpr'}$ and
$\infertarget{\emp}{\texpr'}{\rtype}$.
\end{theorem}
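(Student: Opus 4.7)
The plan is to establish Theorem~\ref{theorem:targettypesafe} by the standard \emph{progress + preservation} decomposition for refinement type systems, following Knowles and Flanagan~\cite{Knowles10}, with extra care taken for the new $\tdead$-cast form, whose semantics deliberately allows $\deadcast{\type}{\typeb}{\tval}$ to be a value while forbidding it as the argument of a primitive application in rule \optgtappa.

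First I would prove \emph{preservation}: if $\infertarget{\lqenv}{\texpr}{\rtype}$ and $\steps{\texpr}{\texpr'}$ then $\infertarget{\lqenv}{\texpr'}{\rtype}$. The proof proceeds by induction on the typing derivation with inversion on the step. The routine cases (\optgtcondtrue, \optgtcondfalse, \optgtproj, \optgtcase, \optgtlet) follow by unfolding the typing rule, \lqchksub, and the structural rules. The core of the argument is a refinement substitution lemma — if $\infertarget{\lqenvext{\evar}{\rtype_x}}{\texpr}{\rtype}$ and $\infertarget{\lqenv}{\tval}{\rtype_x}$ then $\infertarget{\lqenv}{\appsubst{\esubst{\tval}{\evar}}{\texpr}}{\appsubst{\rsubst{\tval}{\evar}}{\rtype}}$ — needed for \optgtbeta and \optgtlet. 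This requires the usual companion lemmas (weakening, and preservation of subtyping under substitution, since $\tval$ may appear inside refinements through rules \lqchkvar and \lqchkapp). Rule \optgtevalctx is handled via a replacement lemma that decomposes typing of $\tevalctxarg{\texpr}$ into a type for $\texpr$ and recomposes it with $\texpr'$.

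Next I would prove \emph{progress}: if $\infertarget{\emp}{\texpr}{\rtype}$ then $\texpr$ is either a value or steps. The proof is by induction on the typing derivation, using canonical forms lemmas stating that closed values at a sum type are injections (possibly under a $\tdead$-wrapper), at a product type are pairs, and at a function type are either $\lambda$-abstractions, primitive constants, or $\tdead$-cast wrappers. The delicate case is \lqchkapp when the function is a primitive constant and the argument is $\deadcast{\type'}{\typeb}{\tval}$, the single configuration excluded by \optgtappa. I would rule this out by observing that, by the typing of $\tdead$ in \lqchkapp, such a value must inhabit $\fbot{\elabtype{\typeb}}$, whose refinement is $\rfalse$; a closed-value consistency lemma for base refinement types then shows this cannot occur.

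The main obstacle is precisely this interaction between $\tdead$-casts and refinement inhabitation. One needs a standing assumption on the logic — that base types refined by $\rfalse$ are uninhabited by closed values — and a canonical-forms argument lifting this fact through sums, products, and arrows so that no primitive application in a well-typed closed term is ever blocked by a $\tdead$-wrapped argument. Once this lemma is in place, both the critical progress case and the SMT-mediated subtyping obligations in preservation discharge as in~\cite{Knowles10}, and the two results compose to yield the stated single-step safety theorem.
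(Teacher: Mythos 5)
Your proposal is correct in outline, but it takes a genuinely different route from the paper: the paper does not prove Theorem~\ref{theorem:targettypesafe} at all, it simply imports it, citing Vazou \etal~\cite{Vazou2014} as having established progress and preservation ``for a similar language.'' What you have written is essentially a reconstruction of what that imported proof must contain, specialized to \tgtlang. The comparison is instructive. The paper's move buys brevity and modularity -- the whole point of the two-phase design is that the target is a conventional refinement-typed ML-like language, so standard soundness results should apply off the shelf -- but it glosses over the one place where \tgtlang is \emph{not} standard, namely that $\deadcast{\type}{\typeb}{\tval}$ is a value and that \optgtappa carries a side condition excluding such values. Your sketch correctly isolates this as the only novel obligation: a closed, well-typed value cannot be a $\tdead$-wrapper, because typing it forces its payload into $\fbot{\elabtype{\type}}$, whose base refinement is $\rfalse$ and hence uninhabited by closed values. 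The paper does in fact rely on exactly this fact elsewhere -- it appears in the appendix as the corollary that $\emp \not\vdash \deadcast{\type}{\typeb}{\texpr} :: \rtype$, used in the Two-Phase Soundness proof -- but states it without proof, so your identification of it as the load-bearing lemma, and of the consistency-of-the-logic assumption it rests on, is a genuine contribution beyond what the paper writes down. Two details worth tightening if you carry this out: in the \lqchkapp preservation case the dependent result type $\appsubst{\rsubst{\texpr_2}{\evar}}{\rtype}$ changes syntactically as $\texpr_2$ steps, so you need the usual Knowles--Flanagan argument that the stepped type is a subtype (or that typing is closed under this change); and your canonical-forms lemma for products must account for the fact that $\pair{\texpr_1}{\texpr_2}$ is a value in \tgtlang even when its components are arbitrary expressions.
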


\begin{proof}
  Given by Vazou \etal~\cite{Vazou2014} for a similar language.
\end{proof}

\subsection{Two-Phase Type Safety}\label{sec:safety}

We say that a source term $\expr$ is \emph{well two-typed} if 
there exists a source type $\type$, target term $\texpr$ 
and target (refinement) type $\rtype$ such that:
(1)~$\trans{\emp}{\expr}{\type}{}{\texpr}$, and,
(2)~$\lqcheck{\emp}{\texpr}{\rtype}$.
That is, $\expr$ is well two-typed if it elaborates 
to a refinement typed target.
The Consistency Theorems
\ref{theorem:consistency} and \ref{theorem:rev:consistency}, along with 
the Safety Theorem \ref{theorem:targettypesafe}, yield 
end-to-end soundness: well two-typed terms do not 
get stuck, and step to well two-typed terms.

\begin{theorem}[Two-Phase Soundness]\label{theorem:twophase-soundness}
  If $\expr$ is well two-typed then, either $\expr$ is a value, or 
  there exists $\expr'$ such that: 
  \begin{enumerate}[label={(\arabic*)}]
  \inditem \textbf{(Progress)} $\steps{\expr}{\expr'}$
  \inditem \textbf{(Preservation)} $\expr'$ is well two-typed.
  \end{enumerate}
\end{theorem}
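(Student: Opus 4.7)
The plan is to derive progress and preservation for the source by leveraging target safety (Theorem~\ref{theorem:targettypesafe}) and bridging back to the source via the two consistency theorems. I would first unpack the well two-typed hypothesis to obtain a source type $\type$, an elaboration $\trans{\emp}{\expr}{\type}{}{\texpr}$, and a refinement type $\rtype$ with $\lqcheck{\emp}{\texpr}{\rtype}$. If $\expr$ is a value we are done, so assume $\expr$ is not a value; we must then exhibit an $\expr'$ satisfying both clauses.

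For \textbf{progress}, I would apply Theorem~\ref{theorem:targettypesafe} to $\texpr$, which leaves two cases. If $\texpr$ is itself a target value $\tval$, then Value Monotonicity (Lemma~\ref{lemma:valmonot}) yields a source value $\val$ with $\stepsmany{\expr}{\val}$; since $\expr$ is not a value, this reduction is non-empty, so $\steps{\expr}{\expr'}$ for some $\expr'$. Otherwise $\steps{\texpr}{\texpr''}$ for some $\texpr''$, and Consistency (Theorem~\ref{theorem:consistency}) yields an $\expr^\star$ with $\stepsmany{\expr}{\expr^\star}$ and $\trans{\emp}{\expr^\star}{\type}{}{\texpr''}$. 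If that many-step reduction contains at least one step we are done; otherwise $\expr^\star = \expr$ and we iterate the argument on $\texpr''$, which remains refinement-typed by the preservation clause of Theorem~\ref{theorem:targettypesafe}.

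The \emph{main obstacle} lies in justifying this iteration: forward Consistency permits the target to take ``administrative'' steps (projections, case splits, or reductions inside a $\tdead$-cast) that correspond to no source step, so naively the catch-up loop could fail to terminate. To close the gap I would exhibit a well-founded measure on target terms that strictly decreases along such administrative steps but is untouched by genuine $\beta$, conditional, and primitive redexes. Since $\texpr$ remains refinement-typed throughout via repeated application of target preservation, the iteration cannot continue administratively forever, so it must eventually either reach a target value (handled by the first case above) or take a target step whose matching source reduction is non-empty, giving the desired $\expr'$.

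For \textbf{preservation}, given a single source step $\steps{\expr}{\expr'}$ I would apply Reverse Consistency (Theorem~\ref{theorem:rev:consistency}) to obtain a target $\texpr'$ with $\trans{\emp}{\expr'}{\type}{}{\texpr'}$ and $\stepsmanyone{\texpr}{\texpr'}$. The preservation clause of Theorem~\ref{theorem:targettypesafe}, applied iteratively along this non-empty target reduction chain, yields $\lqcheck{\emp}{\texpr'}{\rtype}$. Hence $\expr'$ is again well two-typed, witnessed by $\type$, $\texpr'$, and $\rtype$, which completes the sketch.
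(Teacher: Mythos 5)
Your preservation half coincides with the paper's: given a source step, Reverse Consistency (Theorem~\ref{theorem:rev:consistency}) supplies $\texpr'$ with $\trans{\emp}{\expr'}{\type}{}{\texpr'}$ and a non-empty target reduction, and iterating the preservation clause of Theorem~\ref{theorem:targettypesafe} along that chain gives $\lqcheck{\emp}{\texpr'}{\rtype}$, so $\expr'$ is well two-typed. The progress half, however, departs from the paper and contains a genuine gap. The paper proves source progress \emph{directly}, by induction on the pair of derivations $\trans{\emp}{\expr}{\type}{}{\texpr}$ and $\lqcheck{\emp}{\texpr}{\rtype}$: each case either exhibits a value or a source step, using the canonical-forms assumptions together with the corollary that $\emp \not\vdash \deadcast{\type}{\typeb}{\texpr} :: \rtype$ to rule out the stuck configurations (a primitive or $\beta$ redex whose argument elaborates to a $\tdead$-cast). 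You instead try to \emph{import} progress from the target via Theorem~\ref{theorem:targettypesafe} and forward Consistency (Theorem~\ref{theorem:consistency}), and you correctly identify the obstacle yourself: Consistency only yields $\stepsmany{\expr}{\expr^\star}$, possibly with zero source steps, so you are forced into a catch-up iteration that is sound only if the target cannot take infinitely many ``administrative'' steps while the source stands still.

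That termination claim is the gap, and it is not routine. Target values include $\pair{\texpr_1}{\texpr_2}$ for \emph{arbitrary} expressions $\texpr_1,\texpr_2$, so a case-elimination step from $\casematch{\inj{k}{\tval}}{\evar_1}{\texpr_1}{\evar_2}{\texpr_2}$ to $\appsubst{\esubst{\tval}{\evar_k}}{\texpr_k}$ can duplicate $\tval$ --- and with it any administrative redexes it contains --- whenever $\evar_k$ occurs several times in $\texpr_k$; a naive size or redex-count measure therefore does not decrease, and you would need an actual normalization argument for the administrative fragment, which the paper nowhere provides. You would additionally need to strengthen Consistency to record that every non-administrative target step ($\beta$, conditional, let, or primitive reduction) is matched by a \emph{non-empty} source reduction; the theorem as stated does not say this, even though its proof happens to establish it case by case. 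Supplying both of these auxiliary results amounts to redoing most of the work of the paper's direct induction, so I would restructure the progress argument along the paper's lines: induct on the paired derivations, and use Value Monotonicity (Lemma~\ref{lemma:valmonot}), the canonical-forms assumptions, and the non-typability of $\tdead$-casts to discharge each case.
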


\begin{proof}
  By induction on pairs of derivations:
  ${\trans{\tcenv}{\expr}{\type}{}{\texpr}}$ and 
  ${\lqcheck{\lqenv}{\texpr}{\rtype}}$.
  Details are to be found in the extended version of this
  paper~\cite{rsc:appendix}.
\end{proof}

\section{Related Work}\label{sec:related}

We focus on the highlights of prior work relevant to the key points of our
technique: static types for dynamic
languages, intersections and union types, and refinement types.
 
\subparagraph*{Types for Dynamic Functional Languages.}
\emph{Soft typing}~\cite{Car91} incorporates static analysis to
statically type dynamic languages: whenever a program cannot be proven
safe statically, it is not rejected, but instead runtime checks are
inserted.
Henglein and Rehof~\cite{Hen95} build up on this work by extending
soft typing's monomorphic typing to polymorphic coercions and
providing a translation of Scheme programs to ML.
These works foreshadow the notion of \emph{gradual typing}
\cite{Sie07} that allows the programmer to control the boundary
between static and dynamic checking depending on the trade-off between
the need for static guarantees and deployability. 
Returning to purely static enforcement, Tobin-Hochstadt \etal~\cite{Tob08,typedracket}
formalize the support for type tests as \emph{occurrence typing} and
extend it to an inter-procedural, higher-order setting by introducing
propositional \emph{latent predicates} that reflect the result of
tests in Typed Racket function signatures.

\subparagraph*{Types for Dynamic Imperative Languages.}
Thiemann \cite{Thiemann05} and Anderson \etal~\cite{Drossopoulou05}
describe early attempts towards static type systems for JavaScript,
and Furr \etal~\cite{Fur09a} present DRuby, a tool for type inference
for Ruby scripts.
However, these systems do not handle value-based
overloading (like \tsc, DRuby allows overloaded specifications for
external functions).
\emph{Flow typing}~\cite{lambdajs} and TeJaS~\cite{Lerner13} account
for tests using flow analysis, bringing occurrence typing to the
imperative \jsc setting,
but, unlike our approach, they restrict themselves to
a \emph{fixed} set of type-testing idioms (\eg @typeof@), precluding
\emph{general} value-based overloading \eg as in @reduce@ from
Figure~\ref{fig:reduce}.

\subparagraph*{Logics for Dynamic Languages.}
The intuition of expressing subtyping relations as logical implication
constraints and using SMT solvers to discharge these constraints
allows for a more extensive variety of typing idioms. 
Bierman \etal~\cite{dminor}
investigate semantic subtyping in a first order language with
refinements and type-test expressions.
In \textit{nested refinement types}~\cite{NestedPOPL12}, 
the typing relation itself is a predicate in the refinement logic and 
a feature-rich language of predicates accounts for
heavily dynamic idioms, like run-time type tests, value-indexed
dictionaries, polymorphism and higher order functions.
While program logics allow the use of arbitrary tests to establish typing,
the circular dependency between values and basic types leads to two 
significant problems in theory and practice.
First, the circular dependency complicates the \emph{metatheory} which
makes it hard to add extra (basic) typing features (\eg polymorphism,
classes) to the language.
Second, the circular dependency complicates the \emph{inference} of 
types and refinements, leading to significant annotation overheads which
make the system difficult to use in practice.
In contrast, two-phase typing allows arbitrary 
type tests while enabling the trivial composition of soundness proofs and 
inference algorithms.

\subparagraph*{Intersection and Union Types.}
Central to our elaboration phase are intersection 
and union types: Pierce~\cite{Pierce91} indicates the connection 
between unions and intersections with sums and products, that
is the basis of Dunfield's elaboration scheme~\cite{Dunfield2012} on 
which we build.
However, Dunfield studies \emph{static} source languages 
that use \emph{explicit} overloading via a merge 
operator~\cite{forsythe}.  
In contrast, we target \emph{dynamic} source languages 
with implicit value based overloading, and hence must 
account for ``ill-typed'' terms via \tdead-casts
discharged via the second phase refinement check.
Castagna \etal~\cite{Castagna92} describe a $\lambda \&$-calculus, where 
functions are overloaded by combining several different 
branches of code. The branch to be executed is determined 
at run-time by using the arguments' typing information.  
This technique resembles the code duplication that 
happens in our approach, but overload resolution 
(\ie deciding which branch is executed) is 
determined at runtime whereas we do so statically.

\subparagraph*{Refinement Types.}
DML~\cite{XiPfenning99} is an early refinement type system composing
ML's types with a decidable constraint system.
\emph{Hybrid type checking}~\cite{Knowles10} uses arbitrary
refinements over basic types. A static type system verifies
basic specifications and more complex ones are defered to dynamically
checked contracts, since the specification logic is
statically undecidable. In these cases, the source language is well
typed (ignoring refinements), and lacks intersections and unions.
Our second phase can use Liquid Types~\cite{LiquidPLDI08} to
infer refinements using predicate abstraction.

\section{Conclusions and Future Work}\label{sec:conclusion}

In this paper, we introduce two-phased typing, a novel framework 
for analyzing dynamic languages where value-based overloading is 
ubiquitous. 
The advantage of our approach over previous methods is that, unlike
purely type-based approaches~\cite{typedracket}, we are not limited to
a fixed set of tag- or type- tests, and unlike purely program
logic-based approach~\cite{NestedPOPL12}, we can decouple
reasoning about basic typing from values, thereby enabling inference.

Hence, we believe two-phased typing provides an ideal foundation for
building expressive and automatic analyses for imperative scripting 
languages like \jsc. 
However, this is just the first step; much remains to achieve this goal. 
In particular we must account for the imperative features of the 
language. We believe that decoupling makes it possible to address
this problem by applying various methods for tracking mutation and 
aliasing~\cite{Zibin2007} in the \emph{first phase},
and we intend to investigate this route in future
work to obtain a practical verifier for \tsc.

\bibliography{main}

\appendix

\renewcommand{\thesubsection}{\Alph{subsection}}

% Make the numbering of equations nest in enumerations
\counterwithin{equation}{enumi}
\counterwithin{enumii}{enumi}
\counterwithin{enumiii}{enumii}

\renewcommand{\labelenumi}{$\bullet$}
\renewcommand{\labelenumii}{$\triangleright$}
\renewcommand{\labelenumiii}{--}

\setcounter{theorem}{0}
\setcounter{assumption}{0}

\clearpage
\section*{Appendix}

\setlist{itemsep=1em, topsep=1em}

We now provide detailed versions of the proofs mentioned in the main part of the
paper. This part reuses the definitions of \S~3, \S~4 and \S~5, and is
structured in three main sections:
\begin{itemize}
  \item Assumptions (\ref{sec:assumptions})
  \item Lemmas (\ref{sec:lemmas})
  \item Theorems (\ref{sec:theorems})
\end{itemize}

Sections \ref{sec:assumptions} and \ref{sec:lemmas} build up to the main results:

\begin{itemize}
    \item Consistency and Reverse Consistency Theorems
      (\ref{theorem:consistency}, \ref{theorem:rev:consistency})
    \item Two-phase Safety Theorem (\ref{theorem:twophase-safety})
  \end{itemize}

For the remainder of the document we are going to use the plain version
of the elaboration relation, \ie without mode annotations:
$$
\trans{\tcenv}{\expr}{\type}{}{}{\texpr}
$$
The annotations on the judgment merely determine which rules are available at
type-checking. The majority of the proofs below involve induction over the
elaboration derivation, which is fixed once type-checking is complete, so the
annotations can be safely ignored.

In certain lemmas the reader is referred to Dunfield's techniques from his work
on the elaboration of intersection and union types~\cite{Dunfield2012}. The proofs 
there refer to a language similar but not exactly the same as ours. The main 
proof ideas, however, hold.

%Numbering refers to the appendix itself unless otherwise stated.

\section{Assumptions}\label{sec:assumptions}

\begin{assumption}[Primitive Constant Application] 
\IfAppendix{\label{assum:const:app:app}}{\label{assum:const:app}}
\IfAppendix{
If 
\begin{enumerate}[label={(\arabic*)}]
    \inditem $\trans{\emp}{\vconst}{\tfun{\type}{\typeb}}{}{\vconst}$,
  \inditem $\trans{\emp}{\val}{\type}{}{\tval}$,
  \inditem $\tval \not \equiv \deadcast{\cdot}{\type}{\cdot}$,
\end{enumerate}
then
\begin{itemize}
  \inditem $\steps{\app{\vconst}{\val}}{\primapp{\vconst}{\val}}$
  \inditem $\steps{\app{\vconst}{\tval}}{\primapp{\vconst}{\tval}}$ 
  \inditem $\trans{\emp}{\primapp{\vconst}{\val}}{\typeb}{}{\primapp{\vconst}{\tval}}$ 
\end{itemize}
}
{
  If 
  (1)~$\trans{\emp}{\vconst}{\tfun{\type}{\typeb}}{}{\vconst}$,
  (2)~$\trans{\emp}{\val}{\type}{}{\tval}$, and
  (3)~$\tval \not \equiv \deadcast{\cdot}{\type}{\cdot}$,
then
  (i)~$\steps{\app{\vconst}{\val}}{\primapp{\vconst}{\val}}$,
  (ii)~$\steps{\app{\vconst}{\tval}}{\primapp{\vconst}{\tval}}$, and 
  (iii)~$\trans{\emp}{\primapp{\vconst}{\val}}{\typeb}{}{\primapp{\vconst}{\tval}}$.
}
\end{assumption}

\begin{assumption}[Lambda Application] 
\IfAppendix{\label{assum:lambda:app:app}}{\label{assum:lambda:app}}
If 
\begin{enumerate}[label={(\arabic*)}]
    \inditem
      $\trans{\emp}{\elambda{\evar}{\expr}{}}
                   {\tfun{\type}{\typeb}}{}
                   {\elambda{\evar}{\texpr}{}}$,
  \inditem $\trans{\emp}{\val}{\type}{}{\tval}$,
  \inditem $\tval \not \equiv \deadcast{\cdot}{\type}{\cdot}$,
\end{enumerate}
then
\begin{itemize}
    \inditem $\steps{\app{\parens{\elambda{\evar}{\expr}{}}}{\val}}
                  {\appsubst{\esubst{\val}{\evar}}{\expr}}$
    \inditem $\steps{\app{\parens{\elambda{\evar}{\texpr}{}}}{\tval}}
                  {\appsubst{\esubst{\tval}{\evar}}{\texpr}}$
\end{itemize}
\end{assumption}

\begin{assumption}[Canonical Forms]\label{assumption:canonical:forms}
  \mbox{}
\begin{enumerate}[label={(\arabic*)}]

  \inditem If $\trans{\tcenv}{\elambda{\evar}{\expr}{}}{\tfun{\type}{\typeb}}{}{\tval}$ then
      \begin{itemize}
        \item $\tval \equiv \elambda{\evar}{\texpr}{}$ for some $\texpr$, or 
        \item $\tval \equiv \deadcast{\cdot}{\tfun{\type}{\typeb}}{\tval'}$ for
          some $\tval'$
      \end{itemize}

  \inditem If $\trans{\tcenv}{\vconst}{\type}{}{\tval}$ then
      \begin{itemize}
        \item $\tval \equiv \vconst$, or 
        \item $\tval \equiv \deadcast{\cdot}{\type}{\tval'}$ for
          some $\tval'$
      \end{itemize}

\end{enumerate}

\end{assumption}

\section{Auxiliary lemmas}\label{sec:lemmas}

\begin{lemma}[Multi-Step Source Evaluation Context]\label{lemma:src:evalctx}

If $\stepsmany{\expr}{\expr'}$ then $\stepsmany{\evalctxarg{\expr}}{\evalctxarg{\expr'}}$.
\end{lemma}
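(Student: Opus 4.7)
The plan is to prove this lemma by straightforward induction on the length of the multi-step reduction $\stepsmany{\expr}{\expr'}$, lifting the single-step congruence rule \opsrcevalctx to its reflexive-transitive closure. Since $\stepsmany{}{}$ is standardly defined as the reflexive, transitive closure of $\steps{}{}$, it suffices to handle the reflexive (zero-step) case and the inductive (one-step followed by rest) case.

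For the base case, if $\expr \equiv \expr'$ (zero steps), then trivially $\evalctxarg{\expr} \equiv \evalctxarg{\expr'}$, and the reflexivity of $\stepsmany{}{}$ gives the conclusion. For the inductive step, assume $\stepsmany{\expr}{\expr'}$ decomposes as $\steps{\expr}{\expr''}$ followed by $\stepsmany{\expr''}{\expr'}$. Applying rule \opsrcevalctx to the single step yields $\steps{\evalctxarg{\expr}}{\evalctxarg{\expr''}}$. The induction hypothesis, applied to the shorter sequence $\stepsmany{\expr''}{\expr'}$ with the same evaluation context $\evalctx$, gives $\stepsmany{\evalctxarg{\expr''}}{\evalctxarg{\expr'}}$. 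Composing these via transitivity of $\stepsmany{}{}$ produces $\stepsmany{\evalctxarg{\expr}}{\evalctxarg{\expr'}}$, as desired.

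There is no substantive obstacle here: the lemma is a routine congruence property whose only subtlety is ensuring that the formal definition of multi-step reduction used in the paper matches the inductive shape above (reflexive plus one-step-then-rest, or equivalently reflexive plus rest-then-one-step; either works symmetrically). The one thing worth a sentence in the write-up is that the evaluation context $\evalctx$ is fixed throughout the induction, so no generalization over $\evalctx$ or the term structure is required. An analogous lemma for the target language $\tgtlang$ follows immediately by the same argument, using \optgtevalctx in place of \opsrcevalctx, and will likely be needed in the consistency proofs in the appendix.
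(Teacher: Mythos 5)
Your proof is correct and matches the paper's intent: the paper simply defers to Dunfield's Lemma~7, which is exactly this routine induction on the length of the reduction sequence, lifting the single-step congruence rule \opsrcevalctx through the reflexive-transitive closure. Nothing further is needed.
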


\begin{proof}
  Based on Lemma 7 of Dunfield's elaboration~\cite{Dunfield2012}.
\end{proof}

\begin{lemma}[Multi-Step Target Evaluation Context]\label{lemma:tgt:evalctx}
  \begin{itemize} 
    \item If $\stepsmany{\texpr}{\texpr'}$ then $\stepsmany{\tevalctxarg{\texpr}}{\tevalctxarg{\texpr'}}$
    \item If $\stepsmanyone{\texpr}{\texpr'}$ then $\stepsmanyone{\tevalctxarg{\texpr}}{\tevalctxarg{\texpr'}}$
  \end{itemize}
\end{lemma}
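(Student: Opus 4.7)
The plan is to prove both parts by straightforward induction on the length of the reduction sequence, using the single-step evaluation context rule \optgtevalctx of Figure~\ref{fig:target:opsem} as the key inductive step. The target semantics explicitly closes single-step reduction under evaluation contexts, so lifting this to $\twoheadrightarrow$-style sequences is essentially a bookkeeping exercise.

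For the first bullet, I would induct on the number of steps in $\stepsmany{\texpr}{\texpr'}$. In the base case of zero steps, $\texpr \equiv \texpr'$ and hence $\tevalctxarg{\texpr} \equiv \tevalctxarg{\texpr'}$, which is vacuously a zero-step sequence. In the inductive case, the derivation has shape $\texpr \to \texpr'' \stepsmany{}{} \texpr'$; applying \optgtevalctx to the first step yields $\tevalctxarg{\texpr} \to \tevalctxarg{\texpr''}$, and the induction hypothesis supplies $\stepsmany{\tevalctxarg{\texpr''}}{\tevalctxarg{\texpr'}}$. Concatenating these two gives the required sequence for $\tevalctxarg{\texpr}$.

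For the second bullet, the argument is identical in structure, but the base case is a single step $\texpr \to \texpr'$, which \optgtevalctx directly lifts to $\tevalctxarg{\texpr} \to \tevalctxarg{\texpr'}$, i.e.\ a $\stepsmanyone{}{}$-sequence of length one. The inductive case proceeds exactly as before, and the one-or-more constraint is preserved because we are prepending at least one step.

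Really the only subtlety worth flagging is making sure the notion of evaluation context in the target (which, unlike the source, additionally permits reduction under $\deadcast{\type}{\typeb}{\cdot}$, under injections $\inj{k}{\cdot}$, inside projections, and inside the scrutinee of a \textsf{case}) is the same $\tevalctx$ used in the statement of \optgtevalctx; this is immediate from the definition given just after Figure~\ref{fig:tgtlang}. I expect no genuine obstacle here — this lemma is the target-language analogue of Lemma~\ref{lemma:src:evalctx} and its proof is directly parallel, only requiring the additional context formers to be covered uniformly by \optgtevalctx, which they are.
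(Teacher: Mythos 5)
Your proof is correct and matches the paper's intent: the paper simply says this lemma is ``similar to'' the source-language version (Lemma~\ref{lemma:src:evalctx}), which is in turn based on Dunfield's Lemma~7, i.e.\ exactly the induction on the length of the reduction sequence using the single-step context rule \optgtevalctx that you spell out. No divergence and no gap.
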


\begin{proof}
  Similar to proof of Lemma~\ref{lemma:src:evalctx}.
\end{proof}

\begin{lemma}[Unions/Injections]\label{lemma:unions}
If $\trans{\tcenv}{\expr}{\tor{\type_1}{\type_2}}{}{\inj{k}{\texpr}}$ then 
$\trans{\tcenv}{\expr}{\type_k}{}{\texpr}$.
\end{lemma}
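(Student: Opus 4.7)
The plan is to proceed by straightforward inversion (equivalently, induction) on the elaboration derivation $\trans{\tcenv}{\expr}{\tor{\type_1}{\type_2}}{}{\inj{k}{\texpr}}$. The key structural observation is that, reading off Figure~\ref{fig:elaboration}, the outermost syntactic shape of the target term is uniquely determined by the last non-pass-through rule applied: \tchkconst produces a constant, \tchkvar a variable, \tchklambda a $\lambda$-abstraction, \tchkapp an application, \tchkletin a let, \tchkite a conditional, \tchkinterintro a pair, \tchkinterelim a projection, \tchkdead a $\tdead$-cast, and \tchkdown a case-match. The \emph{only} rule whose conclusion carries a target of the form $\inj{k}{\cdot}$ is \tchkup.

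First I would handle the two pass-through rules, \tchktoplevel and \tchkweaken, which neither alter the expression nor the target: in either case their unique premise is itself a derivation of $\trans{\tcenv}{\expr}{\tor{\type_1}{\type_2}}{}{\inj{k}{\texpr}}$, so the inductive hypothesis immediately yields $\trans{\tcenv}{\expr}{\type_k}{}{\texpr}$. Next, I would dispatch all the remaining non-\tchkup rules as vacuous, since none of them can have a conclusion of the required target shape. The single substantive case is \tchkup, whose premise is exactly $\translax{\tcenv}{\expr}{\type_k}{\elabintervar}{\texpr}$ for the specific $k$ and $\texpr$ read off from the injection $\inj{k}{\texpr}$, giving the conclusion directly.

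The main (mild) obstacle is checking that the case analysis really is exhaustive once the mode annotations ($\elabintervar \in \{\elabinter,\elabnointer\}$ and strict/flexible) are elided in the plain judgment of the lemma statement: I need to confirm that none of the annotated variants of the other rules can conclude with an injection target. A syntactic scan of the rules confirms this, since these annotations constrain only which rules are applicable in context and never modify the shape of the target term emitted by a given rule. With that confirmed, the lemma follows.
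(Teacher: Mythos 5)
Your proposal is correct and matches the intended argument: the paper itself gives no details here and simply cites Dunfield's Lemma~8, whose standard proof is exactly the inversion/induction you describe, with \tchkup{} as the sole rule whose conclusion carries an injection-shaped target and the mode pass-through rules (\tchktoplevel, \tchkweaken) discharged by the inductive hypothesis. Your explicit check that the strict/flexible and $\elabinter$/$\elabnointer$ annotations never affect the shape of the emitted target is the right point to verify and is consistent with the appendix's convention of working with the annotation-erased judgment.
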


\begin{proof}
  Based on Lemma 8 of Dunfield's elaboration~\cite{Dunfield2012}.
\end{proof}

% \pv{We may not need the following}
\begin{lemma}[Intersections/Pairs]\label{lemma:intersections}
If
$\trans{\tcenv}{\expr}
          {\tand{\type_1}{\type_2}}{}
          {\pair{\texpr_1}{\texpr_2}}$ 
then there exist $\expr_1'$ and $\expr_2'$ such that:
\begin{enumerate}[label={(\arabic*)}]
  \inditem $\stepsmany{\expr_1}{\expr_1'}$ and $\trans{\tcenv}{\expr_1'}{\type_1}{}{\texpr_1}$
  \inditem $\stepsmany{\expr_2}{\expr_2'}$ and $\trans{\tcenv}{\expr_2'}{\type_2}{}{\texpr_2}$
\end{enumerate}
\end{lemma}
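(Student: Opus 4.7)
The plan is to reduce the problem to a direct inversion on the elaboration derivation after first promoting $\expr$ to a source value via Value Monotonicity. The key observation enabling this is that $\pair{\texpr_1}{\texpr_2}$ is a value in $\tgtlang$ irrespective of whether its components are values; so Lemma~\ref{lemma:valmonot} applies to the hypothesis and yields a source value $\val$ together with $\stepsmany{\expr}{\val}$ and $\trans{\tcenv}{\val}{\tand{\type_1}{\type_2}}{}{\pair{\texpr_1}{\texpr_2}}$. Setting $\expr_1' \doteq \expr_2' \doteq \val$, both multi-step premises come for free, and what remains is to exhibit the two elaboration judgments at $\type_1$ and $\type_2$ respectively.

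For those elaborations, I would invert the derivation of $\trans{\tcenv}{\val}{\tand{\type_1}{\type_2}}{}{\pair{\texpr_1}{\texpr_2}}$ by case analysis on the last rule applied. Scanning Figure~\ref{fig:elaboration}, the shape of the target term rules out every substantive rule except $\tchkinterintro$: $\tchkinterelim$ yields a projection, $\tchkup$ an injection, $\tchkdown$ a case expression, $\tchkdead$ a $\tdead$-cast, $\tchklambda$ a $\lambda$, $\tchkvar$ a variable, $\tchkconst$ a constant, and $\tchkletin$, $\tchkite$, $\tchkapp$ all produce their respective non-pair forms. The two administrative rules $\tchktoplevel$ and $\tchkweaken$ preserve the target term, so after pushing them aside the derivation must conclude with $\tchkinterintro$, whose premises directly supply $\trans{\tcenv}{\val}{\type_i}{}{\texpr_i}$ for each $i \in \{1,2\}$, closing the argument.

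The main obstacle, as in the other inversion-based lemmas, is ensuring that the case analysis on elaboration rules is truly exhaustive once one accounts for the mode annotations (strict versus flexible, $\elabinter$ versus $\elabnointer$). Because $\tchkinterintro$ is parametric in both modes and the administrative rules only lift between them without altering the target term, the analysis goes through uniformly; nevertheless it must be verified rule-by-rule that no modal variant can produce a pair-shaped target at an intersection type by a route other than $\tchkinterintro$. The appendix's convention of working with the mode-free elaboration judgment makes this verification routine, so the bulk of the work lies in the appeal to Lemma~\ref{lemma:valmonot} rather than in the inversion itself.
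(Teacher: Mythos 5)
Your proof is correct for this system, but it takes a genuinely different route from the paper, whose entire argument is a one-line deferral to Dunfield's Lemma~9; that lemma is proved by induction on the elaboration derivation, and in Dunfield's setting the induction is genuinely needed because the explicit merge construct lets a pair-shaped target arise from rules other than intersection introduction (which is also why the statement carries the $\stepsmany{\expr}{\expr_i'}$ clauses at all). Your argument instead exploits two features specific to \lang: $\pair{\texpr_1}{\texpr_2}$ is a target value even when its components are not, so Lemma~\ref{lemma:valmonot} applies to the hypothesis; and every elaboration rule is syntax-directed on the head constructor of its target term, so a pair at an intersection type can only be produced by \tchkinterintro, modulo \tchktoplevel and \tchkweaken, which leave the term unchanged. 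Both observations check out against Figures~\ref{fig:tgtlang} and~\ref{fig:elaboration}, so the inversion is exhaustive and the premises of \tchkinterintro deliver exactly the two required judgments. One remark: because \tchkinterintro is value-restricted here, the detour through Value Monotonicity is redundant --- inversion alone already forces the source expression to be a value, so the multi-step reductions are reflexive and you could take $\expr_1' \doteq \expr_2' \doteq \expr$ directly. What your direct argument buys is a self-contained proof that avoids re-verifying Dunfield's induction in the modified rule system; what it gives up is robustness, since it would break if the calculus acquired a merge-like construct or a subsumption rule capable of producing pairs, whereas the inductive scheme the paper points to would survive such extensions.
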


\begin{proof}
  Based on Lemma 9 of Dunfield's elaboration~\cite{Dunfield2012}.
\end{proof}

\begin{lemma}[Beta Reduction Canonical Form]   \label{lemma:beta:reduction}
  If 
  \begin{enumerate}[label={(\arabic*)}]
  \inditem $\trans{\emp}{\elambda{\evar}{\expr}{}}{\tfun{\type}{\typeb}}{}{\tval_1}$,
  \inditem $\trans{\emp}{\val_2}{\type}{}{\tval_2}$,
  \inditem $\steps{\app{\parens{\elambda{\evar}{\expr}{}}}{\val_2}}
                  {\appsubst{\esubst{\val_2}{\evar}}{\expr}}$
\end{enumerate}
  Then $\tval_1 \equiv \elambda{\evar}{\texpr}{}$ for some $\texpr$.
\end{lemma}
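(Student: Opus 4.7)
The plan is to apply the Canonical Forms Assumption (part 1) directly to hypothesis (1). That assumption gives a binary case split on the shape of $\tval_1$: either $\tval_1 \equiv \elambda{\evar}{\texpr}{}$ for some $\texpr$, which immediately discharges the goal, or $\tval_1 \equiv \deadcast{\type_0}{\tfun{\type}{\typeb}}{\tval'}$ for some inner type $\type_0$ and value $\tval'$. The entire remainder of the proof is devoted to ruling out this second case. Note that hypotheses (2) and (3) are carried along only to pin down the ambient setting; the core obstruction lives purely in the derivation of (1).

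To handle the DEAD-cast alternative, I would invert the \tchkdead rule that must occur as the last non-administrative step in the derivation of (1). The inversion yields a subderivation $\trans{\emp}{\elambda{\evar}{\expr}{}}{\type_0}{}{\tval'}$ together with the side condition $\typetag{\type_0} \cap \typetag{\tfun{\type}{\typeb}} = \emptyset$. Since the \typetag meta-function assigns $\{\texttt{"function"}\}$ to every arrow type, this forces $\texttt{"function"} \notin \typetag{\type_0}$. I would then establish the following auxiliary invariant by induction on elaboration derivations: if $\trans{\tcenv}{\elambda{\evar}{\expr}{}}{\type_0}{}{\tval'}$ with $\tval'$ a value, then $\texttt{"function"} \in \typetag{\type_0}$. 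Applying this invariant contradicts the disjointness side condition, eliminating the DEAD-cast case.

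The inductive cases of the auxiliary invariant are: \tchklambda directly produces an arrow type whose tag is $\{\texttt{"function"}\}$; \tchkinterintro yields an intersection whose well-formedness ($\typetag{\type} = \typetag{\typeb}$) forces both conjuncts to share the \texttt{"function"} tag inherited from the IH on each sub-derivation; \tchkinterelim projects out a conjunct of an intersection whose tag, again by well-formedness, equals the tag of the whole intersection (which contains \texttt{"function"} by IH); \tchkup builds a union whose tag is a superset of the IH tag; the administrative rules \tchkweaken and \tchktoplevel are tag-preserving; and \tchkdown applied to a bare lambda reduces to the empty-evaluation-context instance and is handled by IH on the union premise together with the variable subderivation.

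The main obstacle is the \tchkdead sub-case of the auxiliary invariant, since in principle \tchkdead could be cascaded to produce a value whose outer type has tag disjoint from \texttt{"function"}; the argument must therefore track derivation size strictly decreasing as one peels off successive DEAD wrappers, and combine this with the observation that the innermost non-DEAD rule in the peeling chain must be one of \tchklambda, \tchkinterintro, or \tchkup, each of which establishes the base case of the invariant. Once the invariant is in place, the contradiction is immediate and the lemma follows.
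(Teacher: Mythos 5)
Your opening move (apply the Canonical Forms Assumption to hypothesis (1) and then rule out the $\tdead$-cast branch) is the natural one, and since the paper states this lemma with no proof at all --- it sits alongside the explicitly \emph{assumed} Canonical Forms --- there is no official argument to compare against. The difficulty is that the auxiliary invariant you lean on, namely ``if $\trans{\tcenv}{\elambda{\evar}{\expr}{}}{\type_0}{}{\tval'}$ with $\tval'$ a value then $\texttt{"function"} \in \typetag{\type_0}$,'' is false, and its failure is caused precisely by the case you defer to the end. Rule \tchkdead assigns the lambda a type whose tag is \emph{disjoint} from that of its premise, so the invariant is destroyed, not preserved, at every $\tdead$ wrapper: the first application of \tchkdead in the chain already yields a $\type_0$ with $\texttt{"function"} \notin \typetag{\type_0}$, and no induction on derivation size recovers it, because the property propagates in the wrong direction --- the innermost rule producing a function tag is exactly what licenses the first cast away from it, after which nothing forces the tag to return.

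Concretely, the declarative rules admit a derivation that first casts $\elambda{\evar}{\expr}{}$ from $\tfun{\tbool}{\tbool}$ to $\tnumber$ (tags $\{\texttt{"function"}\}$ and $\{\texttt{"number"}\}$ are disjoint) and then casts the result from $\tnumber$ to $\tfun{\type}{\typeb}$ (again disjoint), producing the value $\deadcast{\tnumber}{\tfun{\type}{\typeb}}{\deadcast{\tfun{\tbool}{\tbool}}{\tnumber}{\elambda{\evar}{\texpr}{}}}$ at arrow type. This is a counterexample both to your invariant and, read literally against the elaboration rules, to the lemma itself; hypotheses (2) and (3) give you no leverage against it, since (3) holds for every lambda applied to a value --- your own observation that they are inert is a signal that the statement carries an implicit assumption. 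To close the gap you must either impose a normalization condition on elaboration derivations that excludes cascaded casts (e.g., no consecutive \tchkdead applications along the spine, or no $\tdead$-cast back to a tag the term originally carried) and prove your invariant relative to that restricted class, or accept the statement as an additional assumption in the style of Canonical Forms, which is in effect what the paper does by leaving it unproved.
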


%\begin{proof}
%  Using Assumption~\ref{assumption:canonical:forms}.
%\end{proof}

\begin{lemma}[Primitive Reduction Canonical Form]   \label{lemma:primitive:reduction}
  If 
  \begin{enumerate}[label={(\arabic*)}]
  \inditem $\trans{\emp}{\vconst}{\tfun{\type}{\typeb}}{}{\tval_1}$ 
  \inditem $\trans{\emp}{\val}{\type}{}{\tval_2}$,
  \inditem $\steps{\app{\vconst}{\val}}
                  {\primapp{\vconst}{\val}}$
\end{enumerate}
Then:
\begin{itemize}
  \inditem $\tval_1 \equiv \vconst$
  \inditem $\tval_2 \not\equiv \deadcast{\cdot}{\type}{\cdot}$
\end{itemize}
\end{lemma}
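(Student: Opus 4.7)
The plan is to prove both conclusions by contradiction, using Canonical Forms (Assumption~\ref{assumption:canonical:forms}) to pin down the possible shapes of $\tval_1$ and $\tval_2$, and then invoking the assumed semantics of primitive constants to rule out the residual dead-cast cases.

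For the first conclusion, I would apply Canonical Forms part~(2) to premise~(1), giving two options: $\tval_1 \equiv \vconst$, which is the desired conclusion, or $\tval_1 \equiv \deadcast{\type_0}{\tfun{\type}{\typeb}}{\tval_1'}$ for some $\type_0$ and $\tval_1'$. In the second case, inspecting the elaboration rules shows that \tchkdead is the only rule whose target has $\deadcast{\cdot}{\cdot}{\cdot}$ at the head (every other rule forms a constant, variable, lambda, application, let, conditional, pair, projection, injection, or case), so, modulo intervening applications of \tchkweaken that leave the target unchanged, the last rule of the derivation must be \tchkdead. Inverting it yields $\typetag{\type_0} \cap \typetag{\tfun{\type}{\typeb}} = \emptyset$; in particular $\type_0$ does not carry the \texttt{"function"} tag, and therefore $\vconst$ is not a function-typed primitive operator. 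By the standing assumption that primitive semantics $\primapp{\cdot}{\cdot}$ are defined only when $\vconst$ is an operator applied to a value of the correct tag, $\primapp{\vconst}{\val}$ would be undefined, contradicting premise~(3), since \opsrcappa can only fire when its right-hand side is a well-defined reduct.

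For the second conclusion, suppose for contradiction that $\tval_2 \equiv \deadcast{\type_0}{\type}{\tval_2'}$. The same inversion applied to the derivation of $\trans{\emp}{\val}{\type}{}{\tval_2}$ shows that its last non-weakening rule must be \tchkdead, yielding $\typetag{\type_0} \cap \typetag{\type} = \emptyset$. Combining this with Part~1, $\vconst$ is a genuine primitive operator whose input domain lies within $\typetag{\type}$, whereas $\val$ is in fact a source value of some type $\type_0$ whose tag is disjoint from $\typetag{\type}$. The assumption about primitive constants then forces $\primapp{\vconst}{\val}$ to be undefined, again contradicting premise~(3).

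The main obstacle is the inversion step that extracts the tag-disjointness hypothesis. Because \tchkweaken leaves the target term syntactically unchanged, one cannot read off the last rule of the derivation directly; a short subsidiary induction is needed to peel off \tchkweaken applications until the first rule that actually shapes the target is exposed, and that rule must be \tchkdead whenever the target's head is $\deadcast{\cdot}{\cdot}{\cdot}$. Once this inversion is granted, the remainder reduces to a direct appeal to the assumption, stated just before Assumption~\ref{assum:const:app:app}, that primitive operators are defined only on values of the appropriate tag and never on dead-cast arguments.
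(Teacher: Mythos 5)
The paper states this lemma in the appendix with no proof at all (the same is true of the neighbouring Beta Reduction and Conditional Canonical Form lemmas), so there is no official argument to measure yours against; your overall strategy --- canonical forms plus the primitive-semantics assumption --- is surely the intended one. But there is a genuine gap in the step you yourself single out as routine. After inverting the outermost \tchkdead{} you conclude, in part one, that since $\typetag{\type_0}\cap\typetag{\tfun{\type}{\typeb}}=\emptyset$ the constant ``is not a function-typed primitive operator'', and in part two that $\val$ ``is in fact a source value of some type $\type_0$'' with tag disjoint from $\typetag{\type}$. Neither follows: $\type_0$ is only the type recorded in the \emph{outermost} cast, and the subderivation beneath it may itself end in \tchkdead. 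The rule enforces tag-disjointness only between consecutive types in such a tower, and disjointness is not transitive --- a chain of two casts brings you back to a compatible tag. Concretely, for an operator $\vconst$ with $\tconst = \tfun{\tbool}{\tbool}$, rule \tchkconst{} gives $\trans{\emp}{\vconst}{\tconst}{}{\vconst}$, one use of \tchkdead{} gives $\trans{\emp}{\vconst}{\tnumber}{}{\deadcast{\tconst}{\tnumber}{\vconst}}$, and a second gives $\trans{\emp}{\vconst}{\tfun{\tbool}{\tbool}}{}{\deadcast{\tnumber}{\tfun{\tbool}{\tbool}}{\deadcast{\tconst}{\tnumber}{\vconst}}}$; every side condition holds, $\primapp{\vconst}{\val}$ is perfectly well defined so premise (3) is satisfied, yet $\tval_1$ is a dead-cast. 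The same double-cast construction applied to the argument defeats part two. So the contradiction with premise (3) that both halves of your proof rely on never materializes.

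Peeling off \tchkweaken{} applications, which you flag as the main obstacle, is indeed necessary but is the easy part; the hard part is that you must descend through the whole tower of dead-casts to the underlying \tchkconst, and even there the pairwise disjointness conditions give you nothing to contradict. To make the lemma go through you need some principle the paper never makes explicit: for instance a normality condition on elaboration derivations forbidding consecutive applications of \tchkdead{} (under which your inversion would correctly identify $\type_0$ with $\tconst$, respectively with the intrinsic type of $\val$, and the argument would close), or else a direct appeal to the informal ``a source primitive application gets stuck iff the elaborated argument is a $\tdead$-cast'' claim in the Primitive Semantics paragraph, which would amount to taking this lemma as an axiom in the same spirit as Assumption~\ref{assumption:canonical:forms}. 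As written, your proof does not close this hole, and I do not see how to close it from the stated rules alone.
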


%\begin{proof}
%  Using Assumption~\ref{assumption:canonical:forms}.
%\end{proof}

\begin{lemma}[Conditional Canonical Form]   \label{lemma:reduce:ite}
  If 
  \begin{enumerate}[label={(\arabic*)}]
      \inditem $\trans{\emp}{\vconst}{\tbool}{}{\tval}$,
      \inditem $\trans{\emp}{\expr_1}{\type}{}{\texpr_1}$  and
      $\trans{\emp}{\expr_2}{\type}{}{\texpr_2}$,
    \item $\steps{\ite{\vconst}{\expr_1}{\expr_2}}{\expr_k}$
\end{enumerate}
Then:
  \begin{itemize}
    \inditem $k = 1 \imp \vconst \equiv \tval \equiv \vtrue$.
    \inditem $k = 2 \imp \vconst \equiv \tval \equiv \vfalse$.

  \end{itemize}
\end{lemma}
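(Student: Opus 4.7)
The plan is to split the goal into two independent parts --- pinning down $\vconst$ from the source step (3), and pinning down $\tval$ from the elaboration (1) --- and to handle each by inversion, then combine.

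For $\vconst$, I would invert hypothesis (3) on the source operational semantics. Because the condition $\vconst$ is already a value, the evaluation-context rule \opsrcevalctx cannot fire at the head of the derivation with context $\ite{\evalctx}{\expr_1}{\expr_2}$, so the step must be by \opsrccondtrue or \opsrccondfalse. Matching against the resulting branch $\expr_k$ then forces $\vconst \equiv \vtrue$ when $k = 1$ and $\vconst \equiv \vfalse$ when $k = 2$, which delivers the first equality of each conclusion.

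For $\tval$, I would apply the Canonical Forms assumption \ref{assumption:canonical:forms}(2) to hypothesis (1). This yields either $\tval \equiv \vconst$ --- in which case, combined with the previous step, we obtain $\tval \equiv \vtrue$ (resp.\ $\vfalse$) and we are done --- or $\tval \equiv \deadcast{\type'}{\tbool}{\tval'}$, arising from the \tchkdead rule with $\typetag{\type'} \cap \typetag{\tbool} = \emptyset$. The other elaboration rules are excluded syntactically: \tchkinterelim, \tchkdown, \tchkapp, \tchkletin produce elimination forms that are not in the value grammar for $\tval$, while \tchkinterintro and \tchkup assign intersection and union types, neither of which equals $\tbool$.

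The main obstacle will be eliminating the residual DEAD-cast branch: chains such as $\deadcast{\tnumber}{\tbool}{\deadcast{\tbool}{\tnumber}{\vtrue}}$ are in principle permitted by two successive applications of \tchkdead (each jump between $\tbool$ and $\tnumber$ satisfies the disjoint-tag side condition), and nothing in the stated hypotheses directly rules such chains out. I would resolve this in the style of Lemma \ref{lemma:primitive:reduction}, which explicitly carries a ``not a DEAD-cast'' conclusion: either by strengthening the statement to also assert $\tval \not\equiv \deadcast{\cdot}{\tbool}{\cdot}$, or by arguing at the intended call-sites --- inside the inductive proof of Reverse Consistency (Theorem \ref{theorem:rev:consistency}) --- that a DEAD-cast value in the condition position would make the target stuck on $\ite{\tval}{\texpr_1}{\texpr_2}$ while the source steps, contradicting the target-progress reasoning accompanying the induction. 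Either formulation closes the one nontrivial case and completes the proof.
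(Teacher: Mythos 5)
The paper gives no proof of this lemma at all --- it is stated bare in the appendix, like the other two canonical-form lemmas --- so there is no official argument to compare against; your reconstruction is the natural reading of what was intended. Your treatment of $\vconst$ is correct: inverting $\steps{\ite{\vconst}{\expr_1}{\expr_2}}{\expr_k}$ leaves only \opsrccondtrue\ or \opsrccondfalse, since the condition is already a value and the evaluation-context rule cannot apply, which pins $\vconst$ to $\vtrue$ or $\vfalse$ according to $k$. Appealing to Assumption~\ref{assumption:canonical:forms}(2) for $\tval$ is also the right move, and you are right that it leaves exactly one residual case.

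The problem is that neither of your two proposed ways of discharging the $\tdead$-cast branch actually works, and you have in fact put your finger on a genuine defect of the lemma as stated rather than a gap you can close locally. Adding $\tval \not\equiv \deadcast{\cdot}{\tbool}{\cdot}$ as an extra \emph{conclusion}, in the style of Lemma~\ref{lemma:primitive:reduction}, only makes the statement stronger; the hypotheses do not entail it. Indeed $\trans{\emp}{\vtrue}{\tbool}{}{\deadcast{\tnumber}{\tbool}{\deadcast{\tbool}{\tnumber}{\vtrue}}}$ is derivable by \tchkconst\ followed by two applications of \tchkdead\ (each tag-disjointness side condition holds), in flexible mode and without intersection elimination, so it can legitimately feed the condition premise of \tchkite; this is a counterexample to the conclusion $\tval \equiv \vtrue$. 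Your second suggestion is circular: the lemma's only call site is inside the proof of Theorem~\ref{theorem:rev:consistency}, precisely where one must establish that the target conditional \emph{can} step, so one cannot invoke target progress there to rule out a stuck $\tdead$-cast in condition position. What is actually needed is an additional hypothesis or a global restriction on the derivations considered --- e.g., forbidding \tchkdead\ when the premise's type already has the expected tag, or threading an invariant that elaborations of well-tagged values carry no top-level $\tdead$-cast --- under which the canonical-forms dichotomy collapses to $\tval \equiv \vconst$ and your argument goes through. The paper's unproved statement silently presupposes such an exclusion; your proof correctly isolates the difficulty but does not resolve it.
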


%\begin{proof}
%  Using Assumption~\ref{assumption:canonical:forms}.
%\end{proof}

\begin{lemma}[Value Monotonicity]
  \IfAppendix{\label{lemma:valmonot:app}}{\label{lemma:valmonot}}
If $\trans{\tcenv}{\expr}{\type}{}{\tval}$, 
then there exists $\val$ \st:
\begin{enumerate}[label={(\arabic*)}]
\inditem $\stepsmany{\expr}{\val}$
\inditem $\trans{\tcenv}{\val}{\type}{}{\tval}$
\inditem $\foralli{\;i\;.;\stepsmany{\expr}{\expr_i}}
                  {\trans{\tcenv}{\expr_i}{\type}{}{\tval}}$
\end{enumerate}
\end{lemma}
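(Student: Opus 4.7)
The plan is to prove all three parts by induction on the elaboration derivation $\trans{\tcenv}{\expr}{\type}{}{\tval}$, adapting Dunfield's proof of his analogous Lemma 11~\cite{Dunfield2012}. The observation that guides the case analysis is that since the target is required to be a value $\tval$, the final rule of the derivation is heavily constrained: every rule whose conclusion is syntactically a non-value target (\tchkapp, \tchkite, \tchkletin, \tchkinterelim, \tchkdown) is ruled out as the last step. This leaves the wrapper rules (\tchktoplevel, \tchkweaken), the rules whose source is necessarily already a value (\tchkconst, \tchkvar, \tchklambda, \tchkinterintro), and the two inductive wrappers \tchkdead and \tchkup, whose target is a value precisely when the enclosed sub-target is itself a value.

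For parts (1) and (2), the base cases where $\expr$ is already a value discharge immediately by taking $\val = \expr$ together with the empty reduction sequence. The wrapper rules \tchktoplevel and \tchkweaken just forward the induction hypothesis. For \tchkdead, the target must have the form $\deadcast{\type}{\typeb}{\tval'}$ with $\tval'$ a value; applying the induction hypothesis to the premise $\trans{\tcenv}{\expr}{\type}{}{\tval'}$ yields some $\val$ with $\stepsmany{\expr}{\val}$ and $\trans{\tcenv}{\val}{\type}{}{\tval'}$, and reapplying \tchkdead gives $\trans{\tcenv}{\val}{\typeb}{}{\deadcast{\type}{\typeb}{\tval'}}$. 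The \tchkup case is entirely symmetric, using injections in place of casts.

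For part (3), I plan a secondary induction on the length of the reduction path $\stepsmany{\expr}{\expr_i}$. The length-zero base case is the hypothesis itself. For the inductive step, assume $\stepsmany{\expr}{\expr_{i-1}}$ and $\steps{\expr_{i-1}}{\expr_i}$; by the inner IH, $\trans{\tcenv}{\expr_{i-1}}{\type}{}{\tval}$, and it remains to show a single-step elaboration-preservation property: if $\trans{\tcenv}{\expr'}{\type}{}{\tval}$ and $\steps{\expr'}{\expr''}$ then $\trans{\tcenv}{\expr''}{\type}{}{\tval}$ with the \emph{same} target. This is proved by inspection of the derivation on $\expr'$, again using the structural constraint that a value-target derivation consists only of \tchkdead/\tchkup wrappers around a derivation ending in \tchkconst, \tchkvar, \tchklambda, or \tchkinterintro; reducing a subterm of $\expr'$ is matched by reassembling the same layered derivation over $\expr''$, relying on the Substitution Lemma (Lemma~\ref{lemma:subst}) whenever the step performs a $\beta$- or let-reduction inside one of the wrapped constituents.

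The main obstacle will be rigorously matching each source step of $\expr_{i-1}$ to a reconstructed elaboration derivation for $\expr_i$ that preserves the exact target $\tval$: the \tchkinterintro rule duplicates the source value into two sub-derivations that may reduce in unrelated ways, so care is required to ensure the paired targets remain consistent after each step, and one must be vigilant that a step does not expose a new redex whose elaboration would force the outer target to cease being a value. Once this single-step invariant is established, the path-length induction closes part (3) and the full lemma follows.
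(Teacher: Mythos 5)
Your proposal follows essentially the same route as the paper: parts (1) and (2) by induction on the elaboration derivation in the style of Dunfield's Lemma~11, and part (3) by an outer induction on the length of the reduction path whose inductive step is discharged by an inner induction on the derivation, restricted to derivations whose target is a value. The only remark worth making is that your anticipated ``main obstacle'' is vacuous: because \tchkup and \tchkdead leave the source term unchanged, a derivation with a value target must bottom out in \tchkconst, \tchkvar, \tchklambda or \tchkinterintro and therefore forces $\expr$ itself to be a source value, so the ``step'' cases of the inner induction never fire and neither the Substitution Lemma nor any reconciliation of the two \tchkinterintro sub-derivations is required --- exactly as the paper's proof observes when it dismisses those cases with ``already a value, so doesn't step.''
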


\IfAppendix{

\begin{proof}

Parts (1) and (2) of the lemma has been proved by Dunfield~\cite{Dunfield2012} for a similar
language, so here we are just going to prove part (3).

% We start off we a definition. Assume an evaluation sequence:
% $$
% \stepsmany{\expr_S}{\expr_T} \equiv 
% \expr_S \step \expr_1 \step \dots \step \expr_T
% $$
% We define a path $\epath{\expr_S}{\expr_T}{i}$ as the first $i$ steps in the
% evaluation sequence from $\expr_S$ to $\expr_T$. This path is going to
% be well-defined and unique, since we've kept the source language operational semantics
% deterministic (this is one of the main differences with the system presented by
% \cite{Dunfield2012}).

% We will prove that:
% $$\foralli{i}{\stepsmany{\expr}{\expr_i} \imp
% \trans{\tcenv}{\expr_i}{\type}{}{\tval}}$$

We will show this by induction on the length $i$ of the path:
$\stepsmany{\expr}{\expr_i}$. 
\begin{enumerate}
  \item For the case $i = 0$: $\expr = \expr_i$, so it trivially holds.

  \item Suppose it holds for $i = k$, \ie for
    $\stepsmany{\expr}{\expr_k}$, it holds that:
    \begin{align}
    \trans{\tcenv}{\expr_k}{\type}{}{\tval} \label{eq:monot:0}
    \end{align}
    
    We will show that it holds for
    $i = k+1$, \ie for $\expr_{k+1}$ such that: 
    \begin{align}
      \steps{\expr_k}{\expr_{k+1}}  \label{eq:monot:1}
    \end{align}

    We will do this by induction on the derivation \eqref{eq:monot:0}, but
    limit ourselves to the terms $\expr_{k}$ that
    elaborate to \textit{values}:
    \begin{enumerate}

      \item Cases \tchkconst, \tchkvar, \tchkinterintro, \tchklambda:
        For these cases, term $\expr_k$ is already a value, so doesn't step.
%         $$
%         \inferrule[]{}
%         {\trans{\tcenv}{\vconst}{\tconst}{}{\vconst}}
%         $$
%         Term $\expr_k \equiv \vconst$ is already a value, so doesn't step.
% 
%       \item Case \tchkvar:
%         $$
%         \inferrule[]
%         {\tcenvaccess{\evar}{\type}}
%         {\trans{\tcenv}{\evar}{\type}{}{\evar}}
%         $$
%         Term $\expr_k \equiv \evar$ is already a value, so doesn't step.
% 
%       \item Case \tchkinterintro: 
%         $$
%         \inferrule[]
%         {
%           \foralli {\kindex \in \{1,2\}}
%           {\trans{\tcenv}{\val}{\type_{\kindex}}{}{\texpr_{\kindex}}}
%         }
%         {
%           \trans{\tcenv}
%                 {\val}
%                 {\tand{\type_1}{\type_2}}
%                 {}{\pair{\texpr_1}{\texpr_2}}
%         }
%         $$
%         Term $\expr_k \equiv \val$ is already a value, so doesn't step.
% 
%       \item Case \tchklambda:

      \item Case \tchkup (assume left injection -- the case for right injection is similar):
        $$
        \inferrule[]
        {
          \trans{\tcenv}{\expr_{k}}{\type_1}{}{\tval} \\
          \tcwf{\tor{\type_1}{\type_2}}
        } 
        {\trans{\tcenv}{\expr_k}{\tor{\type_1}{\type_2}}{}{\inj{1}{\tval}}}
        $$
        By inversion:
        $$
        \trans{\tcenv}{\expr_k}{\type_1}{}{\tval}
        $$
        By i.h., using \eqref{eq:monot:1}:
        $$
        \trans{\tcenv}{\expr_{k+1}}{\type_1}{}{\tval}
        $$
        Applying rule \tchkup on the latter one:
        $$
        \trans{\tcenv}{\expr_{k+1}}{\tor{\type_1}{\type_2}}{}{\inj{1}\tval}
        $$

    \end{enumerate}
    
\end{enumerate}

\end{proof}
}
{
\begin{proof}
  The first two parts are handled similarly to
  Dunfield's~\cite{Dunfield2012} Lemma 11. 
  The last part is proved by induction on the length of the path
  $\stepsmany{\expr}{\expr_i}$.
  Full details of this proof can be found in the accompanying
  technical report.
\end{proof}
}

\begin{lemma}[Reverse Value Monotonicity]
  \IfAppendix{\label{lemma:rev:valmonot:app}}{\label{lemma:rev:valmonot}}
If $\trans{\tcenv}{\val}{\type}{}{\texpr}$, 
then exists $\tval$ s.t.: 
%
% \begin{itemize}
%
$\stepsmany{\texpr}{\tval}$ and $\trans{\tcenv}{\val}{\type}{}{\tval}$.
%  \item $\stepsmany{\texpr}{\deadcast{\typeb}{\type}{\tval}}$
%
%\end{itemize}
\end{lemma}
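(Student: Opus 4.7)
My plan is to prove the statement by induction on the derivation of $\trans{\tcenv}{\val}{\type}{}{\texpr}$, exactly paralleling the structure of Lemma~\ref{lemma:valmonot} but with the roles of source and target swapped: since $\val$ is \emph{already} a source value, the work shifts entirely onto reducing the target term $\texpr$ to a target value $\tval$, while preserving the elaboration relationship at the same type $\type$.

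First I would dispatch the easy cases where $\texpr$ is already a target value. By rule \tchkconst, a source constant $\vconst$ elaborates to itself, so we take $\tval \equiv \vconst$. By \tchklambda, an abstraction $\elambda{\evar}{\expr}{}$ elaborates to an abstraction $\elambda{\evar}{\texpr}{}$, which is a target value. The variable case \tchkvar is analogous (in a nonempty context).

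The interesting cases are those rules whose conclusion produces a compound target term. For \tchkinterintro, the source value $\val$ elaborates to $\pair{\texpr_1}{\texpr_2}$ via two subderivations assigning types $\type_1$ and $\type_2$; these produce target values regardless because a pair of any target expressions is itself a target value per Figure~\ref{fig:tgtlang}. For \tchkup, the source value elaborates to $\inj{k}{\texpr}$; by the IH on the premise, $\texpr \stepsmany \tval'$, and then by Lemma~\ref{lemma:tgt:evalctx} we lift this to $\inj{k}{\texpr} \stepsmany \inj{k}{\tval'}$, which is a target value; reapplying \tchkup at the same type $\type_k$ gives $\trans{\tcenv}{\val}{\tor{\type_1}{\type_2}}{}{\inj{k}{\tval'}}$. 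For \tchkdead, the IH gives $\texpr \stepsmany \tval'$, and $\deadcast{\type}{\typeb}{\tval'}$ is a target value by Figure~\ref{fig:tgtlang}. For \tchkinterelim, the source value $\val$ is typed at an intersection and produces $\proj{k}{\texpr_0}$; the IH yields $\texpr_0 \stepsmany \tval_0$ with $\trans{\tcenv}{\val}{\tand{\type_1}{\type_2}}{}{\tval_0}$, and then Lemma~\ref{lemma:intersections} (or a direct analysis, since \tchkinterintro is the only way to assign an intersection to a value) forces $\tval_0 \equiv \pair{\tval_1}{\tval_2}$, whence $\proj{k}{\tval_0} \step \tval_k$ by \optgtproj.

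The main obstacle I anticipate is the \tchkdown case, since its conclusion places $\expr_0$ inside an evaluation context $\evalctx$, and one must argue that when the overall source term $\evalctxarg{\expr_0}$ happens to be a value $\val$, the evaluation context is essentially trivial in a way that lets us push the IH through. I would handle this by observing that the structure of $\evalctx$ plus the requirement that $\evalctxarg{\expr_0}$ be a value severely restricts the shape of $\evalctx$, then apply the IH to $\expr_0$ to reduce the scrutinee of the target \texttt{case}, use \optgtcase to step through the case-split, apply Lemma~\ref{lemma:subst} to the branch derivation, and finally apply the IH again to the resulting value. A secondary subtlety throughout is maintaining that the \emph{same} source value $\val$ elaborates to the final $\tval$ at the original type $\type$; this follows because each reduction in the target is matched by reapplying the very same elaboration rule that generated the compound form, so the induction goes through cleanly, mirroring Dunfield's Lemma~11~\cite{Dunfield2012}.
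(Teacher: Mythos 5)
Your proposal takes essentially the same route as the paper: the paper's entire proof of this lemma is ``similar to the proof of Value Monotonicity,'' i.e., an induction on the elaboration derivation in the style of Dunfield's Lemma~11, which is precisely the induction and case analysis you lay out (including the observation that target pairs are values outright, and the use of the target evaluation-context lemma to lift reductions through injections, dead-casts and projections). One minor caveat: in the \tchkinterelim case, your claim that \tchkinterintro is the only rule assigning an intersection type to a value is not literally forced by the declarative rules (\tchkdead can also target an intersection type whose tag is disjoint from the source's), but the paper's one-line proof glosses over this as well, so your sketch is at least as detailed as the original.
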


\IfAppendix{
\begin{proof}
Similar to proof of Lemma~\ref{lemma:valmonot:app}.
\end{proof}

}
{
\begin{proof}
Similar to proof of Lemma~\ref{lemma:valmonot}.
\end{proof}
}

\begin{lemma}[Substitution]\label{lemma:subst}
If $\trans{\tcenvext{\evar}{\type}}{\expr}{\type'}{}{\texpr}$ and 
$\trans{\tcenv}{\val}{\type}{}{\tval}$ then 
$\trans{\tcenv}{\appsubst{\esubst{\val}{\evar}}{\expr}}{\type'}{}{\appsubst{\esubst{\tval}{\evar}}{\texpr}}$.
\end{lemma}

\IfAppendix{
 \begin{proof}
  Based on Lemma 12 of Dunfield's elaboration~\cite{Dunfield2012}.
 \end{proof}
}{
 \begin{proof}
  Similar to Dunfield's substitution proof~\cite{Dunfield2012} (Lemma 12).
 \end{proof}
}

\begin{corollary}[Target Multi-step Preservation]
\label{corollary:multi:preservation}
If $\lqcheck{\emp}{\texpr}{\rtype}$
and $\stepsmany{\texpr}{\texpr'}$ 
then $\lqcheck{\emp}{\texpr'}{\rtype}$.
\end{corollary}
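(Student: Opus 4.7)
The plan is a straightforward induction on the number of steps in the multi-step reduction $\stepsmany{\texpr}{\texpr'}$, using Theorem~\ref{theorem:targettypesafe} (Refinement Type Safety) as the single-step engine. Since Theorem~\ref{theorem:targettypesafe} already packages together progress and single-step preservation for the target language, this corollary is essentially the reflexive-transitive closure of its preservation clause, so no new semantic reasoning is needed.

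For the base case, when $\stepsmany{\texpr}{\texpr'}$ is the zero-step reduction, we have $\texpr \equiv \texpr'$ and the conclusion $\lqcheck{\emp}{\texpr'}{\rtype}$ is exactly the hypothesis. For the inductive step, suppose $\stepsmany{\texpr}{\texpr'}$ consists of $n+1$ steps; by definition of the multi-step relation there exists an intermediate term $\texpr''$ with $\steps{\texpr}{\texpr''}$ and an $n$-step reduction $\stepsmany{\texpr''}{\texpr'}$. Applying Theorem~\ref{theorem:targettypesafe} to $\lqcheck{\emp}{\texpr}{\rtype}$ and $\steps{\texpr}{\texpr''}$ (noting that $\texpr$ is not a value, since it steps) yields $\lqcheck{\emp}{\texpr''}{\rtype}$. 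The induction hypothesis then applies to $\stepsmany{\texpr''}{\texpr'}$ to conclude $\lqcheck{\emp}{\texpr'}{\rtype}$.

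There is no real obstacle here; the only potential subtlety is ensuring that the definition of $\stepsmany{\cdot}{\cdot}$ used in the paper is indeed the reflexive-transitive closure of $\steps{\cdot}{\cdot}$, so that the inductive decomposition into ``one step followed by $n$ steps'' is legitimate. Given the standard usage elsewhere in the paper (e.g., in Lemmas~\ref{lemma:valmonot} and~\ref{lemma:rev:valmonot}), this is clearly the intended reading, so the proof reduces to a one-paragraph induction.
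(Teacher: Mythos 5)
Your proof is correct and matches the paper's intent exactly: the paper's own justification is simply ``stems from Theorem~\ref{theorem:targettypesafe}'', i.e.\ iterating the preservation half of Refinement Type Safety over the reflexive-transitive closure of the step relation, which is precisely your induction on the length of the reduction. The only point you leave implicit is that Theorem~\ref{theorem:targettypesafe} asserts the existence of \emph{some} well-typed successor, so identifying it with the particular $\texpr''$ in the given reduction relies on determinism of the target's small-step semantics --- a fact the paper also takes for granted.
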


\begin{proof}
  Stems from Theorem~7 from main paper.
\end{proof}

\begin{corollary}[\tdead-cast Invalid]
\label{corollary:deadcast:invalid}
$\emp \not\vdash \deadcast{\type}{\typeb}{\texpr} :: \rtype$
\end{corollary}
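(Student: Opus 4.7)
The plan is to argue by contradiction. Suppose $\emp \vdash \deadcast{\type}{\typeb}{\texpr} :: \rtype$ for some $\rtype$. Since a $\tdead$-cast is treated as an application of the primitive $\deadcastname{\type}{\typeb}$ of type $\tfun{\fbot{\elabtype{\type}}}{\fbot{\elabtype{\typeb}}}$, inverting \lqchkapp (after stripping any leading uses of \lqchksub at the root) yields a subderivation $\emp \vdash \texpr :: \fbot{\elabtype{\type}}$. The task therefore reduces to showing that no expression can inhabit any $\fbot{\cdot}$ type in an environment whose refinement content is satisfiable -- in particular, in the empty environment.

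I would establish this as an auxiliary lemma by induction on the structure of $\elabtype{\type}$. The base case $\fbot{\tprim} = \reftp{\tprim}{\rfalse}$ is the crux: any refinement-typing derivation of $\texpr$ at $\reftp{\tprim}{\rfalse}$ must ultimately appeal to \lqchksub with a base-type subtyping obligation $\lqsubtype{\lqenv}{\reftp{\tprim}{\pred}}{\reftp{\tprim}{\rfalse}}$, which generates the SMT verification condition $\validgrdimp{\lqenv}{\pred}{\rfalse}$. When $\lqenv$'s refinement content is satisfiable, this VC is invalid. The inductive cases follow by unfolding $\fbot{\cdot}$: for arrows, \lqchklambda pushes the problem to inhabiting $\fbot{\rtype}$ under an environment extended by a $\rtrue$-refined binder, which preserves satisfiability; \lqchkpair and \lqchkinj directly expose $\fbot$-refined subterms to the inductive hypothesis.

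The main obstacle will be controlling uses of \lqchksub and the attendant weakening: a typing at $\fbot{\rtype}$ can always be reached via subsumption from a more informative type, so the induction must be organised over typing derivations rather than over terms, with the strengthened invariant that no derivation $\lqcheck{\lqenv}{\texpr}{\ftx{\rtype}{\rfalse}}$ exists whenever $\lqenv$'s refinement content is satisfiable. A more semantic alternative would invoke Theorem~\ref{theorem:targettypesafe} together with Corollary~\ref{corollary:multi:preservation} to reduce $\texpr$ to a value and then apply canonical forms to a primitive constant whose singleton-style refinement contradicts $\reftp{\tprim}{\rfalse}$; however, that route has to address divergence of $\texpr$, whereas the syntactic route above discharges the corollary purely from the well-formedness of the refinement logic and needs no normalization assumption on \tgtlang.
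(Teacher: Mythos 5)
The paper itself offers no proof of this corollary---it is stated bare and then used as a black box in the Two-Phase Safety argument---so there is no official proof to compare against; the authors evidently regard it as immediate from the $\rfalse$ precondition of $\tdead$. Your opening reduction is the right one: inverting \lqchkapp{} against the signature $\tfun{\fbot{\elabtype{\type}}}{\fbot{\elabtype{\typeb}}}$ leaves the obligation $\lqcheck{\emp}{\texpr}{\fbot{\elabtype{\type}}}$, and the real content is that $\fbot{\cdot}$ types are uninhabited over a satisfiable environment.

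However, the step on which your induction pivots is wrong as stated. The VC generated by $\lqsubtype{\lqenv}{\reftp{\tprim}{\pred}}{\reftp{\tprim}{\rfalse}}$ is $\validgrdimp{\lqenv}{\pred}{\rfalse}$, which is \emph{valid} exactly when $\dbrkts{\lqenv}\wedge\pred$ is unsatisfiable; satisfiability of $\dbrkts{\lqenv}$ alone does not make it invalid. A derivation could therefore assign $\texpr$ an environment-inconsistent refinement $\pred$ (say $\vv=0\wedge\vv=1$, or a $\rfalse$ refinement flowing out of a nested $\tdead$-cast or out of an application whose annotated codomain is $\rfalse$-refined) and then legitimately subsume to $\reftp{\tprim}{\rfalse}$. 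Your strengthened invariant---that no derivation concludes a type of the shape $\ftx{\rtype}{\rfalse}$ in a satisfiable environment---does not close the \lqchksub{} case, because the premise type $\rtype_1$ of subsumption need not have that shape, so the inductive hypothesis gives you no purchase on it. What is actually needed is the stronger, rule-by-rule consistency property that \emph{every} refinement derivable for a term in a satisfiable environment is itself jointly satisfiable with that environment (constants carry satisfiable singletons, variables carry $\vv=\evar$, \lqchkapp{} substitutes into a codomain whose consistency must in turn be established, and so on). With that invariant your base case goes through and the compound cases unfold as you describe; without it the corollary is not yet proved. Your observation that the semantic route via Theorem~\ref{theorem:targettypesafe} founders on possible divergence of $\texpr$ is well taken.
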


%\begin{proof}
%\textit{Stems from Theorem~\ref{theorem:targettypesafe}}.
%\end{proof}

\IfAppendix{
  \setcounter{enumi}{0}
  \setcounter{enumii}{0}
  \setcounter{equation}{0}
}{}

\begin{lemma}[Correspondence]
\IfAppendix{
\begin{align}
  \forall \tcenv, \expr, \type, \texpr, \lqenv, \rtype  \quad . \quad &
  \trans{\tcenv}{\expr}{\type}{}{\texpr}                \label{eq:elaboration:1} \\
  \wedge \quad & \lqcheck{\lqenv}{\texpr}{\rtype}       \label{eq:elaboration:2} \\
  \wedge \quad & \elabtype{\tcenv} = \stripref{\lqenv}  \label{eq:elaboration:3} \\
  \imp   \quad & \elabtype{\type} = \stripref{\rtype} \notag
\end{align}
}
{If 
%
%\begin{enumerate}[label={(\arabic*)}]
%\inditem 
$\trans{\tcenv}{\expr}{\type}{}{\texpr}$, 
%\inditem 
$\lqcheck{\lqenv}{\texpr}{\rtype}$ and
%\inditem 
$\elabtype{\tcenv} = \stripref{\lqenv}$,
%\end{enumerate}
then $\elabtype{\type} = \stripref{\rtype}$.
%%%   $$
%%%   \forall \tcenv, \expr, \type, \texpr, \lqenv, \rtype  \;\;.\;\;
%%%   \left(\trans{\tcenv}{\expr}{\type}{}{\texpr} 
%%%   \;\wedge\; \lqcheck{\lqenv}{\texpr}{\rtype}
%%% \;\wedge\; \elabtype{\tcenv} = \stripref{\lqenv} \right)
%%%   \;\imp \;  \elabtype{\type} = \stripref{\rtype}
%%%   $$
}
\end{lemma}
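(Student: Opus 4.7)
The plan is to proceed by induction on the elaboration derivation $\trans{\tcenv}{\expr}{\type}{}{\texpr}$, using inversion on the refinement-typing derivation $\lqcheck{\lqenv}{\texpr}{\rtype}$ at each step. Before beginning the case analysis, I would establish a small auxiliary lemma: if $\lqsubtype{\lqenv}{\rtype_1}{\rtype_2}$ then $\stripref{\rtype_1} = \stripref{\rtype_2}$. This is immediate by induction on the subtyping derivation, since both \lqsubbase and \lqsubfun preserve the underlying unrefined skeleton and only manipulate refinement predicates. Armed with this, uses of the subsumption rule \lqchksub at the root of the refinement derivation can be pushed aside: replacing the concluding $\rtype$ by the pre-subsumption type does not change $\stripref{\rtype}$, so we may assume the last non-subsumption rule of the refinement derivation matches the syntactic shape of $\texpr$.

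For every elaboration rule, $\texpr$'s top-level syntactic form then forces which refinement-typing rule concludes the refinement derivation, making the cases essentially mechanical. For \tchkconst and \tchkvar, $\rtype$ is read off $\tconst$ or the environment, and the hypothesis $\elabtype{\tcenv} = \stripref{\lqenv}$ gives the result directly (noting that $\singleton{\rtype}{\evar}$ strips to the same thing as $\rtype$). For \tchkinterintro we have $\texpr = \pair{\texpr_1}{\texpr_2}$ and $\type = \tand{\type_1}{\type_2}$, so refinement typing must conclude with \lqchkpair giving $\rtype = \tprod{\rtype_1}{\rtype_2}$; the induction hypothesis on the two premises yields $\elabtype{\type_i} = \stripref{\rtype_i}$, and unfolding both $\elabtype{\cdot}$ and $\stripref{\cdot}$ on products produces the required equality. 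The cases \tchkinterelim, \tchkup, \tchkletin, \tchklambda, \tchkapp, \tchkite and the two weakening rules follow the same template, matching each elaboration constructor to its refinement-typing counterpart (\lqchkproj, \lqchkinj, \lqchkletin, \lqchklambda, \lqchkapp, \lqchkite), and extending the environment correspondence whenever a new binder is introduced (which is safe because the added bindings in $\tcenv$ and $\lqenv$ have matching stripped shapes by construction).

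The two subtlest cases, and where I expect the real work, are \tchkdown and \tchkdead. For \tchkdown the target is $\casematch{\texpr_0}{\evar_1}{\texpr_1}{\evar_2}{\texpr_2}$, forcing the refinement derivation to conclude with \lqchkcase; we apply the induction hypothesis to $\expr_0$ (obtaining a sum type whose stripping matches $\elabtype{\tor{\type_1}{\type_2}}$) and then to each context-filled branch under the extended environment, relying on the fact that \lqchkcase threads a common result type through both arms. The most delicate case is \tchkdead, where $\texpr = \deadcast{\type'}{\type}{\texpr_0}$: the refinement system treats this as an application of $\deadcastname{\type'}{\type}$ with signature $\tfun{\fbot{\elabtype{\type'}}}{\fbot{\elabtype{\type}}}$, so \lqchkapp assigns $\rtype$ equal (up to the vacuous substitution) to $\fbot{\elabtype{\type}}$. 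The required equality $\elabtype{\type} = \stripref{\rtype}$ then reduces to a tiny subsidiary claim, $\stripref{\ftx{\sigma}{r}} = \sigma$ for every already-stripped $\sigma$ and any $r$, which is proved by a routine induction on $\sigma$ using the definition of $\ftx{\cdot}{\cdot}$. Once these two cases are dispatched the induction closes uniformly.
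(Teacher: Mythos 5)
Your proposal is correct and follows essentially the same route as the paper's proof: induction on paired elaboration/refinement derivations, matching each elaboration rule to its refinement-typing counterpart (\tchkinterintro with \lqchkpair, \tchkdown with \lqchkcase, \tchkdead with \lqchkapp via the $\fbot{\cdot}$ signature and the observation that stripping undoes it), and threading the environment correspondence through binders. Your explicit auxiliary lemma that subtyping preserves the stripped skeleton, used to discharge \lqchksub, is a detail the paper's case analysis glosses over, but it does not change the structure of the argument.
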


\IfAppendix{
\begin{proof}
% If in an environment $\tcenv$,
% \begin{align}\trans{\tcenv}{\expr}{\type}{}{\texpr}\label{elabtypesound:a}\end{align}
% and for \textit{some} $\lqenv$, $\lqgrd$ and $\rtype$ \st:
% \begin{align}\lqcheck{\deflqstate}{\texpr}{\rtype}\label{elabtypesound:b}\end{align}
% with 
% \begin{align}\elabtype{\tcenv} = \stripref{\lqenv}\label{elabtypesound:c}\end{align}
% %
% then
% %
% $$\elabtype{\type} = \stripref{\rtype}$$
% 

  We prove this by induction on pairs \rulename{T-\textit{Rule}}/\rulename{R-\textit{Rule}} of derivations: 

  \begin{align*}
  \trans{\tcenv}{\expr}{\type}{}{\texpr} \\
  \lqcheck{\lqenv}{\texpr}{\rtype}
  \end{align*}

\begin{enumerate}

\item Case \tchkconst/\lqchkconst:
  \begin{align*}
    \trans{\tcenv}{\vconst}{\tconst}{}{\vconst}  \\
    \lqcheck{\lqenv}{\vconst}{\reftp{\tconst}{\vv=\constval{\vconst}}}
  \end{align*}
  
  Meta-function $\singletonname$ operates entirely on the refinement so it
  holds that:
  $$
  \stripref{\reftp{\tconst}{\vv=\constval{\vconst}}} = \stripref{\tconst}
  $$
  Also, it holds that:
  $$
  \elabtype{\tconst} = \stripref{\tconst}
  $$

\item Case \tchkvar/\lqchkvar:
  $$
  % use inferrule in mbox in tabu
  \begin{tabu} to \textwidth {X[c]X[c]}
    \mbox{\inferrule[]
    % {\tcenvaccess{\evar}{\type}}
    {\envbinding{\evar}{\type}\in \tcenv}
    {\trans{\tcenv}{\evar}{\type}{}{\evar}}}
    &
    \mbox{\inferrule[]
    % {\lqenvidx{\evar} = \rtype}
      {\envbinding{\evar}{\rtype} \in \lqenv}
    {\lqcheck{\lqenv}{\evar}{\singleton{\rtype}{\evar}}}}
  \end{tabu}
  $$
  By inversion: 
  \begin{align}
    % \tcenvaccess{\evar}{\type}                          \label{eq:elaboration:4} \\
    \envbinding{\evar}{\type}\in \tcenv               \label{eq:elaboration:4} \\
    \envbinding{\evar}{\rtype}\in \lqenv               
    % \lqenvidx{\evar} = \rtype
    % \rtype = \singleton{\lqenvidx{\evar}}{\evar}        
    \label{eq:elaboration:5}
  \end{align}

  If $\evar$ is bound multiple times in $\tcenv$ and $\lqenv$, we assume the
  we have picked the correct instances from each environment. 

  By \eqref{eq:elaboration:3} we have that:
%   $$
%   \envdom{\lqenv} = \envdom{\tcenv}
%   $$
%   and:
  $$
  \elabtype{\tcenvidx{\evar}} = \stripref{\lqenvidx{\evar}}
  $$
  Also meta-function $\singletonname$ operates entirely on the refinement so it
  holds that:
  \begin{align}
  \stripref{\rtype} = \stripref{\singleton{\rtype}{\evar}}  \label{eq:elaboration:6}
  \end{align}
  By 
  \eqref{eq:elaboration:4},
  \eqref{eq:elaboration:5} and
  \eqref{eq:elaboration:6}
  it holds that: 
  $$
  \elabtype{\type} = \stripref{\singleton{\rtype}{\evar}}
  $$

\item Case \tchkletin/\lqchkletin:

  From the first premise of the implication:
  $$
  \inferrule[]
  {
    \trans{\deftcstate}{\expr_1}{\type_1}{}{\texpr_1} \\
    \trans{\tcenvext{\evar}{\type_1}}{\expr_2}{\type_2}{}{\texpr_2}
  }
  {\trans{\tcenv}
         {\eletin{\evar}{\expr_1}{\expr_2}}
         {\type_2}
         {}
         {\eletin{\evar}{\texpr_1}{\texpr_2}}}
  $$
  By inversion:
  \begin{align}
    \trans{\deftcstate}{\expr_1}{\type_1}{}{\texpr_1}               \label{eq:elaboration:7}    \\
    \trans{\tcenvext{\evar}{\type_1}}{\expr_2}{\type_2}{}{\texpr_2} \label{eq:elaboration:8}
  \end{align}

  From the second premise of the implication:
  %  \pv{should we be using the dependent type version here (instead of liquid)?}
  $$
  \inferrule[]
  {
    \lqcheck{\lqenv}{\texpr_1}{\rtype_1} \\
    \lqcheck{\lqstateext{\evar}{\rtype_1}}{\texpr_2}{\rtype_2}   \\
    \ifdef{\kvars}{\lqwf{\lqenv}{\rtype_2}}{\relax}
  }
  {\lqcheck{\lqenv}{\eletin{\evar}{\texpr_1}{\texpr_2}}{\rtype_2}}
  $$
  By inversion:
  \begin{align}
    \lqcheck{\lqenv}{\texpr_1}{\rtype_1}                            \label{eq:elaboration:9}    \\
    \lqcheck{\lqstateext{\evar}{\rtype_1}}{\texpr_2}{\rtype_2}      \label{eq:elaboration:10}
  \end{align}
  By i.h. on
  \eqref{eq:elaboration:3},
  \eqref{eq:elaboration:7} and
  \eqref{eq:elaboration:9}:

\begin{align} \elabtype{\type_1} = \stripref{\rtype_1} \label{eq:elaboration:11} \end{align}
  By \eqref{eq:elaboration:3} and \eqref{eq:elaboration:11}:
  \begin{align} 
    \elabtype{\tcenvext{\evar}{\type_1}} = \stripref{\lqenvext{\evar}{\rtype_1}}    \label{eq:elaboration:12}
  \end{align}
  By i.h. on 
  \eqref{eq:elaboration:8},
  \eqref{eq:elaboration:10} and
  \eqref{eq:elaboration:12}:
  $$
  \elabtype{\type_2} = \stripref{\rtype_2}
  $$

\item Case \tchkite/\lqchkite: \textit{Similar to previous case}.

\item Case \tchkinterintro/\lqchkpair: 

  From the first premise of the implication:
  $$
  \inferrule[]
  {
    \foralli {\kindex \in \{1,2\}}
    {\trans{\tcenv}{\val}{\type_{\kindex}}{}{\tval_{\kindex}}}
  }
  {
    \trans{\tcenv}
          {\val}
          {\tand{\type_1}{\type_2}}
          {}{\pair{\tval_1}{\tval_2}}
  }
  $$
  By inversion:
  \begin{align}
    \foralli {\kindex \in \{1,2\}}
    {\trans{\tcenv}{\val}{\type_{\kindex}}{}{\tval_{\kindex}}} \label{eq:elaboration:13}
  \end{align}

  From the second premise of the implication:
  $$
  \inferrule[]
  {\foralli{\kindex \in \{1,2\}}{\lqcheck{\lqenv}{\tval_{\kindex}}{\rtype_{\kindex}}}}
  {\lqcheck{\lqenv}
         {\pair{\tval_1}{\tval_2}}
         {\tprod{\rtype_1}{\rtype_2}}
  }
  $$
  By inversion:
  \begin{align}
    \foralli {\kindex \in \{1,2\}}
    {\lqcheck{\lqenv}{\tval_{\kindex}}{\rtype_{\kindex}}} \label{eq:elaboration:14}
  \end{align}
  By i.h. on
  \eqref{eq:elaboration:3},
  \eqref{eq:elaboration:13} and
  \eqref{eq:elaboration:14}:
  \begin{align}
    \foralli {\kindex \in \{1,2\}}
    {\elabtype{\type_{\kindex}} = \stripref{\rtype_{\kindex}}}
  \end{align}
  Using properties of $\elabtype{\cdot}$ and $\stripref{\cdot}$:
  $$
  \elabtype{\tand{\type_1}{\type_2}} = \tprod{\elabtype{\type_1}}{\elabtype{\type_2}} 
    = \tprod{\stripref{\rtype_1}}{\stripref{\rtype_2}} = \stripref{\tprod{\rtype_1}{\rtype_2}}
  $$

\item Case \tchkinterelim/\lqchkproj: 
  \textit{Straightforward based on earlier cases.}

\item Case \tchklambda/\lqchklambda: 
  \textit{Straightforward based on earlier cases.}

\item Case \tchkapp/\lqchkapp: 
  \textit{Straightforward based on earlier cases.}

\item Case \tchkup/\lqchkinj: 
  \textit{Straightforward based on earlier cases.}

\item Case \tchkdown/\lqchkcase:

  From the first premise of the implication:
  $$
  \inferrule[]
  {
    \phantom{\trans{\tcenv}{\expr_0}{\tor{\type_1}{\type_2}}{}{\texpr_0}}
    \\
    \trans{\tcenvext{\evar_1}{\type_1}}
             {\evalctxarg{\evar_1}}
             {\typeb}
             {}
             {\texpr_1}
    \\\\
    \trans{\tcenv}{\expr_0}{\tor{\type_1}{\type_2}}{}{\texpr_0}
    \\
    \trans{\tcenvext{\evar_2}{\type_2}}
             {\evalctxarg{\evar_2}}
             {\typeb}
             {}
             {\texpr_2}
  }
  {
    \trans{\tcenv}
             {\evalctxarg{\expr_0}}
             {\typeb}
             {}{\casematch{\texpr_0}{\evar_1}{\texpr_1}{\evar_2}{\texpr_2}}
  } 
  $$
  By inversion:
  \begin{align}
    \trans{\tcenv}{\expr_0}{\tor{\type_1}{\type_2}}{}{\texpr_0}
    \label{eq:elaboration:15}
    \\
    \trans{\tcenvext{\evar_1}{\type_1}}
             {\evalctxarg{\evar_1}}
             {\typeb}
             {}
             {\texpr_1}
    \label{eq:elaboration:16}
    \\
    \trans{\tcenvext{\evar_2}{\type_2}}
             {\evalctxarg{\evar_2}}
             {\typeb}
             {}
             {\texpr_2}
    \label{eq:elaboration:17}
  \end{align}

  From the second premise of the implication:
  $$
  \inferrule[]
  {
    \lqcheck{\lqenv}{\texpr_0}{\tsum{\rtype_1}{\rtype_2}}       \\
    \lqcheck{\lqenvext{\evar_1}{\rtype_1}}{\texpr_1}{\ifdef{\kvars}{\ltype}{\rtype}}  \\
    \lqcheck{\lqenvext{\evar_2}{\rtype_2}}{\texpr_2}{\ifdef{\kvars}{\ltype}{\rtype}}  \\
    \ifdef{\kvars}{\lqwf{\lqenv}{\ifdef{\kvars}{\ltype}{\rtype}}}{\relax}
  }
  {\lqcheck{\lqenv}
    {\casematch{\texpr_0}{\evar_1}{\texpr_1}{\evar_2}{\texpr_2}}
    {\ifdef{\kvars}{\ltype}{\rtype}}
  }
  $$
  By inversion:
  \begin{align}
    \lqcheck{\lqenv}{\texpr_0}{\tsum{\rtype_1}{\rtype_2}}      \label{eq:elaboration:18} \\
    \lqcheck{\lqenvext{\evar_1}{\rtype_1}}{\texpr_1}{\ifdef{\kvars}{\ltype}{\rtype}}  \label{eq:elaboration:19} \\
    \lqcheck{\lqenvext{\evar_2}{\rtype_2}}{\texpr_2}{\ifdef{\kvars}{\ltype}{\rtype}}  \label{eq:elaboration:20}
  \end{align}

  By i.h. on 
  \eqref{eq:elaboration:3},
  \eqref{eq:elaboration:15} and 
  \eqref{eq:elaboration:18}:
  $$
  \elabtype{\tor{\type_1}{\type_2}} = \stripref{\tsum{\rtype_1}{\rtype_2}}
  $$
  From properties of type elaboration and refinement types:
  \begin{align*}
    \elabtype{\tor{\type_1}{\type_2}} &= \tsum{\elabtype{\type_1}}{\elabtype{\type_2}} \\
    \stripref{\tsum{\rtype_1}{\rtype_2}} &= \tsum{\stripref{\rtype_1}}{\stripref{\rtype_2}}
  \end{align*}
  The right-hand side of the last two equations are tagged unions, so it is 
  possible to  match the consituent parts by structure:
  $$
  \elabtype{\type_1} = \stripref{\rtype_1}  \quad\text{and}\quad
  \elabtype{\type_2} = \stripref{\rtype_2}
  $$
  Combining the last equation with \eqref{eq:elaboration:3}:
  \begin{align} 
    \elabtype{\tcenvext{\evar}{\type_1}} = \stripref{\lqenvext{\evar}{\rtype_1}}    \label{eq:elaboration:21} \\
    \elabtype{\tcenvext{\evar}{\type_2}} = \stripref{\lqenvext{\evar}{\rtype_2}}    \label{eq:elaboration:22}
  \end{align}
  By i.h. on 
  \eqref{eq:elaboration:16}, 
  \eqref{eq:elaboration:19} and
  \eqref{eq:elaboration:21} (or
  \eqref{eq:elaboration:17}, 
  \eqref{eq:elaboration:20} and
  \eqref{eq:elaboration:22}):
  $$
  \elabtype{\typeb} = \stripref{\rtype}
  $$

\item Case \tchkdead/\lqchkapp:

  From the first premise of the implication:
  $$
  \inferrule[]
    {
      \trans{\tcenv}{\expr}{\type}{}{\texpr} \\
      \typetag{\type}\cap \typetag{\typeb} = \emptyset
    }
    {
      \trans{\tcenv}{\expr}{\typeb}{}{\deadcast{\type}{\typeb}{\texpr}}
    }
  $$

  From the second premise of the implication:
  $$
  \inferrule[]
  {
    \lqcheck{\lqenv}{\deadcastname{\type}{\typeb}}{\tfun{\fbot{\elabtype{\type}}}{\fbot{\elabtype{\typeb}}}}    \\
    \lqcheck{\lqenv}{\texpr}{\rtypeb}
  }
  {\lqcheck{\lqenv}
    {\deadcast{\type}{\typeb}{\texpr}}
         {\appsubst{\rsubst{\texpr}{\evar}}{\fbot{\elabtype{\typeb}}}}
  }
  $$
  The result type of the last derivation can also be written as:
  $$
  \appsubst{\rsubst{\texpr}{\evar}}{\fbot{\elabtype{\typeb}}}  = 
  \fbot{\elabtype{\typeb}}
  $$
  Because after the application of $\fbot{\cdot}$ all original refinement get
  erased.
  Also, after removing the refinements:
  $$
  \stripref{\fbot{\elabtype{\typeb}}} = \elabtype{\typeb}
  $$

\end{enumerate}

\end{proof}

}
{
  \begin{proof}
  By induction on pairs of derivations:
  ${\trans{\tcenv}{\expr}{\type}{}{\texpr}}$ and 
  ${\lqcheck{\lqenv}{\texpr}{\rtype}}$.
  Details can be found in the accompanying report.
  \end{proof}
}

\section{Theorems}\label{sec:theorems}

\begin{theorem}[Consistency]\label{theorem:consistency}
If $\trans{\emp}{\expr}{\type}{}{\texpr}$ and $\steps{\texpr}{}{\texpr'}$ then there
exists $\expr'$ such that $\stepsmany{\expr}{\expr'}$ and
$\trans{\emp}{\expr'}{\type}{}{\texpr'}$.
\end{theorem}

\IfAppendix{
\begin{proof}

By induction on the derivation $\trans{\emp}{\expr}{\type}{}{\texpr}$

\begin{enumerate}

\item Cases \tchkconst, \tchkvar, \tchkinterintro, \tchkinterelim
  and \tchklambda:

  The respective target expression does not step.

\item Case \tchkletin: 
  $$
  \inferrule[]
  {
    \trans{\emp}{\expr_1}{\type_1}{}{\texpr_1} \\
    \trans{\envbinding{\evar}{\type_1}}{\expr_2}{\type_2}{}{\texpr_2}
  }
  {\trans{\emp}{\eletin{\evar}{\expr_1}{\expr_2}}{\type_2}{}{\eletin{\evar}{\texpr_1}{\texpr_2}}}
  $$
  By inversion:
  \begin{align}
    \trans{\emp}{\expr_1}{\type_1}{}{\texpr_1}                              \label{eq:consistency:1} \\
    \trans{\envbinding{\evar}{\type_1}}{\expr_2}{\type_2}{}{\texpr_2}       \label{eq:consistency:2}
  \end{align}

  Cases on the form of $\steps{\texpr}{\texpr'}$:
  \begin{enumerate}
    \item 
      Subcase:
      $$
      \inferrule[]
      {\steps{\texpr_1}{\texpr_1'}}
      {\steps{\eletin{\evar}{\texpr_1}{\texpr_1}}
             {\eletin{\evar}{\texpr_1'}{\texpr_2}}
      }
      $$
      By inversion:
      \begin{align}
        \steps{\texpr_1}{\texpr_1'}                                         \label{eq:consistency:4}
      \end{align}
      By i.h. on \eqref{eq:consistency:1} and \eqref{eq:consistency:4} there exists $\expr_1'$ \st:
      \begin{align}
        \stepsmany{\expr_1}{\expr_1'}                         \notag \\
        \trans{\emp}{\expr_1'}{\type_1}{}{\texpr_1'}          \label{eq:consistency:3}
      \end{align}
      Applying rule \tchkletin on \eqref{eq:consistency:2} and \eqref{eq:consistency:3}:
      $$
      \trans{\emp}{\eletin{\evar}{\expr_1'}{\expr_2}}
                     {\type_2}
                     {}
                     {\eletin{\evar}{\texpr_1'}{\texpr_2}}
      $$

    \item Subcase:
      $$
      \inferrule[]{}
      {\steps{\eletin{\evar}{\tval_1}{\texpr_2}}
             {\appsubst{\esubst{\tval_1}{\evar}}{\texpr_2}}
      }
      $$
      Equation \eqref{eq:consistency:1} becomes:
      $$
      \trans{\emp}{\expr_1}{\type_1}{}{\tval_1} 
      $$
      By Lemma~\ref{lemma:valmonot:app} there exists $\val_1$ such that:
      \begin{align}
        \stepsmany{\expr_1}{\val_1}                         \label{eq:consistency:5} \\
        \trans{\emp}{\val_1}{\type_1}{}{\tval_1}            \label{eq:consistency:6}
      \end{align}
      By Lemma~\ref{lemma:src:evalctx} using~\eqref{eq:consistency:5} on 
      $\evalctx \equiv \eletin{\evar}{\empevalctx}{\expr_2}$ :
      \begin{align}
        \stepsmany{\eletin{\evar}{\expr_1}{\expr_2}}{\eletin{\evar}{\val_1}{\expr_2}} \notag
      \end{align}

      By Lemma~\ref{lemma:subst} 
      on \eqref{eq:consistency:2} and \eqref{eq:consistency:6}, there exists 
      $\expr' \equiv \appsubst{\esubst{\val_1}{\evar}}{\expr}$ \st:
      \begin{align}
        \steps{\eletin{\evar}{\val_1}{\expr_2}}
              {\appsubst{\esubst{\val_1}{\evar}}{\expr_2}} \notag \\
        \trans{\emp}
              {\appsubst{\esubst{\val_1}{\evar}}{\expr_2}}
              {\type_2}{}
              {\appsubst{\esubst{\tval_1}{\evar}}{\texpr_2}}    \notag
      \end{align}

  \end{enumerate}

\item Case \tchkite: 
  $$
  \inferrule[]
  {\trans{\emp}{\expr_c}{\tbool}{}{\texpr} \\ 
  \foralli {i \in \{1,2\}} {\trans{\emp}{\expr_i}{\type}{}{\texpr_i}}}
  {\trans{\emp}{\ite{\expr_c}{\expr_1}{\expr_2}}{\type}{}{\ite{\texpr_c}{\texpr_1}{\texpr_2}}}
  $$
  By inversion:
  \begin{align}
    \trans{\emp}{\expr_c}{\tbool}{}{\texpr_c}  \label{eq:consistency:7} \\
    \trans{\emp}{\expr_1}{\type}{}{\texpr_1}  \label{eq:consistency:8} \\
    \trans{\emp}{\expr_2}{\type}{}{\texpr_2}  \label{eq:consistency:9}
  \end{align}
  Cases on the form of $\steps{\texpr}{\texpr'}$:
  \begin{enumerate}
    \item Subcase:
      $$
      \inferrule[]{\steps{\texpr_c}{\texpr_c'}}
                  {\steps{\ite{\texpr_c}{\texpr_1}{\texpr_2}}
                         {\ite{\texpr_c'}{\texpr_1}{\texpr_2}}}
      $$ 
      By inversion:
      \begin{align}
        \steps{\texpr_c}{\texpr_c'}               \label{eq:consistency:10}
      \end{align}
      By i.h. using \eqref{eq:consistency:7} and \eqref{eq:consistency:10} there exists $\expr_c'$ such that 
      \begin{align}
        \stepsmany{\expr_c}{\expr_c'}     \notag \\
        \trans{\emp}{\expr_c'}{\tbool}{}{\texpr_c'} \label{eq:consistency:11}
      \end{align}
      Applying rule \tchkite on 
      \eqref{eq:consistency:11},
      \eqref{eq:consistency:8}  and
      \eqref{eq:consistency:9}  
      we get:
      \begin{align*}
        \trans{\emp}{\ite{\expr_c'}{\expr_1}{\expr_2}}{\type}{}{\ite{\texpr_c'}{\texpr_1}{\texpr_2}}
      \end{align*}

    \item Subcase: 
      $$
      \steps{\ite{\vtrue}{\texpr_1}{\texpr_2}}{\texpr_1}
      $$
      Equation~\ref{eq:consistency:7} becomes:
      $$
      \trans{\emp}{\expr_c}{\tbool}{}{\vtrue} 
      $$
      By Lemma~\ref{lemma:valmonot:app} there exists $\val_c$ such that:
      \begin{align}
        \stepsmany{\expr_c}{\val_c}                         \label{eq:consistency:12} \\
        \trans{\emp}{\val_c}{\tbool}{}{\vtrue}             \label{eq:consistency:113}
      \end{align}
      The only possible case for \eqref{eq:consistency:113} to hold is:
      $$
      \val_c \equiv \vtrue
      $$
      By applying Lemma~\ref{lemma:src:evalctx} using~\eqref{eq:consistency:12} on 
      $\evalctx \equiv \ite{\empevalctx}{\expr_1}{\expr_2}$:
      \begin{align}
        \stepsmany{\ite{\expr_c}{\expr_1}{\expr_2}}
        {\ite{\vtrue}{\expr_1}{\expr_2}}
        \notag
      \end{align}

      By \opsrccondtrue:
      $$\steps{\ite{\vtrue}{\expr_1}{\expr_2}}{\expr_1}$$ 
      So there exists $\expr' \equiv \expr_1$, such that $\stepsmany{\expr}{\expr'}$ and 
      by \eqref{eq:consistency:8} it holds that: 
      $$\trans{\emp}{\expr'}{\type}{}{\texpr_1}$$

    \item Subcase: 
      $$
      \steps{\ite{\vfalse}{\texpr_1}{\texpr_1}}{\texpr_2}
      $$
      This case is similar to the previous one.
  \end{enumerate}

\item Case \tchkapp: 
  \textit{Similar to proof given by Dunfield~\cite{Dunfield2012} in proof of Theorem 13.}

\item Case \tchkup:
  $$
  \inferrule[]
  {
    \trans{\emp}{\expr}{\type_k}{}{\texpr_0} \\
    \tcwf{\tor{\type_1}{\type_2}}
  } 
  {\trans{\emp}{\expr}{\tor{\type_1}{\type_2}}{}{\inj{k}{\texpr_0}}}
  $$
  By inversion:
  \begin{align}
    \trans{\emp}{\expr}{\type_k}{}{\texpr_0} \label{eq:consistency:100} \\
    \tcwf{\tor{\type_1}{\type_2}}   \label{eq:consistency:104}
  \end{align}
  The only possible case for $\steps{\texpr}{\texpr'}$ is:
  $$
  \inferrule[]
  {\steps{\texpr_0}{\texpr_0'}}
  {\steps{\inj{k}{\texpr_0}}{\inj{k}{\texpr_0'}}}
  $$
  By inversion:
  \begin{align}
    \steps{\texpr_0}{\texpr_0'}    \label{eq:consistency:101}
  \end{align}
  By i.h. using \eqref{eq:consistency:100} and \eqref{eq:consistency:101} there
  exists an $\expr'$ such that:
  \begin{align}
    \stepsmany{\expr}{\expr'}    \label{eq:consistency:102} \\
    \trans{\emp}{\expr'}{\type_k}{}{\texpr_0'} \label{eq:consistency:103}
  \end{align}
  By \tchkup on \eqref{eq:consistency:103} and \eqref{eq:consistency:104}:
  $$
  \trans{\emp}{\expr'}{\tor{\type_1}{\type_2}}{}{\inj{k}{\texpr_0'}} 
  $$

\item Case \tchkdown:
  $$
  \inferrule[]
  {
    \phantom{\trans{\emp}{\expr_0}{\tor{\type_1}{\type_2}}{}{\texpr_0}}
    \\
    \trans{\envbinding{\evar_1}{\type_1}}
             {\evalctxarg{\evar_1}}
             {\typeb}
             {}
             {\texpr_1}
    \\\\
    \trans{\emp}{\expr_0}{\tor{\type_1}{\type_2}}{}{\texpr_0}
    \\
    \trans{\envbinding{\evar_2}{\type_2}}
             {\evalctxarg{\evar_2}}
             {\typeb}
             {}
             {\texpr_2}
  }
  {
    \trans{\emp}
             {\evalctxarg{\expr_0}}
             {\typeb}
             {}{\casematch{\texpr_0}{\evar_1}{\texpr_1}{\evar_2}{\texpr_2}}
  } 
  $$
  By inversion:
  \begin{align}
    \trans{\emp}{\expr_0}{\tor{\type_1}{\type_2}}{}{\texpr_0}    \label{eq:consistency:13}    \\
    \trans{\envbinding{\evar_1}{\type_1}}
             {\evalctxarg{\evar_1}}
             {\typeb}
             {}
             {\texpr_1}                                          \label{eq:consistency:14}  \\
    \trans{\envbinding{\evar_2}{\type_2}}
             {\evalctxarg{\evar_2}}
             {\typeb}
             {}
             {\texpr_2}                                          \label{eq:consistency:15}
  \end{align}

  Cases on the form of $\steps{\texpr}{\texpr'}$:
  \begin{enumerate}
    \item Subcase: 
      $$
      \inferrule[]{\steps{\texpr_0}{\texpr_0'}}{
        \casematch{\texpr_0}{\evar_1}{\texpr_1}{\evar_2}{\texpr_2}
        \longrightarrow \\
        \casematch{\texpr_0'}{\evar_1}{\texpr_1}{\evar_2}{\texpr_2}
      }
      $$
      By inversion:
      \begin{align}
        \steps{\texpr_0}{\texpr_0'}   \label{eq:consistency:16}
      \end{align}
      By i.h. using \eqref{eq:consistency:13} and \eqref{eq:consistency:16} 
      there exists $\expr_0'$ such that
      \begin{align}
        \stepsmany{\expr_0}{\expr_0'} \label{eq:consistency:18} \\
        \trans{\emp}{\expr_0'}{\tor{\type_1}{\type_2}}{}{\texpr_0'}  \label{eq:consistency:17}
      \end{align}
      Applying \tchkdown on \eqref{eq:consistency:17}, 
      \eqref{eq:consistency:14} and \eqref{eq:consistency:15}:
      $$
        \trans{\emp}{\evalctxarg{\expr_0'}}{\type_1}{}
        {\casematch{\texpr_0'}{\evar_1}{\texpr_1}{\evar_2}{\texpr_2}}
      $$
      By applying Lemma~\ref{lemma:tgt:evalctx} using~\ref{eq:consistency:18}:
      $$
      \stepsmany{\evalctxarg{\expr_0}}{\evalctxarg{\expr_0'}}
      $$

    \item Subcase:
      $$
      \steps{\casematch{\inj{1}{\tval}}{\evar_1}{\texpr_1}{\evar_2}{\texpr_2}}
      {\appsubst{\esubst{\tval}{\evar_1}}{\texpr_1}}
      $$
      Equation \eqref{eq:consistency:13} becomes:
      \begin{align}
        \trans{\emp}{\expr_0}{\tor{\type_1}{\type_2}}{}{\inj{1}{\tval}} \label{eq:consistency:19}
      \end{align}
      Applying Lemma~\ref{lemma:valmonot:app} on \eqref{eq:consistency:19}, there exists $\val_0$ such that:
      \begin{align}
        \stepsmany{\expr_0}{\val_0} \label{eq:consistency:20} \\
        \trans{\emp}{\val_0}{\tor{\type_1}{\type_2}}{}{\inj{1}{\tval}} \label{eq:consistency:21}
      \end{align}
      By applying Lemma~\ref{lemma:unions} on \eqref{eq:consistency:21}:
      \begin{align}
        \trans{\emp}{\val_0}{\type_1}{}{\tval} \label{eq:consistency:22}
      \end{align}
      Applying Lemma~\ref{lemma:subst} on \eqref{eq:consistency:14} and \eqref{eq:consistency:22}:
      $$
      \trans{\emp}
               {\appsubst{\esubst{\val_0}{\evar_1}}{\evalctxarg{\evar_1}}}
               {\type_1}
               {}
               {\appsubst{\esubst{\tval}{\evar_1}}{\texpr_1}}
      $$
      Or, after the substitutions\footnote{Variable $\evar_1$ is only referenced in the ``hole"
      of the evaluation context $\evalctxarg{\evar_1}$.}:
      \begin{align}
      \trans{\emp}
               {\evalctxarg{\val_0}}
               {\type_1}
               {}
               {\appsubst{\esubst{\tval}{\evar_1}}{\texpr_1}} \label{eq:consistency:23}
      \end{align}
      Applying Lemma~\ref{lemma:src:evalctx} on \eqref{eq:consistency:20}:
      \begin{align}
        \stepsmany{\evalctxarg{\expr_0}}{\evalctxarg{\val_0}} \notag
      \end{align}

%       By source language operational semantics and 
%       Lemma~\ref{lemma:subst} on (4), (5), there exists 
%       $\expr' = \appsubst{\esubst{\val}{\evar}}{\expr}$ \st:
% 
% 
%       Applying Lemma~\ref{lemma:unions} on \refpar{elab:cons:dead:inv}:
%       \begin{align}
%         \trans{\emp}{\expr}{\type_1}{}{\tval}
%       \end{align}
%       So in this case the same expression $\expr$ can elaborate to both
%       $\casematch{\inj{1}{\tval}}{\evar}{\evar}{}{\deadcast{\elabtype{\type_1}}}$
%       and $\tval$, as the step in the target language corresponds to the
%       \textit{implicit} union elimination caused by the application of
%       Lemma~\ref{lemma:unions}.

    \item Subcase:
      $$
      \steps{\casematch{\inj{2}{\tval}}{\evar_1}{\texpr_1}{\evar_2}{\texpr_2}}
            {\appsubst{\esubst{\tval}{\evar_2}}{\texpr_2}}
      $$
      is similar to the previous one.
% 
%     \item Subcase:
%       $$\steps{\casematch{\inj{2}{\tval}}{\evar}{\evar}{}{\deadcast{\type_1}}}
%       {\deadcast{\elabtype{\type_1}}}$$
%       Inverting $D$ we get:
%       \begin{align}
%         \trans{\emp}{\expr}{\tor{\type_1}{\type_2}}{}{\inj{2}{\tval}}
%         \label{elab:cons:dead:third:inv}
%       \end{align}
%       By well-formedness on the inferred type of elaboration~\ref{elab:cons:down:third:inv}:
%       \begin{align}
%         \typetag{\type_1}\cap \typetag{\type_2} = \emptyset
%         \label{elab:cons:down:taginv}
%       \end{align}
%       Applying Lemma~\ref{lemma:unions} on elaboration~\ref{elab:cons:dead:third:inv}:
% 
%       \begin{align}
%         \trans{\emp}{\expr}{\type_2}{}{\tval}
%       \end{align}
%       Using \refpar{elab:cons:dead:third:inv} and
%       \refpar{elab:cons:down:taginv} as premises to \tchkdead:
%       \begin{align}
%         \trans{\emp}{\expr}{\type_1}{}{\deadcast{\elabtype{\type_1}}}
%       \end{align}
%       For the same reason as before, source expression $\expr$ elaborates
%       (without stepping) to both
%       $\casematch{\inj{2}{\tval}}{\evar}{\evar}{}{\deadcast{\elabtype{\type_1}}}$
%       and $\deadcast{\elabtype{\type_1}}$.
  \end{enumerate}

\item Case \tchkdead:
  $$
  \inferrule[]
  {\trans{\emp}{\expr}{\type}{}{\texpr} \\
   \typetag{\type}\cap \typetag{\typeb} = \emptyset}
   {\trans{\emp}{\expr}{\typeb}{}{\deadcast{\type}{\typeb}{\texpr}}}
  $$
  By inversion: 
  \begin{align}
     \trans{\emp}{\expr}{\type}{}{\texpr} \label{eq:consistency:24} \\
     \typetag{\type}\cap \typetag{\typeb} = \emptyset    \label{eq:consistency:120}
  \end{align}
  The only possible step here is:
  $$
  \inferrule[]
  {\steps{\texpr}{\texpr'}}
  {\steps{\deadcast{\type}{\typeb}{\texpr}}
         {\deadcast{\type}{\typeb}{\texpr'}}}
  $$
  By inversion:
  \begin{align}
    \steps{\texpr}{\texpr'} \label{eq:consistency:110}
  \end{align}
  By i.h. using \eqref{eq:consistency:24} and \eqref{eq:consistency:110} there
  exists $\expr'$ such that:
  \begin{align}
    \stepsmany{\expr}{\expr'} \notag  \\
    \trans{\emp}{\expr'}{\type}{}{\texpr'} \label{eq:consistency:111}
  \end{align}
  By applying \tchkdead on \eqref{eq:consistency:111} and  \eqref{eq:consistency:120}:
  $$
  \trans{\emp}{\expr'}{\typeb}{}{\deadcast{\type}{\typeb}{\texpr'}}
  $$

\end{enumerate}

\end{proof}

}
{
\begin{proof}
  The proof of this theorem is by induction on the derivation
  $\trans{\emp}{\expr}{\type}{}{\texpr}$, adapting the proof scheme
  given by Dunfield~\cite{Dunfield2012}.
  It reuses adapted versions of the lemmas used by Dunfield and
  Lemma~\ref{lemma:valmonot}. 
  Full details of the proof can be found in the accompanying report.
  \pv{How is this going to be cited?}
\end{proof}
}

\begin{theorem}[Reverse Consistency]  \label{theorem:rev:consistency}
If 
$\trans{\emp}{\expr}{\type}{}{\texpr}$ 
and 
$\steps{\expr}{}{\expr'}$ 
then
there exists $\texpr'$ such that 
  $\trans{\emp}{\expr'}{\type}{}{\texpr'}$, 
  % and $\stepsmany{\texpr}{\texpr'}$.
  and $\stepsmanyone{\texpr}{\texpr'}$.
%
%  \item $\texpr \equiv  \checkevalctx{\tevalctx}{\deadcast{\typeb}{\type}{\tval}}$
%
\end{theorem}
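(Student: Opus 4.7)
The plan is to mirror the proof of Theorem~\ref{theorem:consistency}, but now driven by the source step rather than a target step. I would proceed by induction on the elaboration derivation $\trans{\emp}{\expr}{\type}{}{\texpr}$: in each case I invert the elaboration rule, case-split on $\steps{\expr}{\expr'}$, and exhibit a witness $\texpr'$ together with a reduction sequence $\stepsmanyone{\texpr}{\texpr'}$. The rules whose source conclusion is already a syntactic value (\tchkconst, \tchkvar, \tchkinterintro, \tchklambda) are vacuous because source values do not step.

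The congruence-shaped rules \tchkletin, \tchkite, \tchkapp, \tchkup, \tchkinterelim and \tchkdead split into two subcases. A contextual source step inside a subterm propagates by the IH followed by a lift through the corresponding target evaluation context. A genuine source reduction occurs once some subterm has become a value $\val$ (the bound expression of a let, the guard of a conditional, or the function or argument of an application); its elaboration $\texpr_v$ need not yet be a target value, so I would first invoke Lemma~\ref{lemma:rev:valmonot} to drive $\texpr_v$ to a target value $\tval$ with $\trans{\emp}{\val}{\type}{}{\tval}$. The target then fires \optgtlet, \optgtcondtrue/\optgtcondfalse, \optgtbeta, or \optgtappa, supplying the strict step demanded by $\stepsmanyone$; the primitive subcase invokes Assumption~\ref{assum:const:app} (whose side condition $\tval \not\equiv \deadcast{\cdot}{\type}{\cdot}$ is forced by the success of the source reduction), and the substitution-producing cases close via Lemma~\ref{lemma:subst}.

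The case \tchkdown requires more care: the source has shape $\evalctxarg{\expr_0}$ and the target is $\casematch{\texpr_0}{\evar_1}{\texpr_1}{\evar_2}{\texpr_2}$. If the source step happens strictly inside $\expr_0$, the IH on the premise with conclusion $\tor{\type_1}{\type_2}$ supplies a target step on $\texpr_0$ that lifts through the case context. Otherwise $\expr_0$ is a value and the source step occurs under $\evalctx$; I would then use Lemma~\ref{lemma:rev:valmonot} to drive $\texpr_0$ to an injection $\inj{k}{\tval}$, trigger \optgtcase in the target to substitute $\tval$ into $\texpr_k$, and apply Lemma~\ref{lemma:subst} to obtain the correctly elaborated residual.

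The main obstacle will be securing strict progress ($\stepsmanyone$, not $\stepsmany$) in every subcase. Lemma~\ref{lemma:rev:valmonot} on its own only provides a possibly empty preparatory prefix, so the obligation reduces to showing that the matching target reduction rule is genuinely enabled once the arguments are in value form. The subtle point is that after Lemma~\ref{lemma:rev:valmonot}, the elaborated head of an application must have the correct canonical shape---a target lambda for \optgtbeta, or the original primitive for \optgtappa, and in particular not a $\tdead$-cast wrapper---for the reduction to fire. Discharging these canonical-forms facts, and threading the evaluation context through \tchkdown so that the case-split on $\inj{k}{\tval}$ lines up with the source's plug into $\evalctx$, is the technically delicate part of the proof.
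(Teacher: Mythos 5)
Your proposal matches the paper's proof: induction on the elaboration derivation, case-split on the source step, Lemma~\ref{lemma:rev:valmonot} to drive elaborated subterms to target values, Lemma~\ref{lemma:tgt:evalctx} to lift steps, Lemma~\ref{lemma:subst} to close the substitution cases, and canonical-forms facts (the paper's Beta/Primitive Reduction Canonical Form lemmas and Assumption~\ref{assum:const:app}) to guarantee the target reduction actually fires and yields the strict $\stepsmanyone$. The delicate points you flag --- strict progress and ruling out $\tdead$-cast heads --- are exactly the ones the paper discharges with those auxiliary lemmas.
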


\IfAppendix{
\begin{proof}

By induction on the derivation $\trans{\emp}{\expr}{\type}{}{\texpr}$

\begin{enumerate}

\item Cases \tchkconst, \tchkvar, \tchkinterintro, \tchklambda: 

  The respective source expression does not step.

\item Case \tchkletin: 
  \begin{align}
  \inferrule[]
  {
    \trans{\emp}{\expr_1}{\type_1}{}{\texpr_1} \\
    \trans{\envbinding{\evar}{\type_1}}{\expr_2}{\type_2}{}{\texpr_2}
  }
  {\trans{\emp}{\eletin{\evar}{\expr_1}{\expr_2}}{\type_2}{}{\eletin{\evar}{\texpr_1}{\texpr_2}}}
  \label{eq:rev:consistency:87}
  \end{align}
  By inversion:
  \begin{align}
    \trans{\emp}{\expr_1}{\type_1}{}{\texpr_1}     \label{eq:rev:consistency:1} \\
    \trans{\envbinding{\evar}{\type_1}}{\expr_2}{\type_2}{}{\texpr_2}     \label{eq:rev:consistency:2}
  \end{align}

  Cases on the form of $\steps{\expr}{\expr'}$:
  \begin{enumerate}
    \item 
      Subcase:
      $$
      \inferrule[]
      {\steps{\expr_1}{\expr_1'}}
      {\steps{\eletin{\evar}{\expr_1}{\expr_2}}
             {\eletin{\evar}{\expr_1'}{\expr_2}}
      } 
      $$
      By inversion:
      \begin{align}
        \steps{\expr_1}{\expr_1'}                             \label{eq:rev:consistency:3}
      \end{align}
      By i.h. using \eqref{eq:rev:consistency:1} and
      \eqref{eq:rev:consistency:3}: % we have two cases:
      %\begin{itemize}
      %  \item 
      There exists $\texpr_1'$ such that:
      \begin{align}
        \trans{\emp}{\expr_1'}{\type_1}{}{\texpr_1'}          \label{eq:rev:consistency:4}  \\
        \stepsmanyone{\texpr_1}{\texpr_1'}                       \label{eq:rev:consistency:5}
      \end{align}
      Applying rule \tchkletin on \eqref{eq:rev:consistency:2} and \eqref{eq:rev:consistency:4}:
      $$
      \trans{\emp}{\eletin{\evar}{\expr_1'}{\expr_2}}
                     {\type_2}
                     {}
                     {\eletin{\evar}{\texpr_1'}{\texpr_2}}
      $$
      By Lemma~\ref{lemma:tgt:evalctx} on \eqref{eq:rev:consistency:5} we get:
      $$
      \stepsmanyone{\texpr}{\texpr'}
      $$

      %  \item there is a \tdead-cast in checking position:
      %      $$\texpr_1 \equiv \checkevalctx{\tevalctx}{\deadcast{\typeb}{\type_1}{\tval_1}}$$ 
      %      By Lemma~\ref{lemma:compose:check:evalctx}, 
      %      $\deadcast{\typeb}{\type_1}{\tval}$ is also in checking position in:
      %      $$\texpr \equiv \eletin{\evar}
      %              {\parens{\checkevalctx{\tevalctx}{\deadcast{\typeb}{\type_1}{\tval_1}}}}
      %              {\texpr_2}$$ 
      %      So we proved the second case.
      %\end{itemize}

    \item Subcase:
      $$
      \inferrule[]{}
      {\steps{\eletin{\evar}{\val_1}{\expr_2}}
             {\appsubst{\esubst{\val_1}{\evar}}{\expr_2}}
      }
      $$
      Equation \eqref{eq:rev:consistency:1} becomes:
      \begin{align}
        \trans{\emp}{\val_1}{\type_1}{}{\texpr_1}
        \label{eq:rev:consistency:120}
      \end{align}
      % By Lemma~\ref{lemma:no:dead:nocheck} on \eqref{eq:rev:consistency:87}
      %with $\eletin{\evar}{\val_1}{\expr_2} \equiv \noncheckevalctx{\evalctx}{\val_1}$,
      By Lemma~\ref{lemma:rev:valmonot:app} on  \eqref{eq:rev:consistency:120}
      there exists $\tval_1$ such that:
      \begin{align}
        \stepsmany{\texpr_1}{\tval_1}                         \label{eq:rev:consistency:6} \\
        \trans{\emp}{\val_1}{\type_1}{}{\tval_1}              \label{eq:rev:consistency:7}
        %\trans{\emp}
        %{\eletin{\evar}{\val_1}{\expr_2}}
        %      {\type_2}{}
        %      {\eletin{\evar}{\tval_1}{\texpr_2}}
      \end{align}
      By Lemma~\ref{lemma:tgt:evalctx} using~\eqref{eq:rev:consistency:6} on 
      $\tevalctx \equiv \eletin{\evar}{\empevalctx}{\texpr_2}$:
      \begin{align}
      \stepsmany{\eletin{\evar}{\texpr_1}{\texpr_2}}{\eletin{\evar}{\tval_1}{\texpr_2}}
      \label{eq:rev:consistency:123}
      \end{align}
      By \optgtlet:
      \begin{align}
      \steps{\eletin{\evar}{\tval_1}{\texpr_2}}{\appsubst{\esubst{\tval_1}{\evar}}{\texpr_2}}
      \label{eq:rev:consistency:122}
      \end{align}
      By \eqref{eq:rev:consistency:123} and \eqref{eq:rev:consistency:122}:
      $$
      \stepsmanyone{\eletin{\evar}{\texpr_1}{\texpr_2}}{\appsubst{\esubst{\tval_1}{\evar}}{\texpr_2}}
      $$
      And by Lemma~\ref{lemma:subst} 
      on \eqref{eq:rev:consistency:2} and \eqref{eq:rev:consistency:7}:
      \begin{align}
        \trans{\emp}
              {\appsubst{\esubst{\val_1}{\evar}}{\expr_2}}
              {\type_2}{}
              {\appsubst{\esubst{\tval_1}{\evar}}{\texpr_2}}    \notag
      \end{align}

  \end{enumerate}

\item Case \tchkite: 
  \begin{align}
  \inferrule[]
  {\trans{\emp}{\expr_c}{\tbool}{}{\texpr} \\ 
  \foralli {i \in \{1,2\}} {\trans{\emp}{\expr_i}{\type}{}{\texpr_i}}}
  {\trans{\emp}{\ite{\expr_c}{\expr_1}{\expr_2}}{\type}{}{\ite{\texpr_c}{\texpr_1}{\texpr_2}}}
  \label{eq:rev:consistency:88}
  \end{align}
  By inversion:
  \begin{align}
    \trans{\emp}{\expr_c}{\tbool}{}{\texpr_c}  \label{eq:rev:consistency:8} \\
    \trans{\emp}{\expr_1}{\type}{}{\texpr_1}  \label{eq:rev:consistency:9} \\
    \trans{\emp}{\expr_2}{\type}{}{\texpr_2}  \label{eq:rev:consistency:10}
  \end{align}
  Cases on the form of $\steps{\expr}{\expr'}$:
  \begin{enumerate}
    \item Subcase:
      $$
      \inferrule[]{\steps{\expr_c}{\expr_c'}}
                  {\steps{\ite{\expr_c}{\expr_1}{\expr_2}}
                         {\ite{\expr_c'}{\expr_1}{\expr_2}}}
      $$ 
      By inversion:
      \begin{align}
        \steps{\expr_c}{\expr_c'}               \label{eq:rev:consistency:11}
      \end{align}
      By i.h. using \eqref{eq:rev:consistency:8} and
      \eqref{eq:rev:consistency:11}
      %we have two cases:
      %\begin{itemize}
      %  \item 
      there exists $\texpr_c'$ such that:
      \begin{align}
        \trans{\emp}{\expr_c'}{\tbool}{}{\texpr_c'} \label{eq:rev:consistency:12} \\
        \stepsmanyone{\texpr_c}{\texpr_c'}     \label{eq:rev:consistency:121}
      \end{align}
      By Lemma~\ref{lemma:tgt:evalctx} using \eqref{eq:rev:consistency:121}:
      $$
      \stepsmanyone
        {\ite{\texpr_c}{\texpr_1}{\texpr_2}}
        {\ite{\texpr_c'}{\texpr_1}{\texpr_2}}
      $$
      Applying rule \tchkite on 
      \eqref{eq:rev:consistency:12},
      \eqref{eq:rev:consistency:9}  and
      \eqref{eq:rev:consistency:10}  
      we get:
      $$
      \trans{\emp}{\ite{\expr_c'}{\expr_1}{\expr_2}}{\type}{}{\ite{\texpr_c'}{\texpr_1}{\texpr_2}}
      $$

      %  \item there is a \tdead-cast in checking position at the current evaluation context:
      %      $$\texpr_c \equiv \checkevalctx{\tevalctx}{\deadcast{\typeb}{\tbool}{\tval_c}}$$ 
      %      By Lemma~\ref{lemma:compose:check:evalctx}, 
      %      $\deadcast{\typeb}{\tbool}{\tval_c}$ is also in checking position in the expression
      %      $$
      %      \texpr \equiv \ite{\parens{\deadcast{\typeb}{\tbool}{\tval_c}}}{\texpr_1}{\texpr_2}
      %      $$
      %      So we proved the second case.

      %\end{itemize}

    \item Subcase: 
      \begin{align}
        \steps{\ite{\vtrue}{\expr_1}{\expr_2}}{\expr_1}
        \label{eq:rev:consistency:131}
      \end{align}
      Equation~\ref{eq:rev:consistency:8} becomes:
      \begin{align}
        \trans{\emp}{\vtrue}{\tbool}{}{\texpr_c}  \label{eq:rev:consistency:13}
      \end{align}
      By Lemma~\ref{lemma:rev:valmonot:app} there exists $\tval_c$ 
      such that:
      % one of the following cases hold:
      % \begin{itemize}
      \begin{align}
        \stepsmany{\texpr_c}{\tval_c}     \label{eq:rev:consistency:133}  \\
        \trans{\emp}{\vtrue}{\tbool}{}{\tval_c} \label{eq:rev:consistency:130}
      \end{align}
      By Lemma~\ref{lemma:tgt:evalctx} using \eqref{eq:rev:consistency:133}:
      \begin{align}
        \stepsmany{\ite{\texpr_c}{\texpr_1}{\texpr_2}}
                  {\ite{\tval_c}{\texpr_1}{\texpr_2}}              \label{eq:rev:consistency:134}
      \end{align}
      By Lemma~\ref{lemma:reduce:ite} on 
      \eqref{eq:rev:consistency:130}, 
      \eqref{eq:rev:consistency:9}, 
      \eqref{eq:rev:consistency:10} and 
      \eqref{eq:rev:consistency:131}: 
      \begin{align}
        \tval_c \equiv \vtrue 
      \end{align}
      By \optgtcondtrue:
      \begin{align}
        \steps{\ite{\vtrue}{\texpr_1}{\texpr_2}}{\texpr_1}         \label{eq:rev:consistency:136}
      \end{align}
      By \eqref{eq:rev:consistency:134} and \eqref{eq:rev:consistency:136}:
      $$
      \stepsmanyone{\ite{\texpr_c}{\texpr_1}{\texpr_2}}{\texpr_1}  
      $$      
      Combining with \eqref{eq:rev:consistency:9} we get the wanted relation.

      %  \item $\texpr_c$ steps to \tdead-cast:
      %    \begin{align}
      %      \stepsmany{\texpr_c}{\deadcast{\typeb}{\tbool}{\tval_c}}  \label{eq:rev:consistency:100}
      %    \end{align}
      %    The position of $\texpr_c$ is checking in the current context:
      %    \begin{align}
      %      \checkevalctx{\tevalctx}{\texpr_c} \equiv \ite{\texpr_c}{\texpr_1}{\texpr_2}  \label{eq:rev:consistency:101}
      %    \end{align}
      %    By Lemma~\ref{lemma:tgt:evalctx} using \eqref{eq:rev:consistency:100}:
      %    $$
      %    \stepsmany{\checkevalctx{\tevalctx}{\texpr_c}}
      %              {\checkevalctx{\tevalctx}{
      %              \deadcast{\typeb}{\tbool}{\tval_c}
      %              }}
      %    $$
      %    So combining with \eqref{eq:rev:consistency:101} we have proved the second case.

      %\end{itemize}

    \item Subcase: 
      $$
      \steps{\ite{\vfalse}{\expr_1}{\expr_1}}{\expr_2}
      $$
      \textit{Similar to the previous case.}

  \end{enumerate}

\item Case \tchkinterelim: \textit{Similar to earlier cases.}

\item Case \tchkapp:
  \begin{align}
  \inferrule[]
  {
    \trans{\emp}{\expr_1}{\tfun{\type}{\typeb}}{}{\texpr_1}  \\
    \trans{\emp}{\expr_2}{\type}{}{\texpr_2}
  }
  {\trans{\emp}{\app{\expr_1}{\expr_2}}{\typeb}{}{\app{\texpr_1}{\texpr_2}}}
  \label{eq:rev:consistency:95}
  \end{align}
  By inversion:
  \begin{align}
    \trans{\emp}{\expr_1}{\tfun{\type}{\typeb}}{}{\texpr_1}   \label{eq:rev:consistency:14}  \\
    \trans{\emp}{\expr_2}{\type}{}{\texpr_2}                  \label{eq:rev:consistency:15}
  \end{align}
  Cases on the form of $\steps{\expr}{\expr'}$:
  \begin{enumerate}
    \item Subcase:
      $$
      \inferrule[]
      {\steps{\expr_1}{\expr_1'}}
      {\steps{\app{\expr_1}{\expr_2}}{\app{\expr_1'}{\expr_2}}}
      $$
      \textit{Similar to eariler cases, e.g.
        $\steps{\eletin{\evar}{\expr_1}{\expr_2}}
      {\eletin{\evar}{\expr_1'}{\expr_2}}$}

    \item Subcase:
      $$
      \inferrule[]
      {\steps{\expr_2}{\expr_2'}}
      {\steps{\app{\val_1}{\expr_2}}{\app{\val_1}{\expr_2'}}}
      $$
      By inversion:
      \begin{align}
        \steps{\expr_2}{\expr_2'} \label{eq:rev:consistency:90}
      \end{align}
      By Lemma~\ref{lemma:rev:valmonot:app} on \eqref{eq:rev:consistency:14} there
      exists $\tval_1$ such that:
      \begin{align}
        \stepsmany{\texpr_1}{\tval_1}                         \label{eq:rev:consistency:96} \\
        \trans{\emp}{\val_1} {\type_2}{} {\tval_1}             \label{eq:rev:consistency:97}
      \end{align}
      %By Lemma~\ref{lemma:no:dead:nocheck} on \eqref{eq:rev:consistency:95}
      %with $\app{\val_1}{\expr_2} \equiv \noncheckevalctx{\evalctx}{\val_1}$,
      %there exists $\tval_1$ such that:
      %\begin{align}
      %  \stepsmany{\texpr_1}{\tval_1}                         \label{eq:rev:consistency:96} \\
      %  % \trans{\emp}{\val_1}{\type_1}{}{\tval_1}              \label{eq:rev:consistency:7}
      %  \trans{\emp}
      %        {\app{\val_1}{\expr_2}}
      %        {\type_2}{}
      %        {\app{\tval_1}{\texpr_2}}
      %        \label{eq:rev:consistency:97}
      %\end{align}
      By Lemma~\ref{lemma:tgt:evalctx} using \eqref{eq:rev:consistency:96}:
      \begin{align}
        \stepsmany{\app{\texpr_1}{\texpr_2}}
        {\app{\tval_1}{\texpr_2}} \label{eq:rev:consistency:99}
      \end{align}

      By i.h. using \eqref{eq:rev:consistency:15} and
      \eqref{eq:rev:consistency:90}
      % we have two cases:
      there exists $\texpr_2'$ such that:
          \begin{align}
            \stepsmanyone{\texpr_2}{\texpr_2'}                 \label{eq:rev:consistency:91} \\
            \trans{\emp}{\expr_2'}{\type}{}{\texpr_2'}     \label{eq:rev:consistency:92}
          \end{align}
          By Lemma~\ref{lemma:tgt:evalctx} using \eqref{eq:rev:consistency:91} on the target of 
          \eqref{eq:rev:consistency:97}:
          $$
          \stepsmanyone{\app{\tval_1}{\texpr_2}}
                    {\app{\tval_1}{\texpr_2'}}
          $$
          And combining with \eqref{eq:rev:consistency:99}:
          $$
          \stepsmanyone{\app{\texpr_1}{\texpr_2}}
                          {\app{\tval_1}{\texpr_2'}}
          $$
          By rule \tchkapp using \eqref{eq:rev:consistency:14} and \eqref{eq:rev:consistency:92}:
          $$
          \trans{\emp}
          {\app{\val_1}{\expr_2'}}{\type}{}
          {\app{\tval_1}{\texpr_2'}}
          $$
          % So we proved the first case.

      %  \item there is a \tdead-cast in checking position at the current evaluation context:
      %      $$
      %      \texpr_2 \equiv \checkevalctx{\tevalctx}{\deadcast{\typeb}{\type}{\tval_2}}
      %      $$ 
      %      By Lemma~\ref{lemma:compose:check:evalctx}, 
      %      $\deadcast{\typeb}{\tbool}{\tval_c}$ is also in checking position in the expression:
      %      $$
      %      \texpr \equiv \app{\tval_1}
      %        {\parens{\checkevalctx{\tevalctx}{\deadcast{\typeb}{\type}{\tval_2}}}}
      %      $$
      %      So we proved the second case.
      %\end{itemize}

    \item Subcase:
      \begin{align}
        \steps{\app{\parens{\elambda{\evar}{\expr_0}{}}}{\val_2}}
              {\appsubst{\esubst{\val_2}{\evar}}{\expr_0}}
              \label{eq:rev:consistency:140}
      \end{align}
      %In this case, $\expr_1$ is a value ($\expr_1 \equiv
      %\elambda{\evar}{\expr_0}{}$).

      %By Lemma~\ref{lemma:no:dead:nocheck} on \eqref{eq:rev:consistency:14}
      %with $\app{\parens{\elambda{\evar}{\expr_0}{}}}{\val_2} \equiv
      %\noncheckevalctx{\evalctx}{\elambda{\evar}{\expr_0}{}}$, there exists $\tval_1$ such that: 
      %\begin{align}
      %  \trans{\emp}{\app{\parens{\elambda{\evar}{\expr_0}{}}}{\val_2}} 
      %              {\tfun{\type}{\typeb}}{}
      %              {\app{\tval_1}{\tval_2}}  \label{eq:rev:consistency:80} \\ % {\elambda{\evar}{\texpr_0}{}}
      %  \stepsmany{\texpr_1}{\tval_1}                                               \label{eq:rev:consistency:82} 
      %\end{align}
      By Lemma~\ref{lemma:rev:valmonot:app} on  \eqref{eq:rev:consistency:14}
      there exists $\tval_1$ such that:
      \begin{align}
        \trans{\emp}{\elambda{\evar}{\expr_0}{}} 
                    {\tfun{\type}{\typeb}}{}
                    {\tval_1}           \label{eq:rev:consistency:80} \\ 
        \stepsmany{\texpr_1}{\tval_1}                                               \label{eq:rev:consistency:82} 
      \end{align}
      By applying Lemma~\ref{lemma:tgt:evalctx} on 
      $\texpr \equiv \app{\texpr_1}{\texpr_2}$ 
      given \eqref{eq:rev:consistency:19}:
      \begin{align}
        \stepsmany{\app{\texpr_1}{\texpr_2}} 
                  {\app{\tval_1}{\texpr_2}} \label{eq:rev:consistency:110}
      \end{align}
      Equation~\eqref{eq:rev:consistency:15} is:
      \begin{align}
        \trans{\emp}{\val_2}{\type}{}{\texpr_2}                  \label{eq:rev:consistency:17}
      \end{align}
      By Lemma~\ref{lemma:rev:valmonot:app}      on \eqref{eq:rev:consistency:17}, 
      there exists $\tval_2$ such that:
      \begin{align}  
        \stepsmany{\texpr_2}{\tval_2}              \label{eq:rev:consistency:19}  \\
        \trans{\emp}{\val_2}{\type}{}{\tval_2}                 \label{eq:rev:consistency:18}
      \end{align}

      By Lemma~\ref{lemma:beta:reduction}  on 
      \eqref{eq:rev:consistency:80}, 
      \eqref{eq:rev:consistency:18} and
      \eqref{eq:rev:consistency:140}, 
      there is a $\texpr_0$ such that:
      $$
      \tval_1 \equiv \elambda{\evar}{\texpr_0}{}
      $$
      So \eqref{eq:rev:consistency:14} becomes:
      \begin{align}
        \trans{\emp}{\elambda{\evar}{\expr_0}{}}{\tfun{\type}{\typeb}}{}{\elambda{\evar}{\texpr_0}{}} \label{eq:rev:consistency:81}
      \end{align}
%       By Lemma~\ref{lemma:tgt:evalctx} using \eqref{eq:rev:consistency:82}:
%       \begin{align}
%         \stepsmany{\app{\texpr_1}{\texpr_2}}
%                   {\app{\parens{\elambda{\evar}{\texpr_0}{}}}{\texpr_2}} \label{eq:rev:consistency:84}
%       \end{align}
      The only production of \eqref{eq:rev:consistency:81} is by \tchklambda:
      $$
      \inferrule[]
      {
        \tcwf{\tfun{\type}{\typeb}} \\
        \trans{\envbinding{\evar}{\type}}{\expr_0}{\typeb}{}{\texpr_0}
      }
      {\trans{\tcenv}
        {\elambda{\evar}{\expr_0}{}}
        {\tfun{\type}{\typeb}}{}
        {\elambda{\evar}{\texpr_0}{}}
      }
      $$
      By inversion:
      \begin{align}
        \trans{\envbinding{\evar}{\type}}{\expr_0}{\typeb}{}{\texpr_0}  \label{eq:rev:consistency:16}
      \end{align}

      By applying Lemma~\ref{lemma:subst} on \eqref{eq:rev:consistency:16} and 
      \eqref{eq:rev:consistency:18} we get:
      \begin{align}
        \trans{\emp}{\appsubst{\esubst{\val_2}{\evar}}{\expr_0}}{\typeb}{}
              {\appsubst{\esubst{\tval_2}{\evar}}{\texpr_0}}
              \label{eq:rev:consistency:111}
      \end{align}
      By applying Lemma~\ref{lemma:tgt:evalctx} on $\texpr \equiv 
      \app{\parens{\elambda{\evar}{\texpr_0}{}}}{\texpr_2}$ given \eqref{eq:rev:consistency:19}:
      \begin{align}
        \stepsmany{\app{\parens{\elambda{\evar}{\texpr_0}{}}}{\texpr_2}} 
                  {\app{\parens{\elambda{\evar}{\texpr_0}{}}}{\tval_2}} 
        \label{eq:rev:consistency:20}
      \end{align}
      By rule \optgtbeta: 
      \begin{align}
      \steps
      {\app{\parens{\elambda{\evar}{\texpr_0}{}}}{\tval_2}} 
      {\appsubst{\esubst{\tval_2}{\evar}}{\texpr_0}}
        \label{eq:rev:consistency:22}
      \end{align}
      By \eqref{eq:rev:consistency:110}, \eqref{eq:rev:consistency:20} and \eqref{eq:rev:consistency:22} we get:
      \begin{align}
      \stepsmanyone
      {\app{\texpr_1}{\texpr_2}} 
      {\appsubst{\esubst{\tval_2}{\evar}}{\texpr_0}}
      \label{eq:rev:consistency:112}
      \end{align}
      By \eqref{eq:rev:consistency:111} and 
       \eqref{eq:rev:consistency:112} we get the wanted relation.

    \item {Subcase:}
      \begin{align}
        \steps{\app{\vconst}{\val}}{\primapp{\vconst}{\val}}
        \label{eq:rev:consistency:116}
      \end{align}
      Equations \eqref{eq:rev:consistency:14} and \eqref{eq:rev:consistency:15} become:
      \begin{align}
        \trans{\emp}{\vconst}{\tfun{\type}{\typeb}}{}{\texpr_1}   \label{eq:rev:consistency:50}  \\
        \trans{\emp}{\val}{\type}{}{\texpr_2}                  \label{eq:rev:consistency:51}
      \end{align}

      By Lemma~\ref{lemma:rev:valmonot:app} on  \eqref{eq:rev:consistency:50}
      there exists $\tval_1$ such that:
      \begin{align}
        \trans{\emp}{\vconst} 
        {\tfun{\type}{\typeb}}{}
              {\tval_1}                     \label{eq:rev:consistency:113} \\ 
        \stepsmany{\texpr_1}{\tval_1}       \label{eq:rev:consistency:82} 
      \end{align}
      By Lemma~\ref{lemma:tgt:evalctx} on 
      $\texpr \equiv \app{\texpr_1}{\texpr_2}$
      given \eqref{eq:rev:consistency:82}:
      \begin{align}
        \stepsmany{\app{\texpr_1}{\texpr_2}} 
                  {\app{\tval_1}{\texpr_2}} \label{eq:rev:consistency:114}
      \end{align}

      By Lemma~\ref{lemma:rev:valmonot:app} on \eqref{eq:rev:consistency:51} there
      exists $\tval_2$ such that:
      \begin{align}
        \stepsmany{\texpr_2}{\tval_2}                         \label{eq:rev:consistency:54} \\
        \trans{\emp}{\val}{\type}{}{\tval_2}                 \label{eq:rev:consistency:55} 
      \end{align}
      By Lemma~\ref{lemma:tgt:evalctx} on \eqref{eq:rev:consistency:54}:
      \begin{align}
        \stepsmany{\app{\vconst}{\texpr_2}}{\app{\vconst}{\tval_2}}
        \label{eq:rev:consistency:115}
      \end{align}

      By Lemma~\ref{lemma:primitive:reduction} on 
      \eqref{eq:rev:consistency:113},
      \eqref{eq:rev:consistency:55} and 
      \eqref{eq:rev:consistency:116}:
      \begin{align}
        \tval_1 \equiv \vconst  \label{eq:rev:consistency:142}  \\
        \tval_2 \not\equiv \deadcast{\emp}{\type}{\emp}      \label{eq:rev:consistency:143}
      \end{align}
      So \eqref{eq:rev:consistency:113} becomes:
      \begin{align}
        \trans{\emp}{\vconst} 
              {\tfun{\type}{\typeb}}{}
              {\vconst}                     \label{eq:rev:consistency:114}
      \end{align}
      So we can apply \optgtappa:
      \begin{align}
        \steps{\app{\vconst}{\tval_2}}{\primapp{\vconst}{\tval_2}}
        \label{eq:rev:consistency:146}
      \end{align}
      By \eqref{eq:rev:consistency:114}, \eqref{eq:rev:consistency:115} and
      \eqref{eq:rev:consistency:146}:
      $$
      \stepsmanyone{\app{\texpr_1}{\texpr_2}}{\primapp{\vconst}{\tval_2}}
      $$
      By assumption~\ref{assum:const:app:app} using
      \eqref{eq:rev:consistency:114}, \eqref{eq:rev:consistency:143} and \eqref{eq:rev:consistency:55}:
      $$  
      \trans{\emp}{\primapp{\vconst}{\val}}{\type}{}{\primapp{\vconst}{\tval_2}}
      $$

    \end{enumerate}

\item Case \tchkup:
  $$
  \inferrule[]
  {
    \trans{\emp}{\expr}{\type_k}{}{\texpr} \\
    \tcwf{\tor{\type_1}{\type_2}}
  } 
  {\trans{\emp}{\expr}{\tor{\type_1}{\type_2}}{}{\inj{k}{\texpr}}}
  $$
  By inversion:
  \begin{align}
    \trans{\emp}{\expr}{\type_k}{}{\texpr}  \label{eq:rev:consistency:70} \\
    \tcwf{\tor{\type_1}{\type_2}}           \label{eq:rev:consistency:72} 
  \end{align}
  By i.h. using \eqref{eq:rev:consistency:70} with $\steps{\expr}{\expr'}$, 
  there exists $\texpr'$ such that:
  \begin{align}
    \trans{\emp}{\expr'}{\type_k}{}{\texpr'}  \label{eq:rev:consistency:71} \\
    \stepsmanyone{\texpr}{\texpr'} \label{eq:rev:consistency:148}
  \end{align}
  By Lemma~\ref{lemma:tgt:evalctx} using \eqref{eq:rev:consistency:148}:
  $$
  \stepsmanyone{\inj{k}{\texpr}}
            {\inj{k}{\texpr'}}
  $$
  Applying \tchkup with premises \eqref{eq:rev:consistency:71} and \eqref{eq:rev:consistency:72}:
  $$
  \trans{\emp}{\expr'}{\tor{\type_1}{\type_2}}{}{\inj{k}{\texpr'}}
  $$

\item Case \tchkdown: 
  $$
  \inferrule[]
  {
    \phantom{\trans{\emp}{\expr_0}{\tor{\type_1}{\type_2}}{}{\texpr_0}}
    \\
    \trans{\envbinding{\evar_1}{\type_1}}
             {\evalctxarg{\evar_1}}
             {\typeb}
             {}
             {\texpr_1}
    \\\\
    \trans{\emp}{\expr_0}{\tor{\type_1}{\type_2}}{}{\texpr_0}
    \\
    \trans{\envbinding{\evar_2}{\type_2}}
             {\evalctxarg{\evar_2}}
             {\typeb}
             {}
             {\texpr_2}
  }
  {
    \trans{\emp}
             {\evalctxarg{\expr_0}}
             {\typeb}
             {}{\casematch{\texpr_0}{\evar_1}{\texpr_1}{\evar_2}{\texpr_2}}
  } 
  $$
  By inversion:
  \begin{align}
    \trans{\emp}{\expr_0}{\tor{\type_1}{\type_2}}{}{\texpr_0}    \label{eq:rev:consistency:23}    \\
    \trans{\envbinding{\evar_1}{\type_1}}
             {\evalctxarg{\evar_1}}
             {\typeb}
             {}
             {\texpr_1}                                          \label{eq:rev:consistency:24}  \\
    \trans{\envbinding{\evar_2}{\type_2}}
             {\evalctxarg{\evar_2}}
             {\typeb}
             {}
             {\texpr_2}                                          \label{eq:rev:consistency:25}
  \end{align}
  Cases on the form of $\steps{\expr}{\expr'}$:
  \begin{enumerate}
    \item Subcase: 
      $$
      \inferrule[]{\steps{\expr_0}{\expr_0'}}
                  {\steps{\evalctxarg{\expr_0}}{\evalctxarg{\expr_0'}}}
      $$
      By inversion:
      \begin{align}
        \steps{\expr_0}{\expr_0'}   \label{eq:rev:consistency:26}
      \end{align}
      By i.h. using \eqref{eq:rev:consistency:23} and \eqref{eq:rev:consistency:26}:
      \begin{align}
        \trans{\emp}{\expr_0'}{\tor{\type_1}{\type_2}}{}{\texpr_0'}    \label{eq:rev:consistency:27}    \\
        \stepsmanyone{\texpr_0}{\texpr_0'}   \label{eq:rev:consistency:28}
      \end{align}
      Using rule~\tchkdown on \eqref{eq:rev:consistency:27}, \eqref{eq:rev:consistency:24} and 
      \eqref{eq:rev:consistency:25}:
      $$
      \trans{\emp}
            {\evalctxarg{\expr_0'}}
            {\typeb}
            {}{\casematch{\texpr_0'}{\evar_1}{\texpr_1}{\evar_2}{\texpr_2}}
      $$
      Also, applying Lemma~\ref{lemma:tgt:evalctx} on $\tevalctx \equiv 
            \casematch{\empevalctx}{\evar_1}{\texpr_1}{\evar_2}{\texpr_2}$ using
            \eqref{eq:rev:consistency:28}:
      \begin{align*}
        \casematch{\texpr_0}{\evar_1}{\texpr_1}{\evar_2}{\texpr_2}
        \longrightarrow^{+} \\
        \casematch{\texpr_0'}{\evar_1}{\texpr_1}{\evar_2}{\texpr_2}
      \end{align*}

    \item Subcase:
      \begin{align}
      \expr_0 \equiv \val_0 \\
        \steps{\evalctxarg{\val_0}}{\expr'}
      \end{align}
      Because $\val_0$ is a value, we can split cases for its type. 
      Without loss of generality we can assume that its type is $\type_1$ (the same exact 
      holds for $\type_2$). This is depicted on the form of $\texpr_0$ in 
      equation \eqref{eq:rev:consistency:23}, which now becomes
      (for some $\texpr_{01}$):
      \begin{align}
        \trans{\emp}{\val_0}{\tor{\type_1}{\type_2}}{}{\inj{1}{\texpr_{01}}}  
        \label{eq:rev:consistency:41}
      \end{align}
      By Lemma~\ref{lemma:unions} on \eqref{eq:rev:consistency:41}:
      %So we can use rule \tchkup for the case where $\val_0$ is of
      %type $\type_1$ elaborating to $\texpr_{01}$:
      %\begin{align}
      %  \inferrule[]
      %  {
      %    \trans{\emp}{\val_0}{\type_1}{}{\texpr_{01}} \\
      %    \tcwf{\tor{\type_1}{\type_2}}
      %  } 
      %  {\trans{\emp}{\val_0}{\tor{\type_1}{\type_2}}{}{\inj{1}{\texpr_{01}}}}
      %\end{align}
      %By inversion:
      \begin{align}
        \trans{\emp}{\val_0}{\type_1}{}{\texpr_{01}}  \label{eq:rev:consistency:29}
      \end{align}
      By Lemma~\ref{lemma:rev:valmonot:app} on \eqref{eq:rev:consistency:29} there exists $\tval_{01}$ such that:
      \begin{align}
        \stepsmany{\texpr_{01}}{\tval_{01}}              \label{eq:rev:consistency:30}    \\
        \trans{\emp}{\val_0}{\type_1}{}{\tval_{01}}  \label{eq:rev:consistency:31}
      \end{align}
      By Lemma~\ref{lemma:rev:valmonot:app} on \eqref{eq:rev:consistency:41}
      there exists $\tval_{01}$\footnote{This is the same that we got right
      before, due to uniqueness of normal forms.} such that:
      \begin{align}
        \stepsmany{\inj{1}{\texpr_{01}}}{\inj{1}{\tval_{01}}}   \label{eq:rev:consistency:43}  \\
        \trans{\emp}{\val_0}{\tor{\type_1}{\type_2}}{}{\inj{1}{\tval_{01}}}  \label{eq:rev:consistency:44}
      \end{align}

%       By Lemma~\ref{lemma:tgt:evalctx} using \eqref{eq:rev:consistency:30}:
%       \begin{align}
%         \stepsmany{\inj{1}{\texpr_{01}}}{\inj{1}{\tval_{01}}}              \label{eq:rev:consistency:42}
%       \end{align}

      Cases for the form of $\evalctxarg{\val_0}$:
      \begin{enumerate}

        \item $\evalctxarg{\val_0} \equiv \eletin{\evar}{\val_0}{\expr_1}$

          The original elaboration judgment becomes:
          $$
          \inferrule[]
          {
            \phantom{\trans{\emp}{\val_0}{\tor{\type_1}{\type_2}}{}{\texpr_0}} \\
            \trans{\envbinding{\evar_1}{\type_1}}
                  {\eletin{\evar}{\evar_1}{\expr_1}}
                  {\typeb}
                  {}
                  {\texpr_1}
            \\\\
            \trans{\emp}{\val_0}{\tor{\type_1}{\type_2}}{}{\texpr_0}
            \\
            \trans{\envbinding{\evar_2}{\type_2}}
                  {\eletin{\evar}{\evar_2}{\expr_1}}
                  {\typeb}
                  {}
                  {\texpr_2}
          }
          {
            \trans{\emp}
                  {\eletin{\evar}{\val_0}{\expr_1}}
                  {\typeb}
                  {}{\casematch{\texpr_0}{\evar_1}{\texpr_1}{\evar_2}{\texpr_2}}
          } 
          $$
          By inversion:
          \begin{align}
            \trans{\emp}{\val_0}{\tor{\type_1}{\type_2}}{}{\texpr_0}  \notag \\ % \label{eq:rev:consistency:32}   \\
            \trans{\envbinding{\evar_1}{\type_1}}
                  {\eletin{\evar}{\evar_1}{\expr_1}}                    
                  {\typeb}
                  {}
                  {\texpr_1}    \label{eq:rev:consistency:33}             \\
            \trans{\envbinding{\evar_2}{\type_2}}
                  {\eletin{\evar}{\evar_2}{\expr_1}}
                  {\typeb}
                  {}
                  {\texpr_2}    \notag % \label{eq:rev:consistency:34}          
          \end{align}
          The derivation for \eqref{eq:rev:consistency:33} is of the form:
          $$
          \inferrule
          {
            \trans{\envbinding{\evar_1}{\type_1}}{\evar_1}{\type_1}{}{\evar_1} \\
            \trans{\envbinding{\evar_1}{\type_1}, \envbinding{\evar}{\type_1}}
                  {\expr_1}{\typeb}{}{\texprb_1}
          }
          {\trans{\envbinding{\evar_1}{\type_1}}
                 {\eletin{\evar}{\evar_1}{\expr_1}}
                 {\typeb}
               {}{\underbrace{\eletin{\evar}{\evar_1}{\texprb_1}}_{\texpr_1}}}
          $$
          By inversion:
          \begin{align}
            \trans{\envbinding{\evar_1}{\type_1}, \envbinding{\evar}{\type_1}}
            {\expr_1}{\typeb}{}{\texprb_1} \label{eq:rev:consistency:35}
          \end{align}
          Variable $\evar_1$ does not appear in $\expr_1$, so the above is equivalent to:
          \begin{align}
            \trans{\envbinding{\evar}{\type_1}}
            {\expr_1}{\typeb}{}{\texprb_1} \label{eq:rev:consistency:36}
          \end{align}
          Applying Lemma~\ref{lemma:subst} on \eqref{eq:rev:consistency:31} and \eqref{eq:rev:consistency:36}
          we get:
          \begin{align}
            \trans{\emp}
                  {\appsubst{\esubst{\val_0}{\evar}}{\expr_1}}
                  {\typeb}{}
                  {\appsubst{\esubst{\tval_{01}}{\evar}}{\texprb_1}}
                  \label{eq:rev:consistency:37}
          \end{align}
          By \opsrclet:
          \begin{align}
            \steps
                 {\eletin{\evar}{\val_0}{\expr_1}}
                  {\appsubst{\esubst{\val_0}{\evar}}{\expr_1}}
                  \label{eq:rev:consistency:38}
          \end{align}
          Also, by Lemma~\ref{lemma:tgt:evalctx} using \eqref{eq:rev:consistency:43}:
          \begin{align*}
            \casematch{\inj{1}{\texpr_{01}}}{\evar_1}{\texpr_1}{\evar_2}{\texpr_2}
            \longrightarrow^{*} \\
          \casematch{\inj{1}{\tval_{01}}}{\evar_1}{\texpr_1}{\evar_2}{\texpr_2}
          \end{align*}
          By \optgtcase:
          \begin{align*}
            \steps{\casematch{\inj{1}{\tval_{01}}}{\evar_1}{\texpr_1}{\evar_2}{\texpr_2}}
            {\appsubst{\esubst{\tval_{01}}{\evar_1}}{\texpr_1}}
          \end{align*}
          From the equality: $\texpr_1 \equiv \eletin{\evar}{\evar_1}{\texprb_1}$, and because 
          $\evar_1$ does not appear in $\texpr_1$:
          $$
          \appsubst{\esubst{\tval_{01}}{\evar_1}}{\parens{\eletin{\evar}{\evar_1}{\texprb_1}}}
          \equiv
          \steps{\eletin{\evar}{\tval_{01}}{\texprb_1}}
          {\appsubst{\esubst{\tval_{01}}{\evar}}{\texprb_1}}
          $$
          From the last relation along with \eqref{eq:rev:consistency:38} and 
          \eqref{eq:rev:consistency:37} stems the wanted result.

        \item $\evalctxarg{\val_0} \equiv \ite{\val_0}{\expr_1}{\expr_2}$:
          \textit{Similar to case $\ite{\expr_c}{\expr_1}{\expr_2}$}

        \item $\evalctxarg{\val_0} \equiv \app{\val_0}{\expr}$:
          \textit{Similar to earlier cases.}

        \item $\evalctxarg{\val_0} \equiv \app{\parens{\elambda{\evar}{\expr_0}{}}}{\val_0}$:
          \textit{Similar to earlier cases.}

      \end{enumerate}

  \end{enumerate}

\item Case \tchkdead: 
  $$
  \inferrule[]
  {
    \trans{\emp}{\expr}{\type}{}{\texpr} \\
    \typetag{\type}\cap \typetag{\typeb} = \emptyset
  }
  {
    \trans{\emp}{\expr}{\typeb}{}{\deadcast{\type}{\typeb}{\texpr}}
  }
  $$
  By inversion:
  \begin{align}
    \trans{\emp}{\expr}{\type}{}{\texpr}  \label{eq:rev:consistency:60} \\
    \typetag{\type}\cap \typetag{\typeb} = \emptyset  \label{eq:rev:consistency:65}
  \end{align}
  There also exists $\expr'$ such that:
  \begin{align}
    \steps{\expr}{\expr'}   \label{eq:rev:consistency:61}
  \end{align}

  By i.h. on \eqref{eq:rev:consistency:60} and \eqref{eq:rev:consistency:61} there exists $\texpr'$, such that:
  \begin{align}
    \stepsmanyone{\texpr}{\texpr'}             \label{eq:rev:consistency:150} \\
    \trans{\emp}{\expr'}{\type}{}{\texpr'}  \label{eq:rev:consistency:62}
  \end{align}
  By Lemma~\ref{lemma:tgt:evalctx} on \eqref{eq:rev:consistency:150}:
  $$
  \stepsmanyone{\deadcast{\type}{\typeb}{\texpr}}
            {\deadcast{\type}{\typeb}{\texpr'}}
  $$

  Applying rule \tchkdead on \eqref{eq:rev:consistency:62} and \eqref{eq:rev:consistency:65}:
  $$
  \trans{\emp}{\expr'}{\typeb}{}{\deadcast{\type}{\typeb}{\texpr'}}
  $$
 
 \end{enumerate}

\end{proof}

}
{
\begin{proof}
  Similar to the proof of Theorem~\ref{theorem:consistency}, using
  adapted versions of the lemmas used by Dunfield~\cite{Dunfield2012} and
  Lemma~\ref{lemma:rev:valmonot}.
  Again, full details of the proof can be found in the accompanying report.
\end{proof}
}

\IfAppendix{
  \begin{theorem}[Two-Phase Safety]\label{theorem:twophase-safety}
  $\forall \; \expr, \type, \texpr, \rtype \;\;$ \st:
  \begin{enumerate}[label={(\arabic*)}]
      \inditem $\trans{\emp}{\expr}{\type}{}{\texpr}$,      \label{eq:safety:1}
      \inditem $\lqcheck{\emp}{\texpr}{\rtype}$,            \label{eq:safety:2}  
  \end{enumerate}
  Then, either $\expr$ is a value, or there exists $\expr'$ such that: 
  $\steps{\expr}{\expr'}$ and 
  $\trans{\emp}{\expr'}{\type}{}{\texpr'}$ for $\texpr'$, such that
  $\stepsmanyone{\texpr}{\texpr'}$ and $\lqcheck{\emp}{\texpr'}{\rtype}$.
  \end{theorem}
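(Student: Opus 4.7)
The plan is to reduce two-phase safety to three results already in hand: Refinement Type Safety on the target (Theorem~\ref{theorem:targettypesafe}), Consistency (Theorem~\ref{theorem:consistency}), and Reverse Consistency (Theorem~\ref{theorem:rev:consistency}), using Value Monotonicity (Lemma~\ref{lemma:valmonot}) as the bridge between the two semantics. First I would apply Refinement Type Safety to $\texpr$: either $\texpr$ is a target value, or $\texpr \to \texpr''$ for some $\texpr''$ with $\lqcheck{\emp}{\texpr''}{\rtype}$.

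In the first case ($\texpr$ a target value), I would argue by an auxiliary induction on the elaboration derivation that $\expr$ must itself be a source value. The only rules that produce a target-value head form (\tchkconst, \tchkvar, \tchklambda, \tchkinterintro, and the applications of \tchkup or \tchkdead whose immediate subderivations already elaborate to target values) either require $\expr$ to be a source value as a premise or elaborate a source value inductively; Corollary~\ref{corollary:deadcast:invalid} rules out the degenerate subcase in which the apparent target value is a $\tdead$-cast sitting inside a refinement-typeable term. This yields the first disjunct directly.

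In the second case, Consistency gives some $\expr''$ with $\stepsmany{\expr}{\expr''}$ and $\trans{\emp}{\expr''}{\type}{}{\texpr''}$. If the chain contains at least one source step, I take $\expr'$ to be the state after the first step and apply Reverse Consistency to $\steps{\expr}{\expr'}$ to obtain a matching $\texpr'$ with $\stepsmanyone{\texpr}{\texpr'}$ and $\trans{\emp}{\expr'}{\type}{}{\texpr'}$; Corollary~\ref{corollary:multi:preservation} then propagates the refinement type through the target steps to give $\lqcheck{\emp}{\texpr'}{\rtype}$, closing the second disjunct. If the chain contains zero source steps, so $\expr = \expr''$, I would inspect the top-level rule of the elaboration derivation and argue, again using Lemma~\ref{lemma:valmonot} and Corollary~\ref{corollary:deadcast:invalid}, that this can happen only when $\expr$ is already a source value, falling back into the first disjunct.

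The main obstacle is exactly this zero-source-step subcase: purely administrative target reductions---the $\proj{k}{\pair{\cdot}{\cdot}}$ step introduced by \tchkinterelim on top of \tchkinterintro, or the $\casematch{\inj{k}{\tval}}{\cdot}{\cdot}{\cdot}{\cdot}$ step introduced by \tchkdown composed with \tchkup---can advance $\texpr$ without touching $\expr$, so the argument must show that no such administrative chain can persist indefinitely while $\expr$ stays a non-value. The cleanest way to nail this is to factor out a dedicated ``source progress'' lemma, proved by induction on the elaboration derivation, that uses refinement typing (specifically the $\fbot{\cdot}$ discipline together with Corollary~\ref{corollary:deadcast:invalid}) to eliminate every configuration in which $\expr$ would otherwise be stuck at a primitive misapplication papered over by a $\tdead$ cast in $\texpr$.
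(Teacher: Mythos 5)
There is a genuine gap in the route you take to the progress half. The paper does not reach progress through Refinement Type Safety on the target at all; it proves source progress \emph{directly}, by induction on the pair of derivations $\trans{\emp}{\expr}{\type}{}{\texpr}$ and $\lqcheck{\emp}{\texpr}{\rtype}$: each case shows (using the induction hypothesis on subterms, Reverse Value Monotonicity, the canonical-forms assumptions, and Corollary~\ref{corollary:deadcast:invalid} to exclude $\tdead$-cast values) that $\expr$ is a value or takes a step, and the \tchkdead{} case is dismissed outright because Corollary~\ref{corollary:deadcast:invalid} contradicts the refinement-typing premise. Your detour --- target progress via Theorem~\ref{theorem:targettypesafe}, then Consistency to pull a step back to the source --- founders exactly where you say it does: Consistency only yields $\stepsmany{\expr}{\expr''}$, and in the zero-step case you assert that $\expr$ must already be a value, which is precisely the claim that a refinement-typed elaboration of a non-value source term cannot be stuck. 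That claim \emph{is} source progress; deferring it to an unproven ``source progress lemma'' in your final paragraph leaves the central content of the theorem unestablished. (Your first case is also shakier than you suggest: Lemma~\ref{lemma:valmonot} exists precisely because non-value source expressions may elaborate to target values, so ``target value implies source value'' needs that lemma, together with the fact that target values cannot take the $\stepsmanyone{}{}$ steps Reverse Consistency would demand, rather than a purely syntactic induction.)

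The parts you do have right are the preservation half and the role of the $\tdead$ machinery: once $\steps{\expr}{\expr'}$ is in hand, applying Reverse Consistency to obtain $\texpr'$ with $\stepsmanyone{\texpr}{\texpr'}$ and $\trans{\emp}{\expr'}{\type}{}{\texpr'}$, and then Corollary~\ref{corollary:multi:preservation} to recover $\lqcheck{\emp}{\texpr'}{\rtype}$, is exactly what the paper does; and Corollary~\ref{corollary:deadcast:invalid} is indeed the tool that eliminates the configurations stuck at a $\tdead$-marked value. To repair the proposal, drop the appeal to target progress and Consistency entirely, and promote the case analysis you sketch at the end to the body of the proof, proceeding rule by rule on the elaboration derivation paired with the corresponding refinement-typing rule.
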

}
{

  \begin{theorem}[Two-Phase Soundness]\label{theorem:twophase-soundness}
  If $\expr$ is well two-typed then, either $\expr$ is a value, or 
  there exists $\expr'$ such that: 
  \begin{enumerate}[label={(\arabic*)}]
  \inditem \textbf{(Progress)} $\steps{\expr}{\expr'}$
  \inditem \textbf{(Preservation)} $\expr'$ is well two-typed.
  \end{enumerate}
  \end{theorem}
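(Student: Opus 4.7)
The plan is to prove both progress and preservation by induction on the pair of derivations $\trans{\emp}{\expr}{\type}{}{\texpr}$ and $\lqcheck{\emp}{\texpr}{\rtype}$, which amounts to a structural induction on the elaboration derivation with the refinement typing carried along as an invariant on the target.

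For progress, I would case-split on the final rule of the elaboration derivation. The value-producing rules \tchkconst, \tchklambda and \tchkinterintro immediately give that $\expr$ is a source value. For the compound cases (\tchkletin, \tchkite, \tchkapp, \tchkinterelim, \tchkup, \tchkdown), inversion of both typing derivations shows the immediate subexpressions are themselves well two-typed, so the IH applies: a subexpression step lifts to $\expr$ via the source evaluation-context rule \opsrcevalctx, while fully reduced subexpressions trigger the appropriate primitive reduction \opsrclet, \opsrccondtrue/\opsrccondfalse, or \opsrcappa/\opsrcappb. Crucially, the \tchkdead case cannot occur at the top level, since its output is a $\deadcast{\type}{\typeb}{\texpr_0}$ which by Corollary~\ref{corollary:deadcast:invalid} is not refinement-typable in the empty context, contradicting the well-two-typed assumption.

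For preservation, given a source step $\steps{\expr}{\expr'}$, I invoke Reverse Consistency (Theorem~\ref{theorem:rev:consistency}) to obtain $\texpr'$ with $\trans{\emp}{\expr'}{\type}{}{\texpr'}$ and $\stepsmanyone{\texpr}{\texpr'}$. Multi-step preservation of refinement typing (Corollary~\ref{corollary:multi:preservation}) then lifts $\lqcheck{\emp}{\texpr}{\rtype}$ across these target steps to yield $\lqcheck{\emp}{\texpr'}{\rtype}$, whence $\expr'$ is well two-typed.

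The main obstacle will be the application subcase of progress, where once both $\expr_1$ and $\expr_2$ are values we must rule out stuck heads. This requires a source-level canonical-forms argument for function-typed values: inverting \tchklambda and \tchkconst against the elaboration derivation, and using the Canonical Forms assumption together with the refinement typing of the target to exclude the $\tdead$-cast alternative (since a top-level $\tdead$-value would violate Corollary~\ref{corollary:deadcast:invalid}). The Primitive Constant Application assumption then guarantees that the primitive reduction is defined on its non-$\tdead$ argument, allowing the step to fire and completing the case.
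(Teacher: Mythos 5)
Your proposal is correct and follows essentially the same route as the paper's proof: induction on the pair of elaboration and refinement-typing derivations, with preservation discharged via Reverse Consistency plus multi-step preservation of refinement typing, the \tchkdead case excluded by the corollary that $\tdead$-casts are not refinement-typable in the empty context, and the application case resolved by canonical forms together with the primitive/lambda application assumptions. No substantive differences to report.
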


}

\IfAppendix{
\begin{proof}
By induction on pairs \rulename{T-\textit{Rule}}/\rulename{R-\textit{Rule}} of derivations: 
  \begin{align*}
  \trans{\tcenv}{\expr}{\type}{}{\texpr} \\
  \lqcheck{\lqenv}{\texpr}{\rtype}
  \end{align*}

\begin{enumerate}

\item Cases \tchkconst/\lqchkconst
  , \tchkvar/\lqchkvar
  , \tchkinterintro/\lqchkpair
  , \tchklambda/\lqchklambda:
  
  The term $\expr$ is a value.

\item Case \tchkletin/\lqchkletin: 

  From \ref{eq:safety:1} we have:
  $$
  \inferrule[]
  {
   \trans{\emp}{\expr_1}{\type_1}{}{\texpr_1} \\
   \trans{\envbinding{\evar}{\type_1}}{\expr_2}{\type_2}{}{\texpr_2}
  }
  {\trans{\emp}{\eletin{\evar}{\expr_1}{\expr_2}}{\type_2}{}{\eletin{\evar}{\texpr_1}{\texpr_2}}}
  $$
  By inversion: 
  \begin{align}
    \trans{\emp}{\expr_1}{\type_1}{}{\texpr_1} \label{eq:safety:4} \\
    \trans{\envbinding{\evar}{\type_1}}{\expr_2}{\type_2}{}{\texpr_2}  \label{eq:safety:5}
  \end{align}

  From \ref{eq:safety:2} we have:
  $$
  \inferrule[]
  {
    \lqcheck{\emp}{\texpr_1}{\rtype_1} \\
    \lqcheck{\envbinding{\evar}{\rtype_1}}{\texpr_2}{\ifdef{\kvars}{\ltype_2}{\rtype_2}}   \\
    \ifdef{\kvars}{\lqwf{\lqenv}{\ifdef{\kvars}{\ltype_2}{\rtype_2}}}{\relax}
  }
  {\lqcheck{\emp}{\eletin{\evar}{\texpr_1}{\texpr_2}}{\ifdef{\kvars}{\ltype_2}{\rtype_2}}}
  $$
  By inversion:
  \begin{align}
    \lqcheck{\emp}{\texpr_1}{\rtype_1}    \label{eq:safety:6} \\
    \lqcheck{\envbinding{\evar}{\rtype_1}}{\texpr_2}{\ifdef{\kvars}{\ltype_2}{\rtype_2}}  \label{eq:safety:7}
  \end{align}

  By i.h. using \eqref{eq:safety:4} and \eqref{eq:safety:6} we have two cases on the form of $\expr_1$:
  \begin{enumerate}
    \item Expression $\expr_1$ is a value: 
      $$
      \expr_1 \equiv \val_1
      $$
      By source language operational semantics:
      \begin{align}
        \expr \equiv \steps{\eletin{\evar}{\val_1}{\expr_2}}
                           {\appsubst{\esubst{\val_1}{\evar}}{\expr_2}}
                           \label{eq:safety:105}
      \end{align}

    \item There exists $\expr_1'$ such that: 
      \begin{align}
        \steps{\expr_1}{\expr_1'} \label{eq:safety:31}
      \end{align}
      Hence, by \opsrcevalctx:
      \begin{align}
        \expr \equiv \steps{\eletin{\evar}{\expr_1}{\expr_2}}
              {\eletin{\evar}{\expr_1'}{\expr_2}}  \label{eq:safety:106}
      \end{align}

  \end{enumerate}

  In either case, there exists $\expr'$ such that:
  \begin{align}
    \steps{\expr}{\expr'} \label{eq:safety:107}
  \end{align}
  By Theorem~\ref{theorem:rev:consistency} on \ref{eq:safety:1} and
  \eqref{eq:safety:107}, there exists $\texpr'$ such that:
  \begin{align}
    %\stepsmany{\eletin{\evar}{\texpr_1}{\texpr_2}}{\texpr'}   \notag \\
    %\trans{\emp}{\appsubst{\esubst{\val_1}{\evar}}{\expr_2}}{\type_2}{}{\texpr'}
    \stepsmanyone{\texpr}{\texpr'}   \notag \\
    \trans{\emp}{\expr}{\type_2}{}{\texpr'}
    \notag
  \end{align}
  And by Corollary~\ref{corollary:multi:preservation}:
  $$
  \lqcheck{\emp}{\texpr'}{\rtype}
  $$

\item Case \tchkite/\lqchkite:

  From \ref{eq:safety:1} we have:
  $$
  \inferrule[]
  {
    \trans{\emp}{\expr_c}{\tbool}{}{\texpr} \\ 
    \foralli {i \in \{1,2\}}
    {\trans{\emp}{\expr_i}{\type}{}{\texpr_i}}
  }
  {\trans{\emp}{\ite{\expr_c}{\expr_1}{\expr_2}}{\type}{}{\ite{\texpr_c}{\texpr_1}{\texpr_2}}}
  $$
  By inversion:
  \begin{align}
    \trans{\emp}{\expr_c}{\tbool}{}{\texpr_c}                   \label{eq:safety:8} \\
    \trans{\emp}{\expr_1}{\type}{}{\texpr_1}                  \label{eq:safety:9} \\
    \trans{\emp}{\expr_2}{\type}{}{\texpr_2}                  \notag
  \end{align}
  From \ref{eq:safety:2}:
  $$
  \inferrule[]
  {
    \lqcheck{\emp}{\texpr_c}{\tbool} \\
    % (q_1,q_2) = (\texpr, \neg\texpr) \\
    % \lqcheck{\lqstate{\lqenv}{\grdext{\lqgrd}{q_i}}}{\texpr_i}{\ifdef{\kvars}{\ltype}{\rtype}}  \\
    \lqcheck{\texpr_c}{\texpr_1}{\ifdef{\kvars}{\ltype}{\rtype}} \\
    \lqcheck{\neg\texpr_c}{\texpr_2}{\ifdef{\kvars}{\ltype}{\rtype}} \\
    \ifdef{\kvars}{\lqwf{\emp}{\ifdef{\kvars}{\ltype}{\rtype}}}{\relax}
  }
  {\lqcheck{\emp}{\ite{\texpr_c}{\texpr_1}{\texpr_2}}{\ifdef{\kvars}{\ltype}{\rtype}}}
  $$
  By inversion:
  \begin{align}
    \lqcheck{\emp}{\texpr_c}{\tbool}                              \label{eq:safety:10} \\
    \lqcheck{\texpr_c}{\texpr_1}{\ifdef{\kvars}{\ltype}{\rtype}}  \label{eq:safety:11} \\
    \lqcheck{\neg\texpr_c}{\texpr_2}{\ifdef{\kvars}{\ltype}{\rtype}}  \label{eq:safety:35}
  \end{align}
  By i.h. using \eqref{eq:safety:8} and \eqref{eq:safety:10}  we have two case on the form of $\expr_c$:
  \begin{enumerate}

    \item Expression $\expr_c$ is a value:
      $$
      \expr_c \equiv \val_c
      $$
      By a standard canonical forms lemma $\val_c$ is either $\vtrue$ or $\vfalse$. 
      Assume the first case (the latter case is identical but involving the ``else" branch of the conditional):
      By \opsrccondtrue: 
      $$
      \steps{\ite{\vtrue}{\expr_1}{\expr_2}}{\expr_1}
      $$
      %By \eqref{eq:safety:9} we also get preservation.

    \item There exists $\expr_c'$ such that:
      \begin{align}
        \steps{\expr_c}{\expr_c'}     \label{eq:safety:33}
      \end{align}
      Hence, by \opsrcevalctx:
      $$
      \steps{\ite{\expr_c}{\expr_1}{\expr_2}}
            {\ite{\expr_c'}{\expr_1}{\expr_2}}
      $$
      %By i.h. on \eqref{eq:safety:33}:
      %\begin{align}
      %  \trans{\emp}{\expr_c'}{\tbool}{}{\texpr_c}
      %  \label{eq:safety:34}
      %\end{align}
      %Applying \tchkite on \eqref{eq:safety:34}, \eqref{eq:safety:11} and \eqref{eq:safety:35}:
      %$$
      %\trans{\emp}{\ite{\expr_c'}{\expr_1}{\expr_2}}{\type}{}{\ite{\texpr_c}{\texpr_1}{\texpr_2}}
      %$$
  \end{enumerate}
  In either case, there exists $\expr'$ such that:
  \begin{align}
    \steps{\expr}{\expr'} \label{eq:safety:110}
  \end{align}
  By Theorem~\ref{theorem:rev:consistency} on \ref{eq:safety:1} and
  \eqref{eq:safety:110}, there exists $\texpr'$ such that:
  \begin{align}
    %\stepsmany{\eletin{\evar}{\texpr_1}{\texpr_2}}{\texpr'}   \notag \\
    %\trans{\emp}{\appsubst{\esubst{\val_1}{\evar}}{\expr_2}}{\type_2}{}{\texpr'}
    \stepsmanyone{\texpr}{\texpr'}   \notag \\
    \trans{\emp}{\expr}{\type}{}{\texpr'}
    \notag
  \end{align}
  And by Corollary~\ref{corollary:multi:preservation}:
  $$
  \lqcheck{\emp}{\texpr'}{\rtype}
  $$

\item Case \tchkinterelim/\lqchkproj:

  Without loss of generality we're going to assume first projection 
  (the same holds for the second projection).

  From \ref{eq:safety:1}:
  $$
  \inferrule[]
  {\trans{\emp}{\expr}{\tand{\type_1}{\type_2}}{}{\texpr_0}}
  {\trans{\emp}{\expr}{\type_{1}}{}{\proj{1}{\texpr_0}}}
  $$
  By inversion:
  \begin{align}
    \trans{\emp}{\expr}{\tand{\type_1}{\type_2}}{}{\texpr_0}   \label{eq:safety:12}
  \end{align}
  From \ref{eq:safety:2}:
  $$
  \inferrule
  []
  {\lqcheck{\emp}{\texpr_0}{\tprod{\rtype_1}{\rtype_2}}}
  {\lqcheck{\emp}
           {\proj{1}{\texpr_0}}
           {\rtype_{1}}
  }
  $$
  By inversion:
  \begin{align}
    \lqcheck{\emp}{\texpr_0}{\tprod{\rtype_1}{\rtype_2}}      \label{eq:safety:13}
  \end{align}
  By i.h. using \eqref{eq:safety:12} and \eqref{eq:safety:13}   we have two case on the form of $\expr$:
  \begin{enumerate}

    \item Expression $\expr$ is a value:
      $$
      \expr \equiv \val
      $$
      So the source term does not step.

    \item There exists $\expr'$ such that:
      \begin{align}
        \steps{\expr}{\expr'} \label{eq:safety:36}
      \end{align}
      %By i.h. on \eqref{eq:safety:36}:
      %\begin{align}
      %  \trans{\emp}{\expr'}{\tand{\type_1}{\type_2}}{}{\texpr'}   \label{eq:safety:37}
      %\end{align}
      %By \tchkinterelim on \eqref{eq:safety:37}:
      %$$
      %\trans{\emp}{\expr'}{\type_{1}}{}{\proj{1}{\texpr'}}
      %$$

      By Theorem~\ref{theorem:rev:consistency} on \ref{eq:safety:1} and
      \eqref{eq:safety:36}, there exists $\texpr'$ such that:
      \begin{align}
        %\stepsmany{\eletin{\evar}{\texpr_1}{\texpr_2}}{\texpr'}   \notag \\
        %\trans{\emp}{\appsubst{\esubst{\val_1}{\evar}}{\expr_2}}{\type_2}{}{\texpr'}
        \stepsmanyone{\texpr}{\texpr'}   \notag \\
        \trans{\emp}{\expr}{\type}{}{\texpr'}
        \notag
      \end{align}
      And by Corollary~\ref{corollary:multi:preservation}:
      $$
      \lqcheck{\emp}{\texpr'}{\rtype}
      $$

  \end{enumerate}

\item Case \tchkapp/\lqchkapp:

  From \ref{eq:safety:1}:
  $$
  \inferrule[]
  {
    \trans{\emp}{\expr_1}{\tfun{\type}{\typeb}}{}{\texpr_1}
    \\
    \trans{\emp}{\expr_2}{\type}{}{\texpr_2}
  }
  {\trans{\emp}
    {\app{\expr_1}{\expr_2}}{\typeb}{}{\app{\texpr_1}{\texpr_2}}
  }
  $$
  By inversion:
  \begin{align}
    \trans{\emp}{\expr_1}{\tfun{\type}{\typeb}}{}{\texpr_1}
    \label{eq:safety:14}
    \\
    \trans{\emp}{\expr_2}{\type}{}{\texpr_2}
    \label{eq:safety:15}
  \end{align}
  From \ref{eq:safety:2}:
  $$
  \inferrule[]
  {
    \lqcheck{\emp}{\texpr_1}{\tfun{\rtype_x}{\rtype}}  \\
    \lqcheck{\emp}{\texpr_2}{\rtype_x}
%     \resolveoverload{\stripref{\many{\rtype_{\overload}}}}{\stripref{\many{\rtype_i}}} =
%       \tfun{\many{\rtype_j'}}{\rtype'}          \\\\
%     \lqsubtype{\lqstateextsnoctx{\evar_j}{\rtype_j}}{\rtype_j}{\rtype_j'}  \\
%     \rtype'' = \appsubst{\rsubsts{\evar_j}{\texpr_j}}{\rtype'}
  }
  {\lqcheck{\emp}
         {\app{\texpr_1}{\texpr_2}}
         {\appsubst{\rsubst{\texpr_2}{\evar}}{\rtype}}
  }
  $$
  By inversion:
  \begin{align}
    \lqcheck{\emp}{\texpr_1}{\tfun{\rtype_x}{\rtype}}   \label{eq:safety:16}  \\
    \lqcheck{\emp}{\texpr_2}{\rtype_x}                  \label{eq:safety:17}
  \end{align}
  By i.h. using \eqref{eq:safety:14} and \eqref{eq:safety:16}  we have three
  cases on the form of $\expr_1$:
  \begin{enumerate}
    \item Expression $\expr_1$ is a \textit{primitive} value: 
      $$
      \expr_1 \equiv \vconst
      $$
      Elaboration \eqref{eq:safety:14} becomes:
      \begin{align}
        \trans{\emp}{\vconst}{\tfun{\type}{\typeb}}{}{\texpr_1}     \label{eq:safety:120}
      \end{align}
      By applying lemma~\ref{lemma:rev:valmonot:app} on \eqref{eq:safety:120}
      there exists $\tval_1$, such that:
      \begin{align}
        \stepsmany{\texpr_1}{\tval_1}   \label{eq:safety:1211} \\
        \trans{\emp}{\vconst}{\tfun{\type}{\typeb}}{}{\tval_1} \label{eq:safety:122}
      \end{align}

      By Corollary~\ref{corollary:multi:preservation} using \eqref{eq:safety:16}
      and \eqref{eq:safety:121}:
      \begin{align}
        \lqcheck{\emp}{\tval_1}{\tfun{\rtype_x}{\rtype}}   \label{eq:safety:127}  
      \end{align}

      By Assumption~\ref{assumption:canonical:forms} on \eqref{eq:safety:122}:
      $$
      \tval_1 \equiv \vconst
      $$
      or
      $$
      \tval_1 \equiv \deadcast{\cdot}{\type}{\tval_1'}
      $$
      The latter case combined with \eqref{eq:safety:127} contradicts
      Corollary~\ref{corollary:deadcast:invalid}, so we end up with:
      \begin{align}
        \tval_1 \equiv \vconst  \label{eq:safety:128}
      \end{align}

      By lemma~\ref{lemma:tgt:evalctx} using \eqref{eq:safety:1211}:
      \begin{align}
        \stepsmany{\app{\texpr_1}{\texpr_2}}{\app{\vconst}{\texpr_2}}
        \label{eq:safety:123} 
      \end{align}

      By i.h. using \eqref{eq:safety:15} and \eqref{eq:safety:17} we have two cases on the form of $\expr_2$:
      \begin{enumerate}
        \item   Expression $\expr_2$ is a value: 
          $$
          \expr_2 \equiv \val_2
          $$
          %By \opsrcappa:
          %\begin{align}
          %  \steps{\app{\vconst}{\val_2}}{\primapp{\vconst}{\val_2}}  \label{eq:safety:43}
          %\end{align}
          Elaboration \eqref{eq:safety:15} becomes:
          \begin{align}
            \trans{\emp}{\val_2}{\type}{}{\texpr_2} \label{eq:safety:40}
          \end{align}
          By lemma~\ref{lemma:rev:valmonot:app} on \eqref{eq:safety:40} there exists $\tval_2$, such that:
          \begin{align}
            \stepsmany{\texpr_2}{\tval_2}   \label{eq:safety:422} \\
            \trans{\emp}{\val_2}{\type}{}{\tval_2} \label{eq:safety:41}
          \end{align}
          By lemma~\ref{lemma:tgt:evalctx} using \eqref{eq:safety:422}:
          \begin{align}
            \stepsmany{\app{\vconst}{\texpr_2}}{\app{\vconst}{\tval_2}}   \label{eq:safety:42} 
          \end{align}

          For the sake of contradiction assume:
          \begin{align}
            \tval_2 \equiv \deadcast{\typeb}{\type}{\tval_2'}
            \label{eq:safety:129}
          \end{align}
          for some $\tval_2'$. By \eqref{eq:safety:17}:
          \begin{align}
            \lqcheck{\emp}{\tval_2}{\rtype_x}   \label{eq:safety:170}
          \end{align}
          So, by \eqref{eq:safety:170} and
          Corollary~\ref{corollary:deadcast:invalid} we have a contradiction.
          So:
          \begin{align}
            \tval_2 \not\equiv \deadcast{\typeb}{\type}{\tval_2'}
            \label{eq:safety:130}
          \end{align}
          By Assumption~\ref{assum:const:app:app} on \eqref{eq:safety:122},
          \eqref{eq:safety:128}
          \eqref{eq:safety:41} and \eqref{eq:safety:130}:
          \begin{align}
            \steps{\app{\vconst}{\val_2}}{\primapp{\vconst}{\val_2}}  
            %\\
            %\steps{\app{\vconst}{\tval_2}}{\primapp{\vconst}{\tval_2}}
            %\label{eq:safety:131} \\
            %\trans{\emp}{\primapp{\vconst}{\val_2}}{\typeb}{}{\primapp{\vconst}{\tval_2}}
          \end{align}

        \item There exists $\expr_2'$ such that:
          \begin{align}
            \steps{\expr_2}{\expr_2'} \label{eq:safety:50}
          \end{align}
          By \opsrcevalctx:
          $$
          \steps{\app{\vconst}{\expr_2}}{\app{\vconst}{\expr_2'}}
          $$
      \end{enumerate}

    \item Expression $\expr_1$ is an abstraction:
      \begin{align}
        \expr_1 \equiv \elambda{\evar}{\expr_0}{} \label{eq:safety:51}
      \end{align}

      Elaboration \eqref{eq:safety:14} becomes:
      \begin{align}
        \trans{\emp}{\elambda{\evar}{\expr_0}{}}{\tfun{\type}{\typeb}}{}{\texpr_1}
          \label{eq:safety:140}
      \end{align}
      By applying lemma~\ref{lemma:rev:valmonot:app} on \eqref{eq:safety:120}
      there exists $\tval_1$, such that:
      \begin{align}
        \stepsmany{\texpr_1}{\tval_1}   \label{eq:safety:121} \\
        \trans{\emp}{\elambda{\evar}{\expr_0}{}}{\tfun{\type}{\typeb}}{}{\tval_1}
        \label{eq:safety:141}
      \end{align}

      By Corollary~\ref{corollary:multi:preservation} using \eqref{eq:safety:16}
      and \eqref{eq:safety:121}:
      \begin{align}
        \lqcheck{\emp}{\tval_1}{\tfun{\rtype_x}{\rtype}}   \label{eq:safety:142}  
      \end{align}
      By Assumption~\ref{assumption:canonical:forms} on \eqref{eq:safety:141}:
      $$
      \tval_1 \equiv \elambda{\evar}{\texpr_0}{}
      $$
      or
      $$
      \tval_1 \equiv \deadcast{\cdot}{\tfun{\type}{\typeb}}{\tval_1'}
      $$
      The latter case combined with \eqref{eq:safety:142} contradicts
      Corollary~\ref{corollary:deadcast:invalid}, so we end up with:
      \begin{align}
        \tval_1 \equiv \elambda{\evar}{\texpr_0}{}
        \label{eq:safety:143}
      \end{align}
      So \eqref{eq:safety:14} becomes:
      \begin{align}
        \trans{\emp}{\elambda{\evar}{\expr_0}{}}{\tfun{\type}{\typeb}}{}{\elambda{\evar}{\texpr_0}{}}
        \label{eq:safety:145}
      \end{align}

      By i.h. using \eqref{eq:safety:15} and \eqref{eq:safety:17} we have two cases on the form of $\expr_2$:
      \begin{enumerate}
        \item   Expression $\expr_2$ is a value: 
          $$
          \expr_2 \equiv \val_2
          $$
          %By \opsrcappb:
          %$$
          %\steps{\app{(\elambda{\evar}{\expr_0}{})}{\val_2}}
          %      {\appsubst{\esubst{\val_2}{\evar}}{\expr_0}}
          %$$
          Equation \eqref{eq:safety:15} becomes (for some $\texpr_2$):
          $$
          \trans{\emp}{\val_2}{\type}{}{\texpr_2}
          $$
          By lemma~\ref{lemma:rev:valmonot:app}, there exists $\tval_2$ such that:
          \begin{align}
            \trans{\emp}{\val_2}{\type}{}{\tval_2}  \label{eq:safety:61}
          \end{align}
          %By lemma~\ref{lemma:subst} on \eqref{eq:safety:60} and \eqref{eq:safety:61}:
          %$$
          %\trans{\emp}{\appsubst{\esubst{\val_2}{\evar}}{\expr_0}}{\typeb}{}{\appsubst{\esubst{\tval_2}{\evar}}{\texpr_0}}
          %$$
        
          For the sake of contradiction assume:
          \begin{align}
            \tval_2 \equiv \deadcast{\typeb}{\type}{\tval_2'}
            \label{eq:safety:135}
          \end{align}
          for some $\tval_2'$. By \eqref{eq:safety:17}:
          \begin{align}
          \lqcheck{\emp}{\tval_2}{\rtype_x} 
            \label{eq:safety:137}
          \end{align}
          So, by \eqref{eq:safety:137} and Corollary~\ref{corollary:deadcast:invalid} we have a contradiction. So:
          \begin{align}
            \tval_2 \not\equiv \deadcast{\typeb}{\type}{\tval_2'}
            \label{eq:safety:138}
          \end{align}
          By Assumption~\ref{assum:lambda:app:app} on
          \eqref{eq:safety:145},
          \eqref{eq:safety:61} and
          \eqref{eq:safety:138}:
          \begin{align*}
          \steps{\app{\parens{\elambda{\evar}{\expr_0}{}}}{\val_2}}
                  {\appsubst{\esubst{\val_2}{\evar}}{\expr_0}}    
                  % \\
          %\steps{\app{\parens{\elambda{\evar}{\texpr_0}{}}}{\tval_2}}
          %        {\appsubst{\esubst{\tval_2}{\evar}}{\texpr_0}}
                \end{align*}

        \item There exists $\expr_2'$ such that:
          \begin{align}
            \steps{\expr_2}{\expr_2'} \label{eq:safety:65}
          \end{align}
          By \opsrcevalctx:
          $$
          \steps{\app{\parens{\elambda{\evar}{\expr_0}{}}}{\expr_2}}{\app{ (\elambda{\evar}{\expr_0}{})  }{\expr_2'}}
          $$
      \end{enumerate}

    \item There exists $\expr_1'$ such that:
      \begin{align}
        \steps{\expr_1}{\expr_1'} \label{eq:safety:67}
      \end{align}
      By \opsrcevalctx:
      $$
      \steps{\app{\expr_1}{\expr_2}}{\app{\expr_1'}{\expr_2}}
      $$

  \end{enumerate}

  In all cases, there exists $\expr'$ such that:
  \begin{align}
    \steps{\expr}{\expr'} \label{eq:safety:160}
  \end{align}
  By Theorem~\ref{theorem:rev:consistency} on \ref{eq:safety:1} and
  \eqref{eq:safety:160}, there exists $\texpr'$ such that:
  \begin{align}
    %\stepsmany{\eletin{\evar}{\texpr_1}{\texpr_2}}{\texpr'}   \notag \\
    %\trans{\emp}{\appsubst{\esubst{\val_1}{\evar}}{\expr_2}}{\type_2}{}{\texpr'}
    \stepsmanyone{\texpr}{\texpr'}   \notag \\
    \trans{\emp}{\expr}{\typeb}{}{\texpr'}
    \notag
  \end{align}
  And by Corollary~\ref{corollary:multi:preservation}:
  $$
  \lqcheck{\emp}{\texpr'}{\rtype}
  $$

\item Case \tchkup/\lqchkinj:
  \textit{Following similar methodology as before.}

\item Case \tchkdown/\lqchkcase:
  \textit{Following similar methodology as before.}

\item Case \tchkdead/\lqchkapp: 
 
  Corollary~\ref{corollary:deadcast:invalid} contradicts the second
  premise~\ref{eq:safety:2}, so the theorem does not apply here.

\end{enumerate}

\end{proof}

}
{

  \begin{proof}

  By induction on pairs of derivations:
  ${\trans{\tcenv}{\expr}{\type}{}{\texpr}}$ and 
  ${\lqcheck{\lqenv}{\texpr}{\rtype}}$.
  Details can be found in the accompanying report.
  \end{proof}

}

\end{document}